\newenvironment{proof}{\trivlist\item\emph{Proof. }}{\endtrivlist}
\newtheorem{definition}{Definition}
\newtheorem{lemma}{Lemma}
\newcommand{\op}{{\mathrm{OP}}}
\newcommand{\overalpha}{{\overline{\alpha}}}
\newenvironment{restthm}[1]{\restatable{theorem}{#1}}{\endrestatable}
\newenvironment{restlem}[1]{\restatable{lemma}{#1}}{\endrestatable}
\title{Competition among Pairwise Lottery Contests}
\author{
    Xiaotie Deng\textsuperscript{\rm 1},
    Hangxin Gan\textsuperscript{\rm 2},
    Ningyuan Li\textsuperscript{\rm 1},
    Weian Li\textsuperscript{\rm 1},
    Qi Qi \textsuperscript{\rm 3}
}
\begin{document}

\maketitle

\begin{abstract}
We investigate a two-stage competitive model involving multiple contests. In this model, each contest designer chooses two participants from a pool of candidate contestants and determines the biases. Contestants strategically distribute their efforts across various contests within their budget. We first show the existence of a pure strategy Nash equilibrium (PNE) for the contestants, and propose a polynomial-time algorithm to compute an $\epsilon$-approximate PNE. In the scenario where designers simultaneously decide the participants and biases, the subgame perfect equilibrium (SPE) may not exist. Nonetheless, when designers' decisions are made in two substages, the existence of SPE is established. In the scenario where designers can hold multiple contests, we show that the SPE exists under mild conditions and can be computed efficiently.
\end{abstract}


\section{Introduction}
\label{sec:intro}

Contest theory is a commonly used and classic tool in the field of economics to define competition. In fact, many competitive scenarios can be perceived as contests. These may include political elections, sports events, promotional contests between firms aiming to increase their market share, and so forth. When designing a contest, the objective is to motivate the contestants to put forth greater effort in order to achieve specific goals. This involves determining the prize amount, the number of participants, and the winning rule.

Pairwise contests are a type of competition where the number of participants is limited to two. Classic examples of pairwise contests include the Colonel Blotto games \cite{B21}, which depict two players engaged in a battle where the outcome determines the victor. Such contests have numerous real-world applications. For instance, the US presidential election is a well-known example where two candidates compete over all states. Similarly, in competitive sports such as the NBA, two teams compete multiple times to determine the champion. The Internet price war \cite{LYDQCS19} provides another example, where two e-commerce platforms compete for regional markets by offering discount coupons.


The lottery contest is a form of imperfectly discriminatory competition, where the contestant who allocates more effort has a higher probability of winning than one who allocates lesser effort. In real-world scenarios, the lottery contest is highly applicable due to the stochastic factors that may impact the outcome. More specifically, despite allocating greater effort towards a given issue, winning is not always certain due to the unpredictability of such factors.


Current research on pairwise and lottery contests tends to center around studying the equilibrium behaviors of contestants or optimizing lottery functions to achieve certain objectives. However, little attention has been paid to investigating competing designers. According to a recent survey on contest theory \cite{S20}, exploring the economy of competitions among designers poses several challenges, particularly in analyzing their equilibrium behavior. In a single contest or a fixed number of contests, the focus is primarily on the strategic behavior of contestants. However, designers also have strategic behavior that needs to be taken into account, including how contestants allocate their efforts and how designers compete with one another.


In this paper, we concentrate on the pairwise lottery contests (PLC), where two contestants compete for a prize with a winning probability determined by the lottery rule that is based on their (weighted) effort. 
Designers are allowed to hold one or several PLCs. 
Each designer's goal is to maximize the total exerted effort of the participants in all her held contests. Each contestant pursues maximizing the expected prize from the contests she joins. There is a two-stage game in our model: one is among contest designers, who decide the number of held contests and design the configuration (including prize, participants and biases) of held contests. The other is among contestants who decide how to allocate effort. 

\subsection{Our Contributions}
Our model introduces several innovative features that enrich the current discourse in contest theory. Firstly, much of the existing literature predominantly concentrates on single contest design or contestants' equilibrium analysis within prescribed multi-contest frameworks. In contrast, we cast sight into the case of multiple contests held by different strategic designers. Therefore, the designers' strategic behaviors are integrally addressed and analyzed. Secondly, traditional models, exemplified by the Colonel Blotto games, typically focus on pairwise contests involving just two contestants and mainly study equilibrium behaviors of these two contestants. We expand this framework, allowing for $n$ potential participants, thereby granting strategic designers the latitude to select any two from this candidate pool. While this expansion offers a more richer and realistic representation, it increases the analytical difficulty of the model.

Our contributions and results can be summarized as follows:
\begin{itemize}
    \item Given the configurations of all contests, for the game among contestants (the second stage in our model), we propose a concept called equilibrium multiplier vector (EMV) which represents marginal utilities of contestants in equilibrium, as our main analytical tool characterize the contestant equilibrium. We prove the existence of EMV utilizing Brouwer's fixed-point theorem, and show the uniqueness of EMV leveraging a monotone property. By establishing the connection between the EMV and equilibrium strategy of contestants, we fully characterize the contestant equilibria. Furthermore, we design a polynomial time algorithm to compute an $\epsilon$-approximate contestant equilibrium.

    \item For the game of designers (the first stage in our model), when each designer is allowed to hold one contest only, we first show the non-existence of subgame perfect equilibrium (SPE) in certain cases, due to the complicated deviation of the two-dimension strategy (choosing participants and biases simultaneously). However, if designers choose participants and set biases in two separated substages, we can always find an SPE. Specifically, when the participants are fixed, considering the designers' strategies to set the biases, we show that it forms an equilibrium when all designers set balancing biases such that every participants in each contest have the same winning probability (i.e., 1/2). Under this situation, the equilibrium effort exerted by a contestant into each contest is proportional to the contest prize. We observe that the designers’ participants selection is actually equivalent to a variant of weighted congestion game, where a pure Nash equilibrium always exists, implying the existence of a sequential equilibrium in our model.
    \item When each designer can divide her budget to hold several contests, although the strategy space of designers seems to become more complicated, surprisingly, we show that an SPE always exists even if the participants and biases are decided simultaneously, under a very mild condition that the maximum total effort of an individual contestant does not exceed the total effort of all other contestants. In this SPE, each contestant's or designer's utility will be proportional to her total effort  or prize budget, respectively.
\end{itemize}
Due to space limitations, all missing proofs appear in the appendix.
\subsection{Related Works}
Our paper contributes to the literature in the field of economics and computer science, particularly in topics of multiple contests competition and pairwise contest design. Our
work is closely related to the following several studies.
\cite{LZ22} focus on the analysis of pure strategy Nash
equilibrium on 2-contestant lottery Colonel Blotto games.
However, our paper extends the total number of contestants from 2 to $n$, which leads to that each contestant may have 
different opponents in different pairwise contests. 
In addition, \cite{FW18} study designing the optimal lottery contest by setting biases in the setting of single contest to achieve different objectives. \cite{WWX23} consider a setting of multi-battle contests where the same two contestants battle with each other in every contest and every designer sets biases to attract more effort. Our paper can be viewed as a generalized model of these two papers,
where the designer of each contest picks up two contestants from $n$ candidates and sets the biases.


Our research focuses on two aspects: equilibrium analysis and contest design. We summarize related works in three fields: lottery contests, Colonel Blotto games, and competition among contests.

\subsubsection{Lottery Contests} 
The lottery-form contest is introduced by \cite{S96} and \cite{CR98}, where contestants' winning probability is determined by a contest success function (CSF). 
\cite{DN98} consider the optimal CSF with $n$ symmetric contestants. 
\cite{N04} studies the optimal CSF in two-contestant symmetric contest. When employing a certain form of CSF, lottery contest is classified as a specific type of Tullock contest \cite{N99,T01,S02}. 
\cite{CR98} examine the contest performance affected by the different parameters of Tullock CSF. Multiple equilibria in Tullock contest is studied in \cite{CS11}. When there are only two contestants, the optimal contests obtained by optimizing the parameters of Tullock CSF are investigated \cite{W10,EMN11}. 
\cite{FKLS13} provide the optimal biases for an $n$-player Tullock contest. Besides, the lottery contests with multi-prize is discussed in \cite{FWZ22}. Additionally, the best response dynamics of contestants is investigated by \cite{E17,GG23}.

\subsubsection{Colonel Blotto Games} Colonel Blotto Games \cite{B21} characterize the competition between two players across several contests (aka., battle-fields), which has some similarity to the game among contestants in our model. Many classic papers in this topic mainly focus on the deterministic CSFs, e.g., \cite{GW50,R06,MM15,KR21}.  
\cite{F58} first introduces lottery CSFs into a two-contestant symmetric Blotto game and shows the uniqueness of equilibrium. 
\cite{DM15} generalize the results to the case with more than two contestants. Some works \cite{R05,XZ18} study the two-contestant Blotto game under the general Tullock CSFs. 

\subsubsection{Competition among Contests} The topic of 
competition among contests has received increasing attention in the recent decade. Initially, 
\cite{AM09} examine the two identical Tullock contests setting and investigate the prize structure in different goals. 
\cite{DV09} study multiple auction-based crowdsourcing contests and give the contestants' equilibrium in symmetric and asymmetric settings. Later, many sutdies \cite{B16,AM18,JSY20} focus on comparing the performance of two parallel contests with different types. 
Recently, 
\cite{DGLLL21} investigate that the optimal CSF in the monopolistic setting is also the equilibrium strategy in the competitive setting when designers aim to maximize the total effort.
\cite{KK22} show that when a contestant can join several contests but the output in each contest is affected by an uncertainty variable, increasing the number of contests one contestant participates in  improves the utility of contest organizer. 
\cite{DLLQ22} focus on the environment of parallel contest. They analyze the equilibrium of contestants' participation and design the prize policies of contests in different settings.

\section{Model and Preliminaries}
\label{sec:pre}


There are $n$ contestants and $m$ designers. We use the notation $i\in[n]$ and $j\in[m]$ to denote a contestant and a designer, respectively. We assume that each contestant $i$ has a limited total effort $T_i\in\mathbb{R}_{> 0}$ to exert in contests and each designer $j$ has a limited budget $B_j\in\mathbb{R}_{> 0}$.

In this work, we focus on pairwise general lottery contests, in which the designer invites two contestants as the participants, and sets a multiplicative bias for each participant to incentivize their effort. Each participant's winning probability depends on the product of her bias and her effort exerted into this contest.

Formally, a pairwise general lottery contest $C$ is defined as a tuple $C=(S_C,R_C,\alpha_C)$: $S_C$ denotes two contestants selected as participants of contest $C$, satisfying that $S_C\subseteq [n]$ and $|S_C|=2$; $R_C\in \mathbb{R}_{>0}$ denotes the prize prepared for the winner in the contest; and $\alpha_C=(\alpha_{C,i})_{i\in S_C}$, where $\alpha_{C,i}\in\mathbb{R}_{>0}$ denotes the bias selected for the participant $i$. 

Suppose $S_C=\{i_1,i_2\}$, and let $x_{i_1,C}$ and $x_{i_2,C}$ be the effort that these two contestants exert in contest $C$. Each contestant's effort is multiplied by her bias to get $\alpha_{C,i_1}\cdot x_{i_1,C}$ and $\alpha_{C,i_2}\cdot x_{i_2,C}$. The winning probabilities of contestants $i_1$ and $i_2$ are $f(\alpha_{C,i_1} \cdot x_{i_1,C};\alpha_{C,i_2} \cdot x_{i_2,C})$ and $f(\alpha_{C,i_2} \cdot x_{i_2,C};\alpha_{C,i_1} \cdot x_{i_1,C})$ respectively, where $f$ is the lottery CSF defined as follows:
\begin{align*}
f(x;y)=\begin{cases}
    \frac{x}{x+y}, &\text{if } x>0 \vee y>0,\\
    \frac{1}{2}, & \text{if } x=y=0.
    \end{cases}
\end{align*}
Note that $f(x;y)+f(y;x)=1$.

We study two models of the designers, varying in whether a designer can divide her budget to hold multiple contests.
\begin{enumerate}
    \item In the \textbf{divisible prize} model (DPM), each designer $j$ is allowed to distribute her prize budget $B_j$ to hold an arbitrary number of contests, denoted by $\mathcal{C}_j=\{C_{j,1},\cdots,C_{j,K_j}\}$, satisfying that $\sum_{C\in \mathcal{C}_j}R_{C}\leq B_j$.
    \item In the \textbf{indivisible prize} model (IPM), each designer $j$ can hold only one pairwise general lottery contest, denoted by $C_j$, and $R_{C_j}\leq B_j$. In this case we define $\mathcal{C}_j=\{C_j\}$.
\end{enumerate}

In both models, each designer $j$ can arbitrarily design the configuration of every contest $C\in\mathcal{C}_j$, i.e., the invited participants $S_C$, the reward $R_C$, and the bias $\alpha_C$, within her budget.
We call $\mathcal{C}_j$ the strategy of designer $j$  and define $\vec{\mathcal{C}}=(\mathcal{C}_1,\cdots,\mathcal{C}_m)$ as the strategy profile of designers. Sometimes we use the notation $C\in\vec{\mathcal{C}}$ to denote that $C\in\cup_{j\in[m]}\mathcal{C}_j$.

Given designers' strategy profile $\vec{\mathcal{C}}$, for any contestant $i$, let $\mathcal{A}(i,\vec{\mathcal{C}})=\{C\in\cup_{j\in[m]}\mathcal{C}_j:i\in S_C\}$ be the set of contests that $i$ is invited to participate in. Each contestant $i$ decides non-negative amounts of effort to exert in those contests inviting her, denoted by $x_i=(x_{i,C})_{C\in \mathcal{A}(i,\vec{\mathcal{C}})}$, which satisfies $\sum_{C\in\mathcal{A}(i,\vec{\mathcal{C}})}x_{i,C}\leq T_i$. We call $x_i$ the strategy of contestant $i$, and $\vec{x}=(x_1,\cdots,x_n)$ is called the strategy profile of contestants. Sometimes we use $\vec{\mathcal{C}}_{-j}$ and $\vec{x}_{-i}$ to denote the strategy profile of all designers except designer $j$ and the strategy profile of all contestants except contestant $i$, respectively. 

Given $\vec{\mathcal{C}}$ and $\vec{x}$, for any contestant $i$ and any contest $C\in \mathcal{A}(i,\vec{\mathcal{C}})$, let $\op_{i,C}$ denote her opponent in contest $C$, that is, $S_C=\{i,\op_{i,C}\}$. Then, her winning probability in $C$ is denoted by
\begin{align*}
    p_{i,C}(\vec{x})=f(\alpha_{C,i}\cdot x_{i,C};\alpha_{C,\op_{i,C}}\cdot x_{\op_{i,C},C}).
\end{align*}

The utility of contestant $i$ is defined as her expected total prize, $u_i^{Contestant}(\vec{\mathcal{C}},\vec{x})=\sum_{C\in\mathcal{A}(i,\vec{\mathcal{C}})} R_C\cdot p_{i,C}(\vec{x}).$
And the utility of a designer is the total effort exerted by the participants in her all contests, $u_j^{Designer}(\vec{\mathcal{C}},\vec{x})=\sum_{C\in\mathcal{C}_j}\sum_{i\in S_C}x_{i,C}$.

With these definitions, we study a two-stage game model of the competition among pairwise lottery contests.

\begin{definition}
An instance of Pairwise Lottery Contest Competition Game (PLCCG) is defined as the tuple $(n,m,(T_i)_{i\in[n]},(B_j)_{j\in[m]})$. The game has two stages:
\begin{enumerate}
\item In the first stage (called the stage of designers), all designers simultaneously select their strategies. In other words, each designer $j\in [m]$ decides the number $K_j=|\mathcal{C}_j|$ (under the indivisible prize model, $K_j$ always equals to $1$.) of contests to hold, and the configuration of each contest $C\in \mathcal{C}_j$, within her total budget $B_j$. 
\item In the second stage (called the stage of contestants), having observed $\mathcal{C}_1,\cdots,\mathcal{C}_m$, all contestants simultaneously select their strategies, i.e., each contestant $i\in[n]$ decides her effort $x_i=(x_{i,C})_{C\in \mathcal{A}(i,\mathcal{C})}$, within her total effort $T_i$.
\end{enumerate}
\end{definition}

Our work mainly focuses on the sequential equilibrium, i.e., subgame perfect equilibrium (SPE), of PLCCG. Before giving the definition of SPE, we first define the contestant equilibrium, i.e., the pure Nash equilibrium among contestants in the second stage, when a strategy profile $\vec{\mathcal{C}}$ of designers is given.
\begin{definition}\label{def:contestant equilibrium}
Given designers' strategy profile $\vec{\mathcal{C}}$, we say a contestant strategy profile $\vec{x}$ is a contestant equilibrium under $\vec{\mathcal{C}}$ , if for any $i\in[n]$ and any feasible strategy $x_i'$, it holds that
$$u_i^{Contestant}(\vec{\mathcal{C}},\vec{x})\geq u_i^{Contestant}(\vec{\mathcal{C}},(x_i',\vec{x}_{-i})).$$
Define $\mathcal{E}_{\vec{\mathcal{C}}}$ as the set of all contestant equilibria under $\vec{\mathcal{C}}$.
\end{definition}
Next, we define the subgame perfect equilibrium and designer equilibrium.
\begin{definition}\label{def:SPE}
$(\vec{\mathcal{C}},\vec{x})$ is a subgame perfect equilibrium, if the following two conditions hold:
\begin{enumerate}
    \item $\vec{x}$ is a contestant equilibrium under $\vec{\mathcal{C}}$, i.e., $\vec{x}\in \mathcal{E}_{\vec{\mathcal{C}}}$.
    \item For any designer $j$, any feasible strategy $\mathcal{C}_j'$ and any $\vec{x}'\in\mathcal{E}_{(\mathcal{C}_j',\vec{\mathcal{C}}_{-j})}$, it holds that \footnote{Note that this definition is slightly stronger than the standard definition of SPE since it requires that for any designer $j$, $\vec{x}$ is better for the best $\vec{x}'\in\mathcal{E}_{(\mathcal{C}_j',\vec{\mathcal{C}}_{-j})}$, while the standard definition only requires that $\vec{x}$ is better for some $\vec{x}'\in\mathcal{E}_{(\mathcal{C}_j',\vec{\mathcal{C}}_{-j})}$. However, this is not an essential difference since the contestant equilibrium will be unique in some sense as shown later.}
    $$u_j^{designer}(\vec{\mathcal{C}},\vec{x})\geq u_j^{designer}((\mathcal{C}_j',\vec{\mathcal{C}}_{-j}),\vec{x}').$$
\end{enumerate}
We say $\vec{\mathcal{C}}$ is a designer equilibrium if there is some $\vec{x}\in \mathcal{E}_{\vec{\mathcal{C}}}$ such that $(\vec{\mathcal{C}},\vec{x})$ is a subgame perfect equilibrium.
\end{definition}

\section{Contestant Equilibrium}
\label{sec:contestant equilibrium}
In this section, we study the equilibrium behavior of contestants in the contestants' stage of PLCCG, i.e., the contestant equilibrium, when the designers' strategy profile 
is given.
In subsection \ref{subsec:EMV}, as a key tool for analyzing and characterizing contestant equilibrium, we propose a concept called \emph{equilibrium multiplier vector} (EMV), which represents each contestant's equilibrium strategy by a multiplier variable, indicating the contestant's marginal utility under the contestant equilibrium. We also show the close connection between contestant equilibrium and EMV. This simplifies the contestant's multi-dimensional strategy into a single-dimensional number.
In subsection \ref{subsec:ExistUniqueEMV}, we prove the existence and uniqueness of equilibrium multiplier vector, which enables us to fully characterize the set of all contestant equilibria.
Additionally, in subsection \ref{subsec:Algorithm}, we show that an $\epsilon$-approximate contestant equilibrium can be found in polynomial time through an iterative updating process of the multiplier vector, which draws inspiration from the t\^atonnement algorithm used in the field of market equilibrium.

Given any designers' strategy profile $\vec{\mathcal{C}}$, since the contestants do not care about the holder of each contest, we can simplify some notions. We use the notation $\mathcal{C}=\cup_{j\in[m]}\mathcal{C}_j$ to denote the set of all contests, and define $\mathcal{A}(i,\mathcal{C})=\{C\in\mathcal{C}:i\in S_C\}$ and $u_i(\mathcal{C},\vec{x})=\sum_{C\in\mathcal{A}(i,\mathcal{C})}R_C\cdot p_{i,C}(\vec{x})$. W.l.o.g, we assume that for any contestant $i\in[n]$, $\mathcal{A}(i,\mathcal{C})\neq \emptyset$.

\subsection{Equilibrium Multiplier Vector}
In this subsection, we propose equilibrium multiplier vector as a representation of contestant equilibrium. We first give the motivation and definition of EMV by \Cref{lemma:lagrange-dual} and \Cref{def:contestant-dual-equilibrium}. Then we characterize the contestant equilibrium with the help of EMV. We derive a necessary and sufficient condition for a vector being an EMV in \Cref{lemma:dual-solution-condition}, and then characterize the set of all contestant equilibria corresponding to an EMV as shown in \Cref{lemma:characterize-contestant-equilibrium-by-multiplier}. Combining \Cref{lemma:characterize-contestant-equilibrium-by-multiplier} and the uniqueness of EMV proved in the next subsection, we can fully characterize the set of all contestant equilibrium.

\label{subsec:EMV}
If $\vec{x}$ is a contestant equilibrium, for each contestant $i$, $x_i$ is a best response to $\vec{x}_{-i}$. In other words, $x_i$ is an optimal solution to the following optimization problem:
\begin{align}
\max_{x_{i,C}\geq0 \text{ for }C\in\mathcal{A}(i,\mathcal{C})} \quad & \sum_{C\in\mathcal{A}(i,\mathcal{C})}R_C\cdot p_{i,C}(x_i,\vec{x}_{-i}),\label{eq:contestant-bestresponse-optimization-problem}\\
\mbox{s.t.} \quad & \sum_{C\in \mathcal{A}(i,\mathcal{C})}x_{i,C}\leq T_i.\nonumber
\end{align}
Intuitively, if we use the Lagrange multiplier method, there will be a Lagrange multiplier $\lambda_i\geq0$ so that $x_i$ maximizes the Lagrangian function $\sum_{C\in\mathcal{A}(i,\mathcal{C})}R_C\cdot p_{i,C}(x_i,\vec{x}_{-i})-\lambda_i(T_i-\sum_{C\in \mathcal{A}(i,\mathcal{C})}x_{i,C})$. However, due to the discontinuity of $p_{i,C}(x_i,\vec{x}_{-i})$ at the point with $x_{i,C}=x_{\op_{i,C},C}=0$, the Lagrange multiplier method cannot be applied directly. Thus, we establish the existence of such $\lambda_i$ for each contestant $i$ through some analysis, to obtain the following lemma.

\begin{restlem}{CEEquilibriumMultiplierVector}
\label{lemma:lagrange-dual}
If $\vec{x}$ is a contestant equilibrium under strategy profile $\mathcal{C}$, there exist $\lambda_1,\cdots,\lambda_n\in \mathbb{R}_{\geq 0}$ such that, for any contestant $i$ and any contest $C\in\mathcal{A}(i,\mathcal{C})$, $R_C\cdot\frac{\partial p_{i,C}(\vec{x})}{\partial x_{i,C}}\leq\lambda_i$, where the equation holds when $x_{i,C}>0$.
\end{restlem}

By \Cref{lemma:lagrange-dual}, we know that every contestant equilibrium $\vec{x}$ corresponds to a vector $\vec{\lambda}=(\lambda_1,\cdots,\lambda_n)$, which can be viewed as the vector of contestants' Lagrange multipliers in Optimization \ref{eq:contestant-bestresponse-optimization-problem}. We refer to such $\vec{\lambda}$ as an EMV.
\begin{definition}\label{def:contestant-dual-equilibrium}
A vector $\vec{\lambda}=(\lambda_1,\cdots,\lambda_n)\in R_{\geq 0}^n$ is an equilibrium multiplier vector, if there exists a contestant equilibrium $\vec{x}$ such that $\vec{x}$ and $\vec{\lambda}$ satisfies the conditions in \Cref{lemma:lagrange-dual}.
We call $\vec{x}$ a contestant equilibrium corresponding to $\vec{\lambda}$.
\end{definition}


First, we present a necessary and
sufficient condition to decide whether a vector $\vec{\lambda}$ is an EMV. We say a vector $\vec{\lambda}\in\mathbb{R}_{\geq 0}^n$ is \emph{valid} if for any contest $C\in\mathcal{C}$, it holds $\sum_{i\in S_C}\lambda_i>0$. Then, for any valid vector $\vec{\lambda}\in\mathbb{R}_{\geq 0}^n$, we define  $$\hat{x}_{i,C}(\vec{\lambda})=R_C \cdot \frac{
\alpha_{C,i}\alpha_{C,\op_{i,C}}\lambda_{\op_{i,C}}
}{
(\alpha_{C,\op_{i,C}}\lambda_i+\alpha_{C,i}\lambda_{\op_{i,C}})^2}$$
for any contestant $i$ and any contest $C\in\mathcal{A}(i,\mathcal{C})$.
For any contestant $i$, we also define $\hat{T}_i(\vec{\lambda})=\sum_{C\in\mathcal{A}(i,\mathcal{C})}\hat{x}_{i,C}(\vec{\lambda})$, which can be viewed as the demand of contestant $i$'s effort induced by $\vec{\lambda}$. Before giving the characterization of  EMV, we first give a lemma to show that $\hat{x}_{i,C}(\vec{\lambda})$ is the lowest exerted effort in a contestant equilibrium corresponding to $\vec{\lambda}$.

\begin{restlem}{CEhatxfromEMV}
\label{lemma:reconstruct-effort-from-EMV}
If $\vec{x}$ is a contestant equilibrium corresponding to an equilibrium multiplier vector $\vec{\lambda}$, for any contestant $i\in[n]$ and any contest $C\in\mathcal{A}(i,\mathcal{C})$, it holds that $x_{i,C}\geq \hat{x}_{i,C}(\vec{\lambda})$, where the equation holds when $\lambda_i>0$. 
\end{restlem}

Now, we can give a necessary and sufficient condition for a vector $\vec{\lambda}$ to be an EMV, which enables us to identify an EMV directly.
\begin{restthm}{CEconditionforEMV}
\label{lemma:dual-solution-condition}
For any $\vec{\lambda}\in\mathbb{R}_{\geq 0}^n$, $\vec{\lambda}$ is an equilibrium multiplier vector if and only if the following statements hold:
\begin{enumerate}
    \item $\vec{\lambda}$ is valid;
    \item For any contest $i$ with $\lambda_i>0$, $T_i=\hat{T}_i(\vec{\lambda})$;
    \item For any contest $i$ with $\lambda_i=0$, $T_i\geq\hat{T}_i(\vec{\lambda})$.
\end{enumerate}
\end{restthm}

Next, we show that, when given an EMV $\vec{\lambda}$, the set of all contestant equilibria corresponding to $\vec{\lambda}$ is also uniquely determined.
\begin{restthm}{CEcharacterizeAllCEbyEMV}
\label{lemma:characterize-contestant-equilibrium-by-multiplier}
If $\vec{\lambda}$ is an equilibrium multiplier vector, then a contestant strategy profile $\vec{x}$ is a contestant equilibrium corresponding to $\vec{\lambda}$ if and only if $\vec{x}\in\mathcal{X}(\vec{\lambda})$, 
where
\begin{align*}
    \mathcal{X}(\vec{\lambda})=\{&(x_{i,C})_{i\in[n],C\in\mathcal{A}(i,\mathcal{C})}:\\
    &\forall i\in[n],\sum_{C\in\mathcal{A}(i,\mathcal{C})}x_{i,C}\leq T_i\land\\
    &\forall C\in\mathcal{A}(i,\mathcal{C}),x_{i,C}\geq\hat{x}_{i,C}(\vec{\lambda})\}.
\end{align*}
\end{restthm}

It is notable that in the next subsection, we will prove that for any $\mathcal{C}$, there always exists a unique EMV $\vec{\lambda}$. Combined with this, \Cref{lemma:characterize-contestant-equilibrium-by-multiplier} fully characterizes the set of all contestant equilibria, which is exactly $\mathcal{X}(\vec{\lambda})$.

\subsection{Existence and Uniqueness}\label{subsec:ExistUniqueEMV}
In this subsection, we mainly discuss the existence and uniqueness of EMV. We prove that EMV always exists (\Cref{thm:contestant-equilibrium-existence}) and is unique (\Cref{thm:unique-contestant-equilibrium-multiplier}) for any strategy profile of designers. Although the existence of contestant equilibrium follows immediately, there may exist multiple contestant equilibria. Nonetheless, as mentioned before, the set of all contestant equilibria is fully characterized by the unique EMV through \Cref{lemma:characterize-contestant-equilibrium-by-multiplier}.


A conventional approach to prove the existence of a contestant equilibrium is to consider the best response updating process of the strategy profile $\vec{x}$ and show the existence of a fixed point by Kakutani fixed-point theorem \cite{K41}. However, due to the discontinuity of the lottery CSF $f(x;y)$ at the point $x=y=0$, the set of contestant $i$'s best response is sometimes empty and 
the condition of Kakutani fixed-point theorem is not satisfied.
To address this problem, we turn to the space of multiplier vectors. We carefully design a continuous mapping of the multiplier vector $\vec{\lambda}$ such that the fixed point is an EMV, and prove the existence of such a fixed point by Brouwer's fixed-point theorem.
\begin{restthm}{CEthmExistenceofEMV}
\label{thm:contestant-equilibrium-existence}
For any designers' strategy profile $\vec{\mathcal{C}}$, there exists an equilibrium multiplier vector $\vec{\lambda}$.
\end{restthm}

Next, we prove the uniqueness of EMV. Recall that, for any valid $\vec{\lambda}$, $\hat{T}_i(\vec{\lambda})$ can be viewed as the demand of contestant $i$'s effort induced by $\vec{\lambda}$, and the conditions in \Cref{lemma:dual-solution-condition} can be interpreted as a complementary-slackness condition for the demands $\hat{T}_1(\vec{\lambda}),\cdots,\hat{T}_n(\vec{\lambda})$. We view these demands as a vector function $\hat{T}(\vec{\lambda})=(\hat{T}_1(\vec{\lambda}),\cdots,\hat{T}_n(\vec{\lambda}))$. An important observation is that, $\hat{T}(\vec{\lambda})$ satisfies a monotone property in $\vec{\lambda}$.

\begin{restlem}{CEMonotoneProperty}
\label{lemma:monotone}
For any two valid multiplier vectors $\vec{\lambda}$ and $\vec{\lambda}'$, it holds that
\begin{align*}
    \sum_{i=1}^n(\lambda_i'-\lambda_i)(\hat{T}_i(\vec{\lambda}')-\hat{T}_i(\vec{\lambda}))\leq0.
\end{align*}
Moreover, the strict inequality holds when there exists some $i$ such that $\lambda_i'\neq\lambda_i$ and $\max_{C\in\mathcal{A}(i,\mathcal{C})}$ $\max\{\lambda_{\op_{i,C}},\lambda_{\op_{i,C}}'\}>0$.
\end{restlem}

With this monotone property, we can prove that the EMV is unique
\footnote{We remark that the uniqueness of EMV relies on the assumption that for any contestant $i$, $\mathcal{A}(i,\mathcal{C})\neq \emptyset$. When there is some contestant $i$ with $\mathcal{A}(i,\mathcal{C})= \emptyset$, it means that contestant  $i$ does not participate in any contest, and we can assume that $\lambda_i$ can take arbitrary value. In this case, however, for any other contestant with $\mathcal{A}(i,\mathcal{C})\neq\emptyset$, the equilibrium multiplier is still unique.}. Intuitively, if there are two distinct EMVs, $\vec{\lambda}$ and $\vec{\lambda}'$, by \Cref{lemma:monotone} they will induce different demand of efforts, i.e., $\hat{T}(\vec{\lambda})\neq \hat{T}(\vec{\lambda}')$, which will contradict with \Cref{lemma:dual-solution-condition}.
\begin{restthm}{CEthmUniquenessofEMV}
\label{thm:unique-contestant-equilibrium-multiplier}
Given any designers' strategy profile $\vec{\mathcal{C}}$, there is a unique equilibrium multiplier vector.
\end{restthm}

\subsection{Computation of Contestant Equilibrium}\label{subsec:Algorithm}
In this subsection, we study the computation of the contestant equilibrium. We design an algorithm which computes an $\epsilon$-contestant equilibrium in polynomial time given any strategy profile of designers $\vec{\mathcal{C}}$. \Cref{lemma:monotone} provides the insight that, we can roughly adjust $\hat{T}(\vec{\lambda})$ towards some direction by adjusting $\vec{\lambda}$ in the the opposing direction. Building upon this, we firstly find an approximate EMV through an iterative updating process inspired by the t\^atonnement algorithm, and then construct an approximate contestant equilibrium based on this approximate EMV. 
\begin{definition}\label{def:approx ce}
A strategy profile $\vec{x}$ is an $\epsilon$-approximate contestant equilibrium, if for any $i$ and any feasible strategy $x_i'$, $u_i^{Contestant}(\vec{\mathcal{C}},\vec{x})\geq (1-\epsilon)u_i^{Contestant}(\vec{\mathcal{C}},(x_i',\vec{x}_{-i}))$ holds.
\end{definition}
\begin{restthm}{CEthmAlgorithm}
\label{thm:contestant-equilibrium-algorithm}
Given any strategy profile $\vec{\mathcal{C}}$, for any $\epsilon>0$, there exists an algorithm to compute an $\epsilon$-approximate contestant equilibrium in polynomial time in $\frac1{\epsilon}$ and the input sizes, namely $n$, $m$, and $|\cup_{j\in[m]}\mathcal{C}_j|$.
\end{restthm}
\section{Indivisible Prize Model}
\label{sec:indivisible}
Starting from this section, we investigate the equilibrium behavior of designers. We study the indivisible prize model (IPM) in this section and the divisible prize model (DPM) in Section \ref{sec:divisible}. 

In this section, we first show that the designer equilibrium (defined in Definition \ref{def:SPE}) may not exist in some instances of IPM. Thus, we consider a weaker concept called weak designer equilibrium (WDE), based on a setting where the stage of designers is divided into two substages. 
By analyzing the equilibrium of two substages in reverse order, we prove that WDE always exists, in which all designers will adopt \emph{balancing} biases such that both sides of any contest have an equal winning probability of $1/2$ under a contestant equilibrium.


\subsection{Weak Designer Equilibrium}
We first use a counterexample to show that the SPE may not exist under IPM, even in a very simple instance with $3$ identical contestants and $2$ identical designers.
\begin{restthm}{INDSPEnotexist}
\label{thm:ind-SPE-not-exist}
In some instances of indivisible prize model, the designer equilibrium does not exist.
\end{restthm}


Roughly speaking, the main reason of the nonexistence of SPE is that modifying the choice of participants in some contest can cause significant change in the optimal choice of biases, which again leads to another better choice of participants. Therefore, we relax the requirement of designer equilibrium by separating the stage of designers into two substages\footnote{This setting is justified by the common fact that the list of participants is often announced before the contest beginning, and modifying the judging criteria for contestants' performance is relatively less costly than withdrawing the invitation to participants.}: in the first substage, each designer decides the amount of prize and participants of her contest; and in the second stage, each designer decides the biases of her contest.  

Now we provide the definition of WDE formally. For each designer $j$, we call $(R_{C_j},S_{C_j})$ her first-stage strategy, and $(\alpha_{C_j,i})_{i\in S_{C_j}}$ her second-stage strategy. Let $\mathrm{BiasDev}(C_j)=\{C_j':R_{C_j'}=R_{C_j}\land S_{C_j'}=S_{C_j}\}$ denote all strategies of designer $j$ whose first-stage strategy is the same as that of $C_j$. The WDE can be defined as follows.
\begin{definition}
In the IPM, we say a strategy profile $\vec{\mathcal{C}}$ is a second-substage equilibrium, if there exists $\vec{x}\in\mathcal{E}_{\vec{\mathcal{C}}}$ such that, for any designer $j$, any $C_j'\in\mathrm{BiasDev}(C_j)$ and for any $\vec{x}'\in\mathcal{E}_{(\mathcal{C}'_j,\vec{\mathcal{C}}_{-j})}$, it holds that $u^{designer}_j(\vec{\mathcal{C}};\vec{x})\geq u^{designer}_j((\mathcal{C}'_j,\vec{\mathcal{C}}_{-j});\vec{x}')$.

We say a strategy profile $\vec{\mathcal{C}}$ is a first-substage equilibrium, if the following holds:
\begin{enumerate}
\item $\vec{\mathcal{C}}$ is a second-substage equilibrium.
\item There exists $\vec{x}\in\mathcal{E}_{\vec{\mathcal{C}}}$ such that, for any designer $j$ and any strategy $\mathcal{C}_{j}'$, there is $\vec{\mathcal{C}}_{-j}'$ such that
\begin{itemize}
\item $C'_{j'}\in\mathrm{BiasDev}(C_{j'})$ for any $j'\neq j$,
\item $\vec{\mathcal{C}}'=(\mathcal{C}_j',\vec{\mathcal{C}}_{-j}')$ is a second-substage equilibrium,
\item $u^{designer}_j(\vec{\mathcal{C}};\vec{x})\geq u^{designer}_j(\vec{\mathcal{C}}';\vec{x}')$ for all $\vec{x}'\in\mathcal{E}_{\vec{\mathcal{C}}'}$.
\end{itemize}
\end{enumerate}
A strategy profile $\vec{\mathcal{C}}$ is called a weak designer equilibrium if it is a first-substage equilibrium.
\end{definition}

It is not hard to find that WDE is a weaker concept than designer equilibrium, since any beneficial deviation in either substage leads to a beneficial deviation in the original designer stage.

\subsection{Equilibrium in the Second Substage}
To analyze the weak designer equilibrium, we firstly study the second-substage equilibrium, i.e., how the designers set the biases when their first-stage strategies are fixed.

We extend an approach from the previous works to our model, which considers the winning probability of a participant under contestant equilibrium as designer's decision variable, instead of directly deciding the biases in the contest. We establish the validity of this approach in our model by \Cref{lemma:designer-control-probability-as-bias}. Although existing literature suggests that a designer's dominate strategy is to set a balancing bias which results in her participants having an equal winning probability of $1/2$, we show that this claim does not unconditionally hold in our model in \Cref{thm:example-balancing-is-not-best}. Nonetheless,  in \Cref{thm:indiv-1/2probability-is-equilibrium} we prove that it still forms an second-substage equilibrium when all designers are using the balancing biases.

Firstly we show that the winning probability is uniquely determined by the designers' strategy profile.
\begin{restlem}{INDhatplambda}
\label{lemma:winning-probability-from-lambda}
Given the strategy profile $\vec{\mathcal{C}}$, let $\vec{\lambda}$ be the unique equilibrium multiplier vector with respect to $\vec{\mathcal{C}}$. For any contest $C$ and any contestant $i\in S_C$, define $\hat{p}_{i,C}(\vec{\lambda})=\frac{\alpha_{C,i}\lambda_{\op_{i,C}}}{\alpha_{C,i}\lambda_{\op_{i,C}}+\alpha_{C,\op_{i,C}}\lambda_{i}}$.
Then, for any contestant equilibrium $\vec{x}$, it holds that $p_{i,C}(\vec{x})=\hat{p}_{i,C}(\vec{\lambda})$.
\end{restlem}


The following technical lemma shows that the designers are able to manipulate the equilibrium winning probabilities in their contests by adjusting the biases. This allows us to consider the winning probability in a contest as the designer's decision variable in the second substage.
\begin{restlem}{INDControlProbabilityAsBias}
\label{lemma:designer-control-probability-as-bias}
Suppose the set of all contests is partitioned as $\mathcal{C}=\mathcal{C}^{fix}\cup\mathcal{C}^{var}$, such that every $C\in\mathcal{C}^{fix}$'s configuration is fixed, while every $C\in\mathcal{C}^{var}$ only has fixed $S_C$ and $R_C$, and the biases $\alpha_C$ need to be assigned. Given any target of winning probabilities for these contests $(\tilde{p}_{i,C})_{C\in\mathcal{C}^{var},i\in S_C}$ satisfying that $\tilde{p}_{i,C}\in(0,1)$ and $\sum_{i\in S_C}\tilde{p}_{i,C}=1$, there exists an assignment of biases $(\alpha_{C,i})_{C\in\mathcal{C}^{var},i\in S_C}$, under which it holds for all $C\in\mathcal{C}^{var}$ and $i\in S_C$ that $\hat{p}_{i,C}(\vec{\lambda})=\tilde{p}_{i,C}$, where $\vec{\lambda}$ is the EMV under $\mathcal{C}$ after assigning the biases to contests in $\mathcal{C}^{var}$. Moreover, such assignment of bias is unique when normalized such that $\alpha^*_{C,i}+\alpha^*_{C,\op_{i,C}}=1$.


\end{restlem}
Viewing $\hat{p}_{i,C}(\vec{\lambda})$ as the decision varaible is an effective approach, since it affects the contestants' effort exertion more directly.
Define $Q_{C}(\vec{\lambda})=\hat{p}_{i,C}(\vec{\lambda})\cdot \hat{p}_{\op_{i,C},C}(\vec{\lambda})=\hat{p}_{i,C}(\vec{\lambda})(1-\hat{p}_{i,C}(\vec{\lambda}))$ for arbitrary $i\in S_C$. Recall the definition of $\hat{x}_{i,C}(\vec{\lambda})$ in Section \ref{sec:contestant equilibrium}, for any contestant $i$ with $\lambda_i>0$, we can find that for any contest $C\in\mathcal{A}(i,\vec{\mathcal{C}})$,
$\hat{x}_{i,C}(\vec{\lambda})=\frac{R_CQ_C(\vec{\lambda})}{\lambda_{i}}$.
Observe that $Q_C(\vec{\lambda})$ is maximized when the bias is adjusted such that $\hat{p}_{i,C}(\vec{\lambda})=\frac12$ for both contestants $i\in S_C$, which we call the \emph{balancing} bias. Consequently, using the balancing bias in $C$ intuitively maximizes $x_{i,C}$ as long as the indirect influence on $\lambda_i$ is limited. 
Previous works \cite{WWX23} also suggest that, when there are only two candidate contestants, i.e., $n=2$, the optimal choice for a designer under any strategies of the other designers is to use the balancing bias. However, surprisingly, this is not a dominant strategy in the second substage of designers in our model.


\begin{restthm}{INDthmExampleBalanceNotBest}
\label{thm:example-balancing-is-not-best}
In some instances of IPM, setting the balancing bias may not be the best response strategy for a designer in the second substage of designers.
\end{restthm}

Nonetheless, we can prove that, when all designers simultaneously use the balancing biases, it forms an equilibrium. Therefore, it is still reasonable to assume that all designers will use the balancing biases.
\begin{restthm}{INDthmEqualProbabilitySecondStageEquilibrium}
\label{thm:indiv-1/2probability-is-equilibrium}
In the IPM, for a strategy profile $\vec{\mathcal{C}}$, let $\vec{\lambda}$ be the unique equilibrium multiplier vector. If it holds that $\hat{p}_{i,C_j}(\vec{\lambda})=\frac12$ for any contest $C_j$ and any contestant $i\in S_{C_j}$,  the biases of all contests in $\vec{\mathcal{C}}$ form an equilibrium in the second substage of designers.
\end{restthm}

\subsection{Equilibrium in the First Substage}
Assuming that all designers use the balancing biases in the second substage, with a little calculation, we can find that the contestants' efforts are in proportion to the prizes of contests. Therefore, the first substage of designers is strategically equivalent to a variant of weighted congestion game \cite{BGR14}, which has a pure Nash equilibrium. This guarantees the existence of WDE in the IPM.
\begin{restthm}{INDweakSPEexists}
\label{thm:weak-SPE-existence}
In the IPM, there exists at least one weak designer equilibrium.
\end{restthm}
\section{Divisible Reward Model}
\label{sec:divisible}
In this section, we concentrate on DPM, in which each designer is allowed to divide her budget to hold multiple contests. Compared to IPM, the strategy space of a designer under DPM is more complicated due to the involvement of multiple contests, but at the same time, it also become more flexible since the prize amount can be continuously adjusted across different contests to achieve some balanced state. Consequently, our result on DPM is two-fold: On the one hand, we show by an counterexample that \Cref{thm:indiv-1/2probability-is-equilibrium} cannot be extended to DPM (\Cref{thm:example-divisible-1/2notbest}), which means that using the balancing bias is sometimes no longer the best choice, even if all other designers do so. On the other hand, in contrast to IPM, we establish the existence of the designer equilibrium in DPM (\Cref{thm:div-proportional-allocation-is-SPE} \& \ref{thm:divisible-SPE-existence}), under a mild condition that $\max_{i\in[n]} T_i\leq \frac12\sum_{i\in[n]}T_i$.



We first show that \Cref{thm:indiv-1/2probability-is-equilibrium} cannot be extended to the DPM. That is, even when all designers use balancing bias simultaneously, it may not be an second-substage equilibrium.
\begin{restthm}{DIVExamplebanlanceNotBest}
\label{thm:example-divisible-1/2notbest}
In some instances of DPM, there exists some strategy profile $\vec{\mathcal{C}}$ such that:
\begin{itemize}
    \item Suppose $\vec{\lambda}$ is the EMV, it holds that $\hat{p}_{i,C}(\vec{\lambda})=\frac12$, for any contest $C\in\cup_{j\in[m]}\mathcal{C}_j$ and any participant $i\in S_C$, 
    \item However, there is some designer who has the incentive to change the biases of her contests.
\end{itemize}
\end{restthm}

However, interestingly, if every designer distributes her budget of prize proportional to the total effort of each participant and sets the balancing bias in each contest, it will be a designer equilibrium.
\begin{restthm}{DIVProportionalAllocationisSPE}
\label{thm:div-proportional-allocation-is-SPE}
In the DPM, given designers' strategy profile $\vec{\mathcal{C}}$, let $\vec{\lambda}$ be the EMV under $\vec{\mathcal{C}}$. If the following two conditions hold:
\begin{enumerate}
    \item For any designer $j$ and contestant $i$, it holds that $\sum_{C\in\mathcal{A}(i,\mathcal{C}_j)}R_C=2B_j\frac{T_i}{\sum_{k\in[n]}T_{k}}$;
    \item For any contest $C\in\cup_{j\in[m]}\mathcal{C}_j$ and any participant $i\in S_C$, it holds that $\hat{p}_{i,C}(\vec{\lambda})=\frac12$;
\end{enumerate}
then $\vec{\mathcal{C}}$ is a designer equilibrium.
\end{restthm}


Under the mild condition that the maximum effort of an individual contestant is not too large, we can show the existence of a designer equilibrium by constructing a strategy profile satisfying the condition of \Cref{thm:div-proportional-allocation-is-SPE}.
\begin{restthm}{DIVthmSPEExists}
\label{thm:divisible-SPE-existence}
In the DPM, if $\max_{i\in[n]}T_i\leq\frac12\sum_{i\in[n]}T_i$, there exists a designer equilibrium.
\end{restthm}

It's worth noting that, the designer equilibrium $\vec{\mathcal{C}}$ shown in \Cref{thm:divisible-SPE-existence} and its corresponding contestant equilibrium $\vec{x}$ exhibits a kind of balance: each contestant gets a utility proportional to her total effort, and each designer gets a utility proportional to her budget. Formally, it holds that
\begin{align*}
u_j^{contestant}(\vec{\mathcal{C}};\vec{x})&=\frac{T_i}{\sum_{i'\in[n]}T_{i'}}\sum_{j\in[m]}B_j,\\
u_j^{designer}(\vec{\mathcal{C}};\vec{x})&=\frac{B_j}{\sum_{j'\in[m]}B_{j'}}\sum_{i\in[n]}T_i,
\end{align*}
for all contestants $i\in[n]$ and all designers $j\in[m]$.

\section{Conclusion and Future Work}
This paper examines the competitive environment of multiple pairwise lottery contests, focusing on the equilibrium behavior of contest designers and contestants. Designers determine the prize amount, participants, and biases of their contests, while contestants allocate their effort across contests. We fully characterize the contestant equilibrium using the equilibrium multiplier vector. When designers can hold one or multiple contests, we demonstrate the designer equilibrium under mild conditions.

We suggest two directions for future research. The first is to extend our results to the general Tullock model with a more complex contest success function. The second is to analyze the equilibrium strategy of contestants and designers when there are more than two participants in a contest.

\newpage
\bibliography{main}

\newpage
\appendix
\section*{Appendix}
\section{Missing Proofs in \Cref{sec:contestant equilibrium}}

\subsection{Proof of Lemma \ref{lemma:lagrange-dual}}
\CEEquilibriumMultiplierVector*

\begin{proof}
    We show the \Cref{lemma:lagrange-dual} by giving a sufficient and necessary condition on the best response of any contestant, which is proven by the following lemma. 

\begin{restlem}{CEnecessaryCondition}
\label{lemma:contestantequilibrium-no-both-0}
For any feasible strategy profile $\vec{x}$ and any contestant $i$, if $x_i$ is contestant $i$'s best response to $\vec{x}_{-i}$, the following statements hold:
\begin{enumerate}
    \item For any $C\in \mathcal{A}(i,\mathcal{C})$, $x_{i,C}+x_{\op_{i,C},C}>0$.
    \item If there exists $C\in \mathcal{A}(i,\mathcal{C})$ with $x_{\op_{i,C},C}>0$, then $\sum_{C\in\mathcal{A}(i,\mathcal{C})}x_{i,C}=T_i$.
\end{enumerate}
\end{restlem}
\begin{proof}
We prove Statement 1 by contradiction. Intuitively, a contestant's best response should always exert positive effort (rather than zero) in any contest where the opponent exerts zero effort, which can increase her winning probability from $1/2$ to $1$.
Suppose for contradiction that there exists $C\in \mathcal{A}(i,\mathcal{C})$ with $x_{i,C}=x_{\op_{i,C},C}=0$. 
Since $x_{\op_{i,C},C}=0$, the winning probability of contestant $i$ in $C$ is 
\begin{align*}
f(\alpha_{C,i}x_{i,C};0)=\begin{cases}1,&\text{ if }x_{i,C}>0;\\\frac12,&\text{ if }x_{i,C}=0. \end{cases}
\end{align*}
Therefore, contestant $i$ will get a utility more than $R_C\cdot(1-\frac12)=\frac12 R_C>0$ from $C$ if she modifies $x_{i,C}$ to any $\epsilon>0$. We discuss two cases:

If there exists a contest $C'\in\mathcal{A}(i,\mathcal{C})$ such that $x_{i,C'}>0$, since $f(\alpha_{C',i}x_{i,C'};\alpha_{C',\op_{i,C'}}x_{\op_{i,C'},C'})$ is continuous when $x_{i,C'}\in(0,+\infty)$, there exists $\epsilon\in(0,x_{i,C'})$ such that $f(\alpha_{C',i}(x_{i,C'}-\epsilon);\alpha_{C',\op_{i,C'}}x_{\op_{i,C'},C'})>f(\alpha_{C',i}x_{i,C'};\alpha_{C',\op_{i,C'}}x_{\op_{i,C'},C'})-\frac{\frac12 R_C}{R_{C'}}$. Therefore, we can construct strategy $x_{i}'$ where $x_{i,C}'=\epsilon,x_{i,C'}'=x_{i,C'}-\epsilon$, and $x_{i,C''}'=x_{i,C''}$ for all $C''\in\mathcal{A}(i,\mathcal{C})\setminus\{C,C'\}$, which gets strictly more utility for contestant $i$ than $x_i$. This contradicts with the assumption that $x_{i}$ is a best response.

If for any contest $C'\in\mathcal{A}(i,\mathcal{C})$, it has $x_{i,C'}=0$. We can construct strategy $x_{i}'$ where $x_{i,C}'=T_{i}>0$, $x_{i,C'}'=x_{i,C'}=0$ for all $C'\in\mathcal{A}(i,\mathcal{C})\setminus\{C\}$, which also gets a utility more than $\frac12 R_C$ from contest $\mathcal{C}$ and has a more total utility. It is a contradiction. 
In summary, we show that Statement 1 holds.

For Statement 2, we still prove it by contradiction. Suppose there exists $C\in\mathcal{A}(i,\mathcal{C})$ such that $x_{\op_{i,C},C}>0$, but $\sum_{C\in\mathcal{A}(i,\mathcal{C})}x_{i,C}<T_i$. Since $f(\alpha_{C,i}x_{i,C};\alpha_{C,\op_{i,C}}x_{\op_{i,C}})$ is strictly increasing in $x_{i,C}$ for $x_{i,C}\in[0,+\infty)$, we can construct a feasible strategy $x_{i}'$ where $x_{i,C}'=x_{i,C}+T_{i}-\sum_{C\in\mathcal{A}(i,\mathcal{C})}x_{i,C}>x_{i,C}$, $x_{i,C'}'=x_{i,C'}$ for all $C'\in\mathcal{A}(i,\mathcal{C})\setminus\{C\}$, which gets a utility for contestant $i$ strictly more than that of $x_i$, contradicting with the assumption that $x_{i}$ is a best response. Therefore, Statement 2 also holds. 
\end{proof}

With this necessary condition on the best response, we present a sufficient and necessary condition on best response, shown in \Cref{lemma:bestresponse-lagrange}.
\begin{restlem}{CEbestresponseLagrange}
\label{lemma:bestresponse-lagrange}
In a feasible strategy profile $\vec{x}$, for any $i\in[n]$, a sufficient and necessary condition of that $x_i$ is contestant $i$'s best response to $\vec{x}_{-i}$ is: there exists a $\lambda_i\in\mathbb{R}_{\geq 0}$, such that for any $C\in\mathcal{A}(i,\mathcal{C})$, if $x_{i,C}>0$, then $R_C\cdot \frac{\partial p_{i,C}(\vec{x})}{\partial x_{i,C}}=\lambda_i$; if $x_{i,C}=0$, then $R_C\cdot \frac{\partial p_{i,C}(\vec{x})}{\partial x_{i,C}}\leq\lambda_i$.\footnote{Note that the existence of $\frac{\partial p_{i,C}(\vec{x})}{\partial x_{i,C}}$ implicitly implies that $x_{i,C}+x_{\op_{i,C},C}>0$.}
\end{restlem}



\begin{proof}
We first recall that
$$p_{i,C}(\vec{x})=\begin{cases}
    \frac{\alpha_{C,i} x_{i,C}}{\alpha_{C,i} x_{i,C}+\alpha_{C,\op_{i,C}} x_{\op_{i,C},C}}, &\text{if } x_{i,C}+x_{\op_{i,C},C}>0,\\
    \frac{1}{2}, & \text{if }x_{i,C}=x_{\op_{i,C},C}=0.
\end{cases}$$
We calculate that when $x_{i,C}+x_{\op_{i,C}}>0$, it holds that
$$\frac{\partial p_{i,C}(\vec{x})}{\partial x_{i,C}}=\frac{\alpha_{C,i}\alpha_{C,\op_{i,C}}x_{\op_{i,C},C}}{(\alpha_{C,i}x_{i,C}+\alpha_{C,\op_{i,C}}x_{\op_{i,C},C})^2}.$$

We also recall that for each contestant $i$, a best response $x_i$ is an optimal solution to the following optimization problem:
\begin{align}
\max_{x_{i,C}\geq0\text{ for }C\in\mathcal{A}(i,\mathcal{C})}&~\sum_{C\in\mathcal{A}(i,\mathcal{C})}R_C\cdot p_{i,C}(x_i,\vec{x}_{-i}),\label{eq:xi-best-response-optimization} \\
\mbox{s.t.} &\sum_{C\in \mathcal{A}(i,\mathcal{C})}x_{i,C}\leq T_i.\nonumber
\end{align}

\subsubsection{Necessity:} 

Take $\lambda_i=\max_{C\in\mathcal{A}(i,\mathcal{C})}R_{C} \cdot \frac{\partial p_{i,C}(\vec{x})}{\partial x_{i,C}}$.
For any $C_1\in\mathcal{A}(i,\mathcal{C})$ such that $x_{i,C_1}>0$, if there exists $C_2\in\mathcal{A}(i,\mathcal{C})$ such that $R_{C_1} \cdot \frac{\partial p_{i,C_1}(\vec{x})}{\partial x_{i,C_1}}<R_{C_2} \cdot \frac{\partial p_{i,C_2}(\vec{x})}{\partial x_{i,C_2}}$, then contestant $i$ can deviate to $x_i''$ where $x_{i,C_1}''=x_{i,C_1}-\epsilon$, $x_{i,C_2}''=x_{i,C_2}+\epsilon$ for some small enough $\epsilon>0$, and $x_{i,C'}''=x_{i,C'}$ for all other $C'$,  so that $\sum_{C\in\mathcal{A}(i,\mathcal{C})}R_C\cdot p_{i,C}(x_i'',\vec{x}_{-i})>\sum_{C\in\mathcal{A}(i,\mathcal{C})}R_C\cdot p_{i,C}(\vec{x})$, which contradicts with the assumption that $x_i$ is a best response. Therefore it holds that $R_{C_1} \cdot \frac{\partial p_{i,C_1}(\vec{x})}{\partial x_{i,C_1}}=\max_{C\in\mathcal{A}(i,\mathcal{C})}R_{C} \cdot \frac{\partial p_{i,C}(\vec{x})}{\partial x_{i,C}}=\lambda_i$, which implies the necessity.

\subsubsection{Sufficiency:} We discuss the sufficiency in two cases:

If there exists some $C\in\mathcal{A}(i,\mathcal{C})$ that $x_{\op_{i,C},C}=0$, then $x_{i,C}>0$, and it holds that $\lambda_i=R_C\cdot \frac{\partial p_{i,C}(\vec{x})}{\partial x_{i,C}}=0$. Therefore, for all $C\in\mathcal{A}(i,\mathcal{C})$,  we have $R_C\cdot \frac{\partial p_{i,C}(\vec{x})}{\partial x_{i,C}}=0$, implying that $x_{\op_{i,C},C}=0$ and $p_{i,C}(\vec{x})=1$. Therefore $x_i$ is a best response.

If $x_{\op_{i,C},C}>0$ for all $C\in\mathcal{A}(i,\mathcal{C}))$, we can observe that for each $C\in\mathcal{A}(i,\mathcal{C}))$, $p_{i,C}(x_i,\vec{x}_{-i})$ is a strictly increasing, strictly concave, and differentiable function in $x_{i,C}$ for $x_{i,C}\in[0,+\infty)$. Therefore, the objective function $\sum_{C\in\mathcal{A}(i,\mathcal{C})}R_C\cdot p_{i,C}(x_i,\vec{x}_{-i})$ of Optimization (\ref{eq:xi-best-response-optimization}) is a strictly concave and differentiable function in $x_i$ on the convex and compact feasible region $\{x_i=(x_{i,C})_{C\in\mathcal{A}(i,\mathcal{C})}:\sum_{C\in\mathcal{A}(i,\mathcal{C})}x_{i,C}\leq T_i\land \forall C\in\mathcal{A}(i,\mathcal{C}),x_{i,C}\geq 0\}$. Thus, Optimization (\ref{eq:xi-best-response-optimization}) can be viewed as a convex optimization problem with affine constraints.
We can prove that $x_i$ is the optimal solution to problem (\ref{eq:xi-best-response-optimization}) through the KKT conditions. We view $\lambda_i$ as the dual variable for the constraint $\sum_{C\in\mathcal{A}(i,\mathcal{C})}x_{i,C}\leq T_i$. And for each $C\in\mathcal{A}(i,\mathcal{C})$, let $\mu_C=\lambda_i-R_C\frac{\partial p_{i,C}(\vec{x})}{\partial x_{i,C}}\geq 0$ be the dual variable for the constraint $x_{i,C}\geq 0$. One can easily verify that $(x_{i,C})_{C\in\mathcal{A}(i,\mathcal{C})},\lambda_i,(\mu_C)_{C\in\mathcal{A}(i,\mathcal{C})}$ satisfy the KKT conditions of Optimization (\ref{eq:xi-best-response-optimization}), and the strong duality holds by Slater's condition. Therefore, it follows that $x_i$ is the optimal solution to problem (\ref{eq:xi-best-response-optimization}).
In summary, we show the sufficiency.
\end{proof}


Turn back to proving \Cref{lemma:lagrange-dual}. Since $\vec{x}$ is a contestant equilibrium under $\mathcal{C}$, for each contestant $i\in[n]$, $x_i$ is contestant $i$'s best response to $\vec{x}_{-i}$. Therefore, there exists $\lambda_i\in \mathbb{R}_{\geq 0}$ satisfying the condition of \Cref{lemma:bestresponse-lagrange} and we prove the \Cref{lemma:lagrange-dual}.
\end{proof}

\subsection{Proof of \Cref{lemma:reconstruct-effort-from-EMV}}
\CEhatxfromEMV*
\begin{proof}
We know that $\vec{x}$ and $\vec{\lambda}$ satisfies the conditions in \Cref{lemma:lagrange-dual}. Recall that $\hat{x}_{i,C}(\vec{\lambda})=\frac{
R_C\alpha_{C,i}\alpha_{C,\op_{i,C}}\lambda_{\op_{i,C}}
}{
(\alpha_{C,\op_{i,C}}\lambda_i+\alpha_{C,i}\lambda_{\op_{i,C}})^2
}$.
For any contestant $i$ and any $C\in\mathcal{A}(i,\mathcal{C})$, we discuss this lemma in three cases:

(a) If $\lambda_i>0$ and $x_{i,C}>0$, by \Cref{lemma:lagrange-dual}, we have $\lambda_i=R_C\cdot \frac{\alpha_{C,i}\alpha_{C,\op_{i,C}}x_{\op_{i,C},C}}{(\alpha_{C,i}x_{i,C}+\alpha_{C,\op_{i,C}}x_{\op_{i,C},C})^2}$, which implies that $x_{\op_{i,C},C}>0$ and $\lambda_{\op_{i,C}}=R_C\cdot \frac{\alpha_{C,i}\alpha_{C,\op_{i,C}}x_{i,C}}{(\alpha_{C,i}x_{i,C}+\alpha_{C,\op_{i,C}}x_{\op_{i,C},C})^2}$. With a little calculation, we get
\begin{align*}
&(\alpha_{C,\op_{i,C}}\lambda_i+\alpha_{C,i}\lambda_{\op_{i,C}})^2\\
=&(\frac{R_C\alpha_{C,i}\alpha_{C,\op_{i,C}}}{\alpha_{C,i}x_{i,C}+\alpha_{C,\op_{i,C}}x_{\op_{i,C},C}})^2\\
=&\lambda_{\op_{i,C}}\cdot \frac{R_C\alpha_{C,i}\alpha_{C,\op_{i,C}}}{x_{i,C}}.
\end{align*}
It follows that $x_{i,C}=\frac{
R_C\alpha_{C,i}\alpha_{C,\op_{i,C}}\lambda_{\op_{i,C}}
}{
(\alpha_{C,\op_{i,C}}\lambda_i+\alpha_{C,i}\lambda_{\op_{i,C}})^2
}=\hat{x}_{i,C}(\vec{\lambda})$.

(b) If $\lambda_i>0$ and $x_{i,C}=0$, by \Cref{lemma:contestantequilibrium-no-both-0}, we know that $x_{\op_{i,C},C}>0$, which means that  $\lambda_{\op_{i,C}}=R_C\cdot \frac{\alpha_{C,i}\alpha_{C,\op_{i,C}}x_{i,C}}{(\alpha_{C,i}x_{i,C}+\alpha_{C,\op_{i,C}}x_{\op_{i,C},C})^2}=0$ by \Cref{lemma:lagrange-dual}. It still holds that $x_{i,C}=0=\frac{
R_C\alpha_{C,i}\alpha_{C,\op_{i,C}}\lambda_{\op_{i,C}}
}{
(\alpha_{C,\op_{i,C}}\lambda_i+\alpha_{C,i}\lambda_{\op_{i,C}})^2
}=\hat{x}_{i,C}(\vec{\lambda})$.

(c) If $\lambda_i=0$, we have 
$$0=\lambda_i\geq R_C\cdot \frac{\alpha_{C,i}\alpha_{C,\op_{i,C}}x_{\op_{i,C},C}}{(\alpha_{C,i}x_{i,C}+\alpha_{C,\op_{i,C}}x_{\op_{i,C},C})^2},$$
which leads to that $x_{\op_{i,C},C}=0$. Therefore, $\lambda_{\op_{i,C}}\geq R_C\cdot \frac{\alpha_{C,i}\alpha_{C,\op_{i,C}}x_{i,C}}{(\alpha_{C,i}x_{i,C})^2}=R_C\cdot \frac{\alpha_{C,\op_{i,C}}}{\alpha_{C,i}x_{i,C}}$. This means that $x_{i,C}\geq \frac{R_C\alpha_{C,\op_{i,C}}}{\alpha_{C,i}\lambda_{\op_{i,C}}}=\frac{
R_C\alpha_{C,i}\alpha_{C,\op_{i,C}}\lambda_{\op_{i,C}}
}{
(\alpha_{C,\op_{i,C}}\lambda_i+\alpha_{C,i}\lambda_{\op_{i,C}})^2
}=\hat{x}_{i,C}(\vec{\lambda})$.

In summary, the lemma holds for any $i\in[n]$ and any $C\in\mathcal{A}(i,\mathcal{C})$.
\end{proof}

\subsection{Proof of \Cref{lemma:dual-solution-condition}}
\CEconditionforEMV*
\begin{proof}
We first show the necessity. 
\subsubsection{Necessity:} If $\vec{\lambda}$ is an equilibrium multiplier vector, there is a contestant equilibrium $\vec{x}$, satisfying the conditions stated in \Cref{lemma:lagrange-dual}. 

For Statement 1, for any $C\in\mathcal{C}$, suppose $S_C=\{i_1,i_2\}$, by \Cref{lemma:contestantequilibrium-no-both-0} we know that $x_{i_1,C}+x_{i_2,C}>0$. Without loss of generality, assume $x_{i_2,C}>0$. By \Cref{lemma:lagrange-dual}, we have
\begin{align*}\sum_{i\in S_C}\lambda_i\geq\lambda_{i_1}&\geq R_C\cdot \frac{\partial p_{i_1,C}(\vec{x})}{\partial x_{i_1,C}}\\
&=R_C\cdot \frac{\partial f(\alpha_{C,i_1}x_{i_1,C};\alpha_{C,i_2}x_{i_2,C})}{\partial x_{i_1,C}}\\
&=R_C\cdot \frac{\alpha_{C,i_1}\alpha_{C,i_2}x_{i_2,C}}{(\alpha_{C,i_1}x_{i_1,C}+\alpha_{C,i_2}x_{i_2,C})^2}>0.
\end{align*}
This holds for any $C\in\mathcal{C}$, which implies that  $\vec{\lambda}$ is valid.

For Statement 2 and Statement 3, for any contestant $i$, by \Cref{lemma:reconstruct-effort-from-EMV}, we know that for any contest $C\in\mathcal{A}(i,\mathcal{C})$, $x_{i,C}\geq \hat{x}_{i,C}(\vec{\lambda})$, and $\hat{T}_i(\vec{\lambda})=\sum_{C\in\mathcal{A}(i,\mathcal{C})}\hat{x}_{i,C}(\vec{\lambda})\leq \sum_{C\in\mathcal{A}(i,\mathcal{C})}x_{i,C}\leq T_i$.
Moreover, when $\lambda_i>0$, we know that $x_{i,C}=\hat{x}_{i,C}(\vec{\lambda})$, and $x_{\op_{i,C},C}\geq \hat{x}_{\op_{i,C},C}(\vec{\lambda})=\frac{
R_C\alpha_{C,i}\alpha_{C,\op_{i,C}}\lambda_{i}
}{
(\alpha_{C,\op_{i,C}}\lambda_i+\alpha_{C,i}\lambda_{\op_{i,C}})^2}>0$. Therefore, by Statement 2 of \Cref{lemma:contestantequilibrium-no-both-0}, we have $T_i=\sum_{C\in\mathcal{A}(i,\mathcal{C})}x_{i,C}(\vec{\lambda})=\sum_{C\in\mathcal{A}(i,\mathcal{C})}\hat{x}_{i,C}(\vec{\lambda})=\hat{T}_i(\vec{\lambda})$.

\subsubsection{Sufficiency:} 
Suppose $\vec{\lambda}$ satisfies all three statements, we construct $\vec{x}$ such that $x_{i,C}=\hat{x}_{i,C}(\vec{\lambda})$. We prove that $\vec{x}$ is a contestant equilibrium corresponding to $\vec{\lambda}$, so that $\vec{\lambda}$ is an equilibrium multiplier vector.

Firstly, from Statement 2 and Statement 3 we know that $\sum_{C\in\mathcal{A}(i,\mathcal{C})}x_{i,C}$ $=\hat{T}_i(\vec{\lambda})\leq T_i$ holds for any contestant  $i\in[n]$, so $\vec{x}$ is a feasible strategy profile of contestants. Next, we prove that $\vec{x}$ is a contestant equilibrium with the help of \Cref{lemma:bestresponse-lagrange}. 

We only need to verify that for any contest $C\in\mathcal{C}$ and any contestant $i\in S_C$, it holds that $R_C\cdot\frac{\partial p_{i,C}(\vec{x})}{\partial x_{i,C}}=\lambda_i$ if $x_{i,C}>0$, and $R_C\cdot\frac{\partial p_{i,C}(\vec{x})}{\partial x_{i,C}}\leq\lambda_i$ if $x_{i,C}=0$, which implies that the condition of \Cref{lemma:bestresponse-lagrange} is satisfied for all contestants, i.e., each $x_i$ is a best response to $\vec{x}_{-i}$, and consequently $\vec{x}$ is a contestant equilibrium.

For any contest $C\in\mathcal{A}$, suppose $S_C=\{i_1,i_2\}$. Since $\vec{\lambda}$ is valid, we consider the following two cases:

(a) If both $\lambda_{i_1}$ and $\lambda_{i_2}$ are positive, we have 
$$x_{i_1,C}=\hat{x}_{i_1,C}(\vec{\lambda})=
\frac{
R_C\alpha_{C,i_1}\alpha_{C,i_2}\lambda_{i_2}
}{
(\alpha_{C,i_2}\lambda_{i_1}+\alpha_{C,i_1}\lambda_{i_2})^2}$$ and 
$$x_{i_2,C}=\hat{x}_{i_2,C}(\vec{\lambda})=
\frac{
R_C\alpha_{C,i_1}\alpha_{C,i_2}\lambda_{i_1}
}{
(\alpha_{C,i_2}\lambda_{i_1}+\alpha_{C,i_1}\lambda_{i_2})^2}.$$ Observe that $\alpha_{C,i_1}x_{i_1,C}+\alpha_{C,i_2}x_{i_2,C}=\frac{
R_C\alpha_{C,i_1}\alpha_{C,i_2}
}{
\alpha_{C,i_2}\lambda_{i_1}+\alpha_{C,i_1}\lambda_{i_2}}
$. We can calculate
\begin{align*}
R_C\cdot \frac{\partial p_{i_1,C}(\vec{x})}{\partial x_{i_1,C}}
=&\frac{R_C\alpha_{C,i_1}\alpha_{C,i_2}x_{i_2,C}}{(\alpha_{C,i_1}x_{i_1,C}+\alpha_{C,i_2}x_{i_2,C})^2}\\
=&\frac{R_C\alpha_{C,i_1}\alpha_{C,i_2}\frac{
R_C\alpha_{C,i_1}\alpha_{C,i_2}\lambda_{i_1}
}{
(\alpha_{C,i_2}\lambda_{i_1}+\alpha_{C,i_1}\lambda_{i_2})^2}}{(\frac{
R_C\alpha_{C,i_1}\alpha_{C,i_2}
}{
\alpha_{C,i_2}\lambda_{i_1}+\alpha_{C,i_1}\lambda_{i_2}})^2}\\
=&\lambda_{i_1},
\end{align*}
and similarly, $R_C\cdot \frac{\partial p_{i_2,C}(\vec{x})}{\partial x_{i_2,C}}=\lambda_{i_2}$.

(b) If only one of $\lambda_{i_1}$ and $\lambda_{i_2}$ is zero, without loss of generality, assume $\lambda_{i_1}>0$ and $\lambda_{i_2}=0$. We have 
$$x_{i_1,C}=\hat{x}_{i_1,C}(\vec{\lambda})=
\frac{
R_C\alpha_{C,i_1}\alpha_{C,i_2}\lambda_{i_2}
}{
(\alpha_{C,i_2}\lambda_{i_1}+\alpha_{C,i_1}\lambda_{i_2})^2}=0$$ and 
$$x_{i_2,C}=\hat{x}_{i_2,C}(\vec{\lambda})=
\frac{
R_C\alpha_{C,i_1}\alpha_{C,i_2}\lambda_{i_1}
}{
(\alpha_{C,i_2}\lambda_{i_1}+\alpha_{C,i_1}\lambda_{i_2})^2}=\frac{
R_C\alpha_{C,i_1}
}{
\alpha_{C,i_2}\lambda_{i_1}}.$$ We can calculate that
\begin{align*}
R_C\cdot \frac{\partial p_{i_1,C}(\vec{x})}{\partial x_{i_1,C}}
=&R_C\cdot \frac{\alpha_{C,i_1}\alpha_{C,i_2}x_{i_2,C}}{(\alpha_{C,i_1}x_{i_1,C}+\alpha_{C,i_2}x_{i_2,C})^2}\\
=&\frac{R_C\alpha_{C,i_1}}{\alpha_{C,i_2}x_{i_2,C}}\\
= &\lambda_{i_1},
\end{align*}
and $R_C\cdot \frac{\partial p_{i_2,C}(\vec{x})}{\partial x_{i_2,C}}=0=\lambda_{i_2}$.

In summary, the condition of \Cref{lemma:bestresponse-lagrange} is satisfied for any contestant $i$. Therefore, we get that for each $i\in[n]$, $x_i$ is a best response to $\vec{x}_{-i}$, which implies that $\vec{x}$ is a contestant equilibrium. Finally, since $\vec{\lambda}$ and $\vec{x}$ satisfy the conditions in \Cref{lemma:lagrange-dual},  $\vec{\lambda}$ is an equilibrium multiplier vector.
\end{proof}

\subsection{Proof of \Cref{lemma:characterize-contestant-equilibrium-by-multiplier}}
\CEcharacterizeAllCEbyEMV*
\begin{proof} We first show the necessity. 

\subsubsection{Necessity:} Suppose $\vec{x}$ is a contestant equilibrium corresponding to $\vec{\lambda}$. For any contestant $i$, since $x_i$ is a feasible strategy, we have $\sum_{C\in\mathcal{A}(i,\mathcal{C})}x_{i,C}\leq T_i$. By \Cref{lemma:reconstruct-effort-from-EMV}, for any contest $C\in\mathcal{A}(i,\mathcal{C})$, we have $x_{i,C}\geq \hat{x}_{i,C}(\vec{\lambda})$. By the definition of $\mathcal{X}(\vec{\lambda})$, we get $\vec{x}\in\mathcal{X}(\vec{\lambda})$.

\subsubsection{Sufficiency:} Suppose $\vec{x}\in\mathcal{X}(\vec{\lambda})$ is a contestant equilibrium.  Similar to the proof of sufficiency of \Cref{lemma:dual-solution-condition}, we prove that $\vec{x}$ and $\vec{\lambda}$  satisfies the condition of \Cref{lemma:bestresponse-lagrange} for any contestant $i\in[n]$. 
For any contest $C\in\mathcal{A}$, suppose $S_C=\{i_1,i_2\}$. Since $\vec{\lambda}$ is valid, we consider the following two cases:

(a) If both $\lambda_{i_1}$ and $\lambda_{i_2}$ are positive, we have $x_{i_1,C}=\hat{x}_{i_1,C}(\vec{\lambda})$ and $x_{i_2,C}=\hat{x}_{i_2,C}(\vec{\lambda})$. We already know from the proof of sufficiency of \Cref{lemma:dual-solution-condition} that
$R_C\cdot \frac{\partial p_{i_1,C}(\vec{x})}{\partial x_{i_1,C}}=\lambda_{i_1}$ and $R_C\cdot \frac{\partial p_{i_2,C}(\vec{x})}{\partial x_{i_2,C}}=\lambda_{i_2}$.

(b) If only one of $\lambda_{i_1}$ and $\lambda_{i_2}$ is zero, without loss of generality, assume $\lambda_{i_1}>0$ and $\lambda_{i_2}=0$. We have $x_{i_1,C}=\hat{x}_{i_1,C}(\vec{\lambda})=
\frac{
R_C\alpha_{C,i_1}\alpha_{C,i_2}\lambda_{i_2}
}{
(\alpha_{C,i_2}\lambda_{i_1}+\alpha_{C,i_1}\lambda_{i_2})^2}=0$ and $x_{i_2,C}\geq\hat{x}_{i_2,C}(\vec{\lambda})=\frac{
R_C\alpha_{C,i_1}
}{
\alpha_{C,i_2}\lambda_{i_1}}$. We can calculate that
\begin{align*}
R_C\cdot \frac{\partial p_{i_1,C}(\vec{x})}{\partial x_{i_1,C}}
=&R_C\cdot \frac{\alpha_{C,i_1}\alpha_{C,i_2}x_{i_2,C}}{(\alpha_{C,i_1}x_{i_1,C}+\alpha_{C,i_2}x_{i_2,C})^2}\\
=&\frac{R_C\alpha_{C,i_1}}{\alpha_{C,i_2}x_{i_2,C}}\\
\leq &\lambda_{i_1},
\end{align*}
and $R_C\cdot \frac{\partial p_{i_2,C}(\vec{x})}{\partial x_{i_2,C}}=0=\lambda_{i_2}$.

In summary, the condition of \Cref{lemma:bestresponse-lagrange} is satisfied for any contestant $i\in[n]$. Therefore, we get that for each $i\in[n]$, $x_i$ is a best response to $\vec{x}_{-i}$, and $\vec{x}$ is a contestant equilibrium corresponding to $\vec{\lambda}$.
\end{proof}

\subsection{Proof of \Cref{thm:contestant-equilibrium-existence}}
\CEthmExistenceofEMV*

\begin{proof}
To prove that there exists an equilibrium multiplier vector $\vec{\lambda}$, we construct a continuous updating function of the vector $\vec{\lambda}\in\mathbb{R}_{\geq 0}^{n}$, and apply Brouwer's fixed point theorem to show the existence of a fixed point of this updating function. Then, we show that the fixed point satisfies the conditions in \Cref{lemma:dual-solution-condition} and is consequently an equilibrium multiplier vector.

Let $\overline{T}=2\max_{i\in[n]}T_i$.
We define an updating function $\phi:\mathbb{R}_{\geq 0}^{n}\to\mathbb{R}_{\geq 0}^{n}$, such that for any contestant $i\in[n]$,
\begin{align*}
    &\phi_i(\vec{\lambda})=\min\{\hat{\lambda_i}\geq 0:\\
&\sum_{C\in\mathcal{A}(i,\mathcal{C})}
\frac{R_C\alpha_{C,i}\alpha_{C,\op_{i,C}}\lambda_{\op_{i,C}}}
{(\alpha_{C,\op_{i,C}}\hat{\lambda_i}+\alpha_{C,i}\lambda_{\op_{i,C}})^2}
\leq T_i\\
&\land \forall C\in\mathcal{A}(i,\mathcal{C}),
\frac{\hat{\lambda_i}}{\alpha_{C,i}}+\frac{\lambda_{\op_{i,C}}}{\alpha_{C,\op_{i,C}}}
\geq\frac{R_C}{(\alpha_{C,i}+\alpha_{C,\op_{i,C}})\overline{T}}\}.
\end{align*}
Specially, we assume $\frac{R_C\alpha_{C,i}\alpha_{C,\op_{i,C}}\lambda_{\op_{i,C}}}
{(\alpha_{C,\op_{i,C}}\hat{\lambda_i}+\alpha_{C,i}\lambda_{\op_{i,C}})^2}=0$ when $\hat{\lambda_i}=\lambda_{\op_{i,C}}=0$ in the above definition.

Let $M=\max_{i\in[n]}\frac{1}{T_i}\sum_{C\in\mathcal{A}(i,\mathcal{C})}\frac{R_C\alpha_{C,i}}{\alpha_{C,\op_{i,C}}}$, consider the restriction of $\phi$ on the closed region $\Omega=[0,M]^n$. We can show that for any $\vec{\lambda}\in\Omega$, it holds that $\phi(\vec{\lambda})\in\Omega$.

For any contestant $i$, we prove that $\phi(\vec{\lambda})\leq M$. Since $\vec{\lambda}\in\Omega$, we have $\lambda_{\op_{i,C}}\leq M$ for any $C\in\mathcal{A}(i,\mathcal{C})$ and 
\begin{align*}
&\sum_{C\in\mathcal{A}(i,\mathcal{C})}
\frac{R_C\alpha_{C,i}\alpha_{C,\op_{i,C}}\lambda_{\op_{i,C}}}
{(\alpha_{C,\op_{i,C}}M+\alpha_{C,i}\lambda_{\op_{i,C}})^2}\\
\leq&\sum_{C\in\mathcal{A}(i,\mathcal{C})}
\frac{R_C\alpha_{C,i}\alpha_{C,\op_{i,C}}M}{(\alpha_{C,\op_{i,C}}M)^2}\\
=&\frac1{M}\sum_{C\in\mathcal{A}(i,\mathcal{C})}
\frac{R_C\alpha_{C,i}}{\alpha_{C,\op_{i,C}}}\\
\leq& T_i.
\end{align*}
Meanwhile, we also have 
\begin{align*}
    & \frac{M}{\alpha_{C,i}}+\frac{\lambda_{\op_{i,C}}}{\alpha_{C,\op_{i,C}}}\geq \frac{M}{\alpha_{C,i}}\geq \frac{1}{\alpha_{C,i}}\frac{1}{T_i}\frac{R_C\alpha_{C,i}}{\alpha_{C,\op_{i,C}}}\\
&= \frac{R_C}{T_i\alpha_{C,\op_{i,C}}}>\frac{R_C}{(\alpha_{C,i}+\alpha_{C,\op_{i,C}})\overline{T}},
\end{align*} 
for any contest $C\in\mathcal{A}(i,\mathcal{C})$. By the definition of $\phi$, we have that $\phi_i(\vec{\lambda})\leq M$. Therefore, $\phi$ maps $\Omega$ into $\Omega$.

Next we prove that $\phi$ is a continuous on $\Omega$, i.e., for any contestant $i$, we prove that $\phi_i$ is continuous on $\Omega$. Define 
\begin{align*}
& \underline{\lambda}_i(\vec{\lambda}_{-i})=\max\{0,\\
& \max_{C\in\mathcal{A}(i,\mathcal{C})}(\frac{R_C}{(\alpha_{C,i}+\alpha_{C,\op_{i,C}})\overline{T}}-\frac{\lambda_{\op_{i,C}}}{\alpha_{C,\op_{i,C}}}
)\alpha_{C,i}\}.
\end{align*}
Let $h_i(\hat{\lambda_i},\vec{\lambda})$ denote $\sum_{C\in\mathcal{A}(i,\mathcal{C})}
\frac{R_C\alpha_{C,i}\alpha_{C,\op_{i,C}}\lambda_{\op_{i,C}}}
{(\alpha_{C,\op_{i,C}}\hat{\lambda_i}+\alpha_{C,i}\lambda_{\op_{i,C}})^2}$.

Let $\Omega_1=\{\vec{\lambda}\in\Omega:h_i(\underline{\lambda}_i(\vec{\lambda}_{-i}),\vec{\lambda})\geq T_i\}$ and $\Omega_2=\{\vec{\lambda}\in\Omega:h_i(\underline{\lambda}_i(\vec{\lambda}_{-i}),\vec{\lambda})\leq T_i\}$. For any $\vec{\lambda}\in\Omega_1$, note that $\max_{C\in\mathcal{A}(i,\mathcal{C})}\lambda_{\op_{i,C}}>0$, otherwise $h_i(\underline{\lambda}_i(\vec{\lambda}_{-i}),\vec{\lambda})=0$. Therefore, $h_i(\hat{\lambda_i},\vec{\lambda})$ is strictly decreasing in $\hat{\lambda_i}$ for $\hat{\lambda_i}\in[\underline{\lambda}_i(\vec{\lambda}_{-i}),M]$. Recall that $h_i(M,\vec{\lambda})\leq T_i$, so there exists a unique $\tilde{\lambda_i}(\vec{\lambda})\in[\underline{\lambda}_i(\vec{\lambda}_{-i}),M]$ such that $h_i(\tilde{\lambda_i}(\vec{\lambda}),\vec{\lambda})=T_i$. By implicit function theorem, $\tilde{\lambda_i}(\vec{\lambda})$ is a continuous and differentiable function in $\vec{\lambda}$ for $\vec{\lambda}\in\Omega_1$.

Observe that for any $\vec{\lambda}\in\Omega_1\cap\Omega_2$, $\tilde{\lambda_i}(\vec{\lambda})=\underline{\lambda}_i(\vec{\lambda}_{-i})$. Also observe that $\Omega_1\cup\Omega_2=\Omega$, so $\phi_i(\vec{\lambda})$ can be written as
\begin{align*}
\phi_i(\vec{\lambda})=\begin{cases}
\tilde{\lambda_i}(\vec{\lambda}),&\text{if }\vec{\lambda}\in\Omega_1,\\
\underline{\lambda}_i(\vec{\lambda}_{-i}),&\text{if }\vec{\lambda}\in\Omega_2.\\
\end{cases}
\end{align*}
Therefore, $\phi_i(\vec{\lambda})$ is continuous on $\Omega$.

Since $\phi$ is a continuous mapping on $\Omega$ and $\Omega$ is convex and compact, by Brouwer's fixed point theorem, there exists a fixed point $\vec{\lambda}^*\in\Omega$ such that $\phi(\vec{\lambda}^*)=\vec{\lambda}^*$.

Now we prove that $\vec{\lambda}^*$ satisfies the conditions of \Cref{lemma:dual-solution-condition}. For Statement 1 of \Cref{lemma:dual-solution-condition}, since $\phi(\vec{\lambda}^*)=\vec{\lambda}^*$, we know that for any contestant $i$ and any contest $C\in\mathcal{A}(i,\mathcal{C})$, $\frac{\lambda_i^*}{\alpha_{C,i}}+\frac{\lambda_{\op_{i,C}}^*}{\alpha_{C,\op_{i,C}}}
\geq\frac{R_C}{(\alpha_{C,i}+\alpha_{C,\op_{i,C}})\overline{T}}$, which means that $\lambda_i^*+\lambda_{\op_{i,C}}^*>0$ and $\vec{\lambda}^*$ is valid.

For Statement 2 and Statement 3 of \Cref{lemma:dual-solution-condition}, since $\phi(\vec{\lambda}^*)=$ $\vec{\lambda}^*$, we know that for any contestant $i$, $\sum_{C\in\mathcal{A}(i,\mathcal{C})}
\frac{R_C\alpha_{C,i}\alpha_{C,\op_{i,C}}\lambda_{\op_{i,C}}^*}
{(\alpha_{C,\op_{i,C}}\lambda_i^*+\alpha_{C,i}\lambda_{\op_{i,C}}^*)^2}
\leq T_i$, i.e., $\hat{T}_i(\lambda^*)\leq T_i$. It remains to prove that $\hat{T}_i(\lambda^*)=T_i$ if $\lambda_i^*>0$.

For any contestant $i$ with $\lambda_i^*>0$, since $\lambda_i^*=\phi_i(\vec{\lambda}^*)$, we know that either $\sum_{C\in\mathcal{A}(i,\mathcal{C})}
\frac{R_C\alpha_{C,i}\alpha_{C,\op_{i,C}}\lambda_{\op_{i,C}}^*}
{(\alpha_{C,\op_{i,C}}\lambda_i^*+\alpha_{C,i}\lambda_{\op_{i,C}}^*)^2}=T_i$, or there exists a contest $C\in\mathcal{A}(i,\mathcal{C})$, such that $\frac{\lambda_i^*}{\alpha_{C,i}}+\frac{\lambda_{\op_{i,C}}^*}{\alpha_{C,\op_{i,C}}}
=\frac{R_C}{(\alpha_{C,i}+\alpha_{C,\op_{i,C}})\overline{T}}$.
Suppose for contradiction that there exists $C\in\mathcal{A}(i,\mathcal{C})$, such that $\frac{\lambda_i^*}{\alpha_{C,i}}+\frac{\lambda_{\op_{i,C}}^*}{\alpha_{C,\op_{i,C}}}
=\frac{R_C}{(\alpha_{C,i}+\alpha_{C,\op_{i,C}})\overline{T}}$.

If $\frac{\lambda_i^*}{\alpha_{C,i}}\geq \frac{R_C}{2(\alpha_{C,i}+\alpha_{C,\op_{i,C}})\overline{T}}$, we can calculate that 
\begin{align*}
&\hat{T}_{\op_{i,C}}(\vec{\lambda}^*)\\
\geq&\hat{x}_{\op_{i,C},C}(\vec{\lambda}^*)\\
=&\frac{R_C\alpha_{C,i}\alpha_{C,\op_{i,C}}\lambda_i^*}
{(\alpha_{C,\op_{i,C}}\lambda_i^*+\alpha_{C,i}\lambda_{\op_{i,C}}^*)^2}\\
=&\frac{R_C\lambda_i^*}
{\alpha_{C,i}\alpha_{C,\op_{i,C}}(\frac{R_C}{(\alpha_{C,i}+\alpha_{C,\op_{i,C}})\overline{T}})^2}\\
=&\frac{(\alpha_{C,i}+\alpha_{C,\op_{i,C}})^2\overline{T}^2\lambda_i^*}
{\alpha_{C,i}\alpha_{C,\op_{i,C}}R_C}\\
\geq&\frac{(\alpha_{C,i}+\alpha_{C,\op_{i,C}})^2\overline{T}^2R_C}
{\alpha_{C,\op_{i,C}}R_C\cdot 2(\alpha_{C,i}+\alpha_{C,\op_{i,C}})\overline{T}}\\
>&\frac{\overline{T}}{2}\geq T_{\op_{i,C}}.
\end{align*}
This contradicts with that $\hat{T}_{\op_{i,C}}(\vec{\lambda}^*)\leq T_{\op_{i,C}}$. 

If $\frac{\lambda_i^*}{\alpha_{C,i}}\leq \frac{R_C}{2(\alpha_{C,i}+\alpha_{C,\op_{i,C}})\overline{T}}$, on the opposite side, we have $\frac{\lambda_{\op_{i,C}}^*}{\alpha_{C,\op_{i,C}}}\geq \frac{R_C}{2(\alpha_{C,i}+\alpha_{C,\op_{i,C}})\overline{T}}$, and similarly we can obtain $\hat{T}_{i}(\vec{\lambda}^*)\geq\hat{x}_{i,C}(\vec{\lambda}^*)>T_i$, which is also a contradiction.

Since both cases lead to contradiction, we know that for any  $C\in\mathcal{A}(i,\mathcal{C})$, $\frac{\lambda_i^*}{\alpha_{C,i}}+\frac{\lambda_{\op_{i,C}}^*}{\alpha_{C,\op_{i,C}}}
>\frac{R_C}{(\alpha_{C,i}+\alpha_{C,\op_{i,C}})\overline{T}}$, and therefore $\hat{T}_i(\vec{\lambda}^*)=\sum_{C\in\mathcal{A}(i,\mathcal{C})}
\frac{R_C\alpha_{C,i}\alpha_{C,\op_{i,C}}\lambda_{\op_{i,C}}^*}
{(\alpha_{C,\op_{i,C}}\lambda_i^*+\alpha_{C,i}\lambda_{\op_{i,C}}^*)^2}=T_i$ must holds.

By \Cref{lemma:dual-solution-condition}, $\vec{\lambda}^*$ is an equilibrium multiplier vector. This completes the proof.
\end{proof}

\subsection{Proof of \Cref{lemma:monotone}}
\CEMonotoneProperty*

\begin{proof}
For any valid $\vec{\lambda}$ and any contest $C\in\mathcal{C}$, suppose $S_C=\{i_1,i_2\}$. With some calculations, we have:
\begin{align*}
\frac{\partial \hat{x}_{i_1,C}(\vec{\lambda})}{\partial \lambda_{i_1}}&=R_C\frac{-2\frac{\lambda_{i_2}}{\alpha_{C,i_2}}}{\alpha_{C,i_1}^2(\frac{\lambda_{i_1}}{\alpha_{C,i_1}}+\frac{\lambda_{i_2}}{\alpha_{C,i_2}})^3},\\
\frac{\partial \hat{x}_{i_1,C}(\vec{\lambda})}{\partial \lambda_{i_2}}&=R_C\frac{\frac{\lambda_{i_1}}{\alpha_{C,i_1}}-\frac{\lambda_{i_2}}{\alpha_{C,i_2}}}{\alpha_{C,i_1}\alpha_{C,i_2}(\frac{\lambda_{i_1}}{\alpha_{C,i_1}}+\frac{\lambda_{i_2}}{\alpha_{C,i_2}})^3}.
\end{align*}
Therefore, we can calculate the Jacobian matrix $\mathrm{J}(\vec{\lambda})=\frac{\partial \hat{T}(\vec{\lambda})}{\partial \vec{\lambda}}$, where $\mathrm{J}_{i,k}(\vec{\lambda})=\frac{\partial\hat{T}_i(\vec{\lambda})}{\partial\lambda_k}$. For any contestants $i_1,i_2$ such that $i_1\neq i_2$, we have
\begin{align*}
\mathrm{J}_{i_1,i_2}(\vec{\lambda})=&\sum_{C\in\mathcal{C}:S_C=\{i_1,i_2\}}\frac{\partial \hat{x}_{i_1,C}(\vec{\lambda})}{\partial \lambda_{i_2}}\\
=&\sum_{C\in\mathcal{C}:S_C=\{i_1,i_2\}}R_C\frac{\frac{\lambda_{i_1}}{\alpha_{C,i_1}}-\frac{\lambda_{i_2}}{\alpha_{C,i_2}}}{\alpha_{C,i_1}\alpha_{C,i_2}(\frac{\lambda_{i_1}}{\alpha_{C,i_1}}+\frac{\lambda_{i_2}}{\alpha_{C,i_2}})^3}.
\end{align*}
For any contestant $i$, we get
\begin{align*}
\mathrm{J}_{i,i}(\vec{\lambda})=&\sum_{C\in\mathcal{A}(i,\mathcal{C})}\frac{\partial \hat{x}_{i,C}(\vec{\lambda})}{\partial \lambda_{\op_{i,C}}}\\
=&\sum_{C\in\mathcal{A}(i,\mathcal{C})}R_C\frac{-2\frac{\lambda_{\op_{i,C}}}{\alpha_{C,\op_{i,C}}}}{\alpha_{C,i}^2(\frac{\lambda_{i}}{\alpha_{C,i}}+\frac{\lambda_{\op_{i,C}}}{\alpha_{C,\op_{i,C}}})^3}.
\end{align*}

Observe that for any valid $\vec{\lambda}$, $\mathrm{J}_{i,i}(\vec{\lambda})\leq 0$, where the strict inequality holds if there exists $C\in\mathcal{A}(i,\mathcal{C})$ such that $\lambda_{\op_{i,C}}>0$. Also observe that for any $i_1,i_2\in[n]$ such that $i_1\neq i_2$, it holds that $\mathrm{J}_{i_1,i_2}(\vec{\lambda})+\mathrm{J}_{i_2,i_1}(\vec{\lambda})=0$.

For any valid multiplier vectors $\vec{\lambda},\vec{\lambda}'$, let $v=\vec{\lambda}'-\vec{\lambda}$. Note that for any $t\in[0,1]$, $\vec{\lambda}+tv=(1-t)\vec{\lambda}+t\vec{\lambda}'$ is also a valid multiplier vector. Define $h(t)=\sum_{i=1}^nv_i\hat{T}_i(\vec{\lambda}+tv)$, for any $t\in[0,1]$, we have $h'(t)=\sum_{i=1}^nv_i\frac{\partial\hat{T}_i(\vec{\lambda}+tv)}{\partial t}=\sum_{i=1}^nv_i\sum_{k=1}^n\mathrm{J}_{i,k}(\vec{\lambda}+tv)v_k=\sum_{i=1}^n\mathrm{J}_{i,i}(\vec{\lambda}+tv)v_i^2\leq0$. Therefore, 
$\sum_{i=1}^n(\lambda_i'-\lambda_i)(\hat{T}_i(\vec{\lambda}')-\hat{T}_i(\vec{\lambda}))=h(1)-h(0)\leq 0$.

Moreover, when there exists a contestant $i$ such that $\lambda_i'\neq\lambda_i$, and $\max_{C\in\mathcal{A}(i,\mathcal{C})}\max\{\lambda_{\op_{i,C}},\lambda_{\op_{i,C}}'\}>0$, we know that $v_i^2>0$, and that for any $t\in(0,1)$, $\mathrm{J}_{i,i}(\vec{\lambda}+tv)<0$. It follows that $h'(t)\leq \mathrm{J}_{i,i}(\vec{\lambda}+tv)v_i^2<0$. Therefore, $\sum_{i=1}^n(\lambda_i'-\lambda_i)(\hat{T}_i(\vec{\lambda}')-\hat{T}_i(\vec{\lambda}))=h(1)-h(0)<0$.
\end{proof}

\subsection{Proof of \Cref{thm:unique-contestant-equilibrium-multiplier}}
\CEthmUniquenessofEMV*

\begin{proof}
We prove this theorem by contradiction. Suppose that there exists two distinct equilibrium multiplier vectors $\vec{\lambda}$ and $\vec{\lambda}'$.

For any contestant $i$, if $\lambda_i'>\lambda_i$, by \Cref{lemma:dual-solution-condition}, we have $\hat{T}_i(\vec{\lambda})\leq T_i$ and $\hat{T}_i(\vec{\lambda}')=T_i$. Therefore $(\lambda_i'-\lambda_i)(\hat{T}_i(\vec{\lambda}')-\hat{T}_i(\vec{\lambda}))\geq0$.
If $\lambda_i'<\lambda_i$, similarly we know $(\lambda_i'-\lambda_i)(\hat{T}_i(\vec{\lambda}')-\hat{T}_i(\vec{\lambda}))\geq0$. And if $\lambda_i'=\lambda_i$,  we obtain $(\lambda_i'-\lambda_i)(\hat{T}_i(\vec{\lambda}')-\hat{T}_i(\vec{\lambda}))=0$. Therefore, it always holds that $(\lambda_i'-\lambda_i)(\hat{T}_i(\vec{\lambda}')-\hat{T}_i(\vec{\lambda}))\geq0$, and  we get $\sum_{i\in[n]}(\lambda_i'-\lambda_i)(\hat{T}_i(\vec{\lambda}')-\hat{T}_i(\vec{\lambda}))\geq0$.

However, since $\lambda_i'\neq\lambda_i$, there exists a contestant $i$ such that $\lambda_i'\neq\lambda_i$.

If $\lambda_i=0$, take an arbitrary contest $C\in\mathcal{A}(i,\mathcal{C})$ and we have $\lambda_{\op_{i,C}}>0$,  since $\vec{\lambda}$ is valid.

If $\lambda_i>0$, we know $\hat{T}_i(\vec{\lambda}))=T_i>0$. Recall that $\hat{T}_i(\vec{\lambda}))=\sum_{C\in\mathcal{A}(i,\mathcal{C})}
\frac{R_C\alpha_{C,i}\alpha_{C,\op_{i,C}}\lambda_{\op_{i,C}}}
{(\alpha_{C,\op_{i,C}}\lambda_i+\alpha_{C,i}\lambda_{\op_{i,C}})^2}$. Therefore, there exists a contest $C\in\mathcal{A}(i,\mathcal{C})$ such that $\lambda_{\op_{i,C}}>0$.

In summary, we know that when $\lambda_i'\neq\lambda_i$, it holds that $\max_{C\in\mathcal{A}(i,\mathcal{C})}\lambda_{\op_{i,C}}>0$. By \Cref{lemma:monotone}, we get $\sum_{i\in[n]}(\lambda_i'-\lambda_i)(\hat{T}_i(\vec{\lambda}')-\hat{T}_i(\vec{\lambda}))<0$, which is a contradiction.
\end{proof}

\subsection{Proof of \Cref{thm:contestant-equilibrium-algorithm}}

We first show the following technical lemma to help us design the algorithm.
\begin{lemma}
\label{lemma:addtermcontestantequilibrium}
When $\mathcal{C}$ is given, for any $\vec{a}=(a_i)_{i\in[n]}\in\mathbb{R}_{\geq 0}^n$, there exists a unique multiplier vector $\vec{\lambda}\in\mathbb{R}_{\geq 0}^n$, satisfying that:
\begin{enumerate}
    \item $\vec{\lambda}$ is valid, and for any contestant $i\in[n]$ with $a_i>0$, $\lambda_i>0$;
    \item For any contestant $i$ with $\lambda_i>0$, $T_i=\hat{T}_i(\vec{\lambda})+\frac{a_i}{\lambda_i}$;
    \item For any contestant $i$ with $\lambda_i=0$, $T_i\geq\hat{T}_i(\vec{\lambda})$.
\end{enumerate}
\end{lemma}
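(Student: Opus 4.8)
The plan is to prove existence and uniqueness by adapting, respectively, the Brouwer fixed-point argument of \Cref{thm:contestant-equilibrium-existence} and the monotonicity argument of \Cref{thm:unique-contestant-equilibrium-multiplier}, everywhere replacing the induced demand $\hat{T}_i(\vec{\lambda})$ by the \emph{augmented demand} $\tilde{T}_i(\vec{\lambda})=\hat{T}_i(\vec{\lambda})+\frac{a_i}{\lambda_i}$ (interpreted as $\hat{T}_i(\vec{\lambda})$ when $a_i=0$). Conditions~2 and~3 then become exactly the complementary-slackness conditions $\tilde{T}_i(\vec{\lambda})=T_i$ when $\lambda_i>0$ and $\hat{T}_i(\vec{\lambda})\le T_i$ when $\lambda_i=0$, mirroring \Cref{lemma:dual-solution-condition}. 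The single genuinely new phenomenon is the blow-up $\frac{a_i}{\lambda_i}\to\infty$ as $\lambda_i\to0^+$, which is precisely what will enforce Condition~1, that $a_i>0$ forces $\lambda_i>0$.

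For existence, I would reuse the updating map $\phi$ from \Cref{thm:contestant-equilibrium-existence}, but strengthen its defining demand inequality from $h_i(\hat{\lambda}_i,\vec{\lambda})\le T_i$ to $h_i(\hat{\lambda}_i,\vec{\lambda})+\frac{a_i}{\hat{\lambda}_i}\le T_i$, keeping the same lower-bound (validity) constraint. Three checks are needed. First, $\phi$ maps the box $\Omega=[0,M]^n$ into itself for $M=\max_i\frac{1}{T_i}\bigl(\sum_{C\in\calA(i,\calC)}\frac{R_C\alpha_{C,i}}{\alpha_{C,\op_{i,C}}}+a_i\bigr)$: at $\hat{\lambda}_i=M$ the augmented demand is at most $\frac1M(\sum_C\frac{R_C\alpha_{C,i}}{\alpha_{C,\op_{i,C}}}+a_i)\le T_i$, so the feasible set is nonempty and bounded by $M$. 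Second, any feasible $\hat{\lambda}_i$ satisfies $\frac{a_i}{\hat{\lambda}_i}\le T_i$, hence $\phi_i(\vec{\lambda})\ge\frac{a_i}{T_i}$, which is strictly positive exactly when $a_i>0$; this yields Condition~1 at the fixed point. Third, continuity follows the same split into the region where the validity constraint binds (where $\phi_i=\underline{\lambda}_i$) and the region where the augmented-demand equation binds (where $\phi_i$ solves $h_i+\frac{a_i}{\hat{\lambda}_i}=T_i$, continuous by the implicit function theorem since the left-hand side is strictly decreasing in $\hat{\lambda}_i$ there), the two pieces agreeing on the overlap. Brouwer then gives a fixed point $\vec{\lambda}^*\in\Omega$, and verifying the three conditions is essentially the original argument: the $\overline{T}$-contradiction ruling out the validity constraint binding at any $i$ with $\lambda_i^*>0$ is unaffected, since the augmented demand only exceeds $\hat{T}$, so the bound $\hat{T}_{\op_{i,C}}(\vec{\lambda}^*)>\overline{T}/2\ge T_{\op_{i,C}}$ still contradicts $\tilde{T}_{\op_{i,C}}(\vec{\lambda}^*)\le T_{\op_{i,C}}$.

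For uniqueness, I would first extend \Cref{lemma:monotone} to $\tilde{T}$. Writing $\sum_i(\lambda_i'-\lambda_i)(\tilde{T}_i(\vec{\lambda}')-\tilde{T}_i(\vec{\lambda}))$ as the sum of $\sum_i(\lambda_i'-\lambda_i)(\hat{T}_i(\vec{\lambda}')-\hat{T}_i(\vec{\lambda}))\le0$ and $\sum_{i:a_i>0}(\lambda_i'-\lambda_i)(\frac{a_i}{\lambda_i'}-\frac{a_i}{\lambda_i})=-\sum_{i:a_i>0}a_i\frac{(\lambda_i'-\lambda_i)^2}{\lambda_i\lambda_i'}\le0$ shows the augmented map is still monotone, with the second sum strictly negative as soon as some coordinate with $a_i>0$ differs. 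Assuming two distinct solutions $\vec{\lambda},\vec{\lambda}'$, Conditions~2 and~3 give, term by term, $(\lambda_i'-\lambda_i)(\tilde{T}_i(\vec{\lambda}')-\tilde{T}_i(\vec{\lambda}))\ge0$ (the both-positive case contributes $0$ since both equal $T_i$), so the total is $\ge0$; combined with the extended monotonicity it must equal $0$. If some differing coordinate has $a_i>0$, the second sum is strictly negative, a contradiction; otherwise all differing coordinates have $a_i=0$, and for such an $i$ we have $\hat{T}_i(\vec{\lambda})=T_i>0$ whenever $\lambda_i>0$, so—exactly as in \Cref{thm:unique-contestant-equilibrium-multiplier}—validity forces an opponent with positive multiplier, triggering the strict case of \Cref{lemma:monotone} and contradicting the total being $0$.

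The main obstacle is the existence half, and specifically making the extra term $\frac{a_i}{\hat{\lambda}_i}$ coexist with the existing validity machinery without breaking continuity or the self-map property: the term is unbounded near $\hat{\lambda}_i=0$, so one must confirm that the feasible set of the updating map stays nonempty, that its minimizer sits at a strictly positive level exactly when $a_i>0$, and that the implicit-function continuity argument survives on the region where the augmented-demand equation binds. The uniqueness half, by contrast, is essentially mechanical once the monotone property is seen to be preserved—indeed strengthened—by the diagonal, sign-definite perturbation $\frac{a_i}{\lambda_i}$.
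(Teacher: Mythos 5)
Your proof is correct, but it takes a genuinely different route from the paper's. The paper proves this lemma by a \emph{reduction}: it builds a doubled instance with $2n$ contestants in which each original contestant $i$ is cloned into two copies with total effort $T_i$ each, every original contest $C$ is replaced by four half-prize copies pairing the clones, and, crucially, each $i$ with $a_i>0$ gets an extra ``self-contest'' between her two clones with prize $4a_i$ and unit biases; it then invokes \Cref{thm:contestant-equilibrium-existence} and \Cref{thm:unique-contestant-equilibrium-multiplier} as black boxes on the new instance, shows by a symmetry-breaking contradiction that the two clones must carry equal multipliers in the unique EMV, and observes that the self-contest then contributes exactly $\frac{4a_i\lambda_i}{(2\lambda_i)^2}=\frac{a_i}{\lambda_i}$ to each clone's demand, so the doubled EMV conditions collapse to Conditions 1--3 of the lemma (uniqueness follows because any solution of the lemma lifts back to an EMV of the doubled instance). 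You instead re-open both proofs and run them directly with the augmented demand $\hat{T}_i(\vec{\lambda})+\frac{a_i}{\lambda_i}$: your self-map bound with the enlarged $M$, the positivity bound $\phi_i\geq\frac{a_i}{T_i}$, the two-region continuity argument (where the blow-up of $\frac{a_i}{\hat{\lambda}_i}$ actually makes strict monotonicity automatic when $a_i>0$), the observation that the $\overline{T}$-contradiction survives because the augmented demand dominates $\hat{T}$, and the decomposition of the monotone sum into the original part plus the sign-definite diagonal term $-\sum_{i:a_i>0}a_i\frac{(\lambda_i'-\lambda_i)^2}{\lambda_i\lambda_i'}$ are all sound, and your case analysis in the uniqueness half (differing coordinate with $a_i>0$ versus all differing coordinates having $a_i=0$) correctly mirrors \Cref{thm:unique-contestant-equilibrium-multiplier}. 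The trade-off: the paper's reduction is shorter and reuses the existing theorems wholesale, at the cost of the cloning construction and the equal-multiplier argument for the clones; your direct approach is longer because it re-verifies the fixed-point machinery, but it is self-contained, gives a transparent interpretation of $\frac{a_i}{\lambda_i}$ as a monotone diagonal perturbation of the demand map (which even strengthens the monotonicity), and would generalize more readily to other perturbation terms that cannot be realized as a contest.
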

\begin{proof}
We prove this lemma by reducing finding such a multiplier to finding an EMV under another instance. 

Since, for any contestant $i\in[n]$ such that $\mathcal{A}(i,\mathcal{C})=\emptyset$ and $a_i=0$, we can always take $\lambda_i=0$, we assume that it holds for any $i\in[n]$ that $\mathcal{A}(i,\mathcal{C})\neq \emptyset$ or $a_i>0$.

We construct a new instance of the game of contestants in PLCCG, in which there are $n'=2n$ contestants, where each contestant $i\in[n]$ in the original instance is copied, and corresponds to the contestants $2i-1$ and $2i$ in the new instance, whose total efforts are $T'_{2i-1}=T'_{2i}=T_i$. The set of contests $\mathcal{C}'$ is constructed as follows:
\begin{enumerate}
    \item For each original contest $C\in\mathcal{C}$, let $S_C=\{i_1,i_2\}$ and construct four contests $C'_{C,2i_1-1,2i_2-1},C'_{C,2i_1-1,2i_2},$ $C'_{C,2i_1,2i_2-1},C'_{C,2i_1,2i_2}$ in the new instance, such that for $k_1\in\{2i_1-1,2i_1\}$ and $k_2\in\{2i_2-2,2i_2\}$, $C'_{C,k_1,k_2}$ is defined with $S_{C'_{C,k_1,k_2}}=\{k_1,k_2\}$, $R_{C'_{C,k_1,k_2}}=\frac12 R_C$, and $\alpha_{C'_{C,k_1,k_2},k_1}=\alpha_{C,i_1}$, $\alpha_{C'_{C,k_1,k_2},k_2}=\alpha_{C,i_2}$. Let $\mathcal{C}^{\prime,copy}=\{C'_{C,2i_1-1,2i_2-1},C'_{C,2i_1-1,2i_2},$ $C'_{C,2i_1,2i_2-1},C'_{C,2i_1,2i_2}:C\in\mathcal{C},S_C=\{i_1,i_2\}\}$.
    \item For each original contestant $i\in[n]$ such that $a_i>0$, construct a contest $C'_{self,i}$ in the new instance, with $S_{C'_{self,i}}=\{2i-1,2i\}$, $R_{C'_{self,i}}=4a_i$, and $\alpha_{C'_{self,i},2i-1}=\alpha_{C'_{self,i},2i}=1$. Let $\mathcal{C}^{\prime,self}=\{C'_{self,i}:i\in[n]\land a_i>0\}$.
    \item The set of contests in the new instance is $\mathcal{C}'=\mathcal{C}^{\prime,copy}\cup\mathcal{C}^{\prime,self}$.
\end{enumerate}

By \Cref{thm:contestant-equilibrium-existence} and \Cref{thm:unique-contestant-equilibrium-multiplier}, we know that there is a unique EMV under the new intance, denoted by $\vec{\lambda}^*\in\mathbb{R}_{\geq0}^{2n}$.
For any contest $C\in\mathcal{C}'$ and any contestant $i\in S_C$, define $\hat{x}'_{i,C}(\vec{\lambda}^*)=\frac{
R_C\alpha_{C,i}\alpha_{C,\op_{i,C}}\lambda^*_{\op_{i,C}}
}{
(\alpha_{C,\op_{i,C}}\lambda^*_i+\alpha_{C,i}\lambda^*_{\op_{i,C}})^2
}$, the same as $\hat{x}_{i,C}$. We also define $\hat{T}'_i(\vec{\lambda}^*)=\sum_{i\in\mathcal{A}(i,\mathcal{C}')}\hat{x}'_{i,C}(\vec{\lambda}^*)$.
We know that $\vec{\lambda}^*$ satisfies the conditions in \Cref{lemma:dual-solution-condition} with $\hat{T}'$.

Now we show that, for any contestant $i$, it holds that $\lambda^*_{2i-1}=\lambda^*_{2i}$. Suppose for contradiction that there is some contestant $i$, that $\lambda^*_{2i-1}\neq\lambda^*_{2i}$. Without loss of generality we can assume $\lambda^*_{2i-1}<\lambda^*_{2i}$. 
We discuss in two cases:

(a) If $\lambda^*_{2i-1}>0$, by Statement 2 of \Cref{lemma:dual-solution-condition}, we have $\hat{T}'_{2i-1}(\vec{\lambda}^*)=T'_{2i-1}=T_i$ and $\hat{T}'_{2i}(\vec{\lambda}^*)=T'_{2i}=T_i$. By the construction of $\mathcal{C}'$, 
\begin{align*}
&\hat{T}'_{2i-1}(\vec{\lambda}^*)=\sum_{C\in\mathcal{A}(i,\mathcal{C})}\frac{ \frac12 R_C\alpha_{C,i}\alpha_{C,\op_{i,C}}\lambda^*_{2\op_{i,C}-1}
}{
(\alpha_{C,\op_{i,C}}\lambda^*_{2i-1}+\alpha_{C,i}\lambda^*_{2\op_{i,C}-1})^2}
+\\
&\sum_{C\in\mathcal{A}(i,\mathcal{C})}\frac{ \frac12 
R_C\alpha_{C,i}\alpha_{C,\op_{i,C}}\lambda^*_{2\op_{i,C}}
}{
(\alpha_{C,\op_{i,C}}\lambda^*_{2i-1}+\alpha_{C,i}\lambda^*_{2\op_{i,C}})^2}+\frac{4a_i\lambda^*_{2i}}{(\lambda^*_{2i-1}+\lambda^*_{2i})^2}
\end{align*}
and 
\begin{align*}
&\hat{T}'_{2i}(\vec{\lambda}^*)=\sum_{C\in\mathcal{A}(i,\mathcal{C})}\frac{ \frac12 
R_C\alpha_{C,i}\alpha_{C,\op_{i,C}}\lambda^*_{2\op_{i,C}-1}
}{
(\alpha_{C,\op_{i,C}}\lambda^*_{2i}+\alpha_{C,i}\lambda^*_{2\op_{i,C}-1})^2}
+\\
&\sum_{C\in\mathcal{A}(i,\mathcal{C})}\frac{ \frac12 
R_C\alpha_{C,i}\alpha_{C,\op_{i,C}}\lambda^*_{2\op_{i,C}}
}{
(\alpha_{C,\op_{i,C}}\lambda^*_{2i}+\alpha_{C,i}\lambda^*_{2\op_{i,C}})^2}+\frac{4a_i\lambda^*_{2i-1}}{(\lambda^*_{2i-1}+\lambda^*_{2i})^2}.
\end{align*}
Since $\lambda^*_{2i-1}<\lambda^*_{2i}$, we have $\hat{T}'_{2i-1}(\vec{\lambda}^*)>\hat{T}'_{2i}(\vec{\lambda}^*)$, which contradicts with that $\hat{T}'_{2i-1}(\vec{\lambda}^*)=\hat{T}'_{2i}(\vec{\lambda}^*)=T_i$.

(b) If $\lambda^*_{2i-1}=0$, similarly we have $\hat{T}'_{2i-1}(\vec{\lambda}^*)>\hat{T}'_{2i}(\vec{\lambda}^*)$. However, by \Cref{lemma:dual-solution-condition}, we have $\hat{T}'_{2i-1}(\vec{\lambda}^*)\leq T_i=\hat{T}'_{2i}(\vec{\lambda}^*)$, which is a contradiction.

Now we can construct $\vec{\lambda}\in\mathbb{R}_{\geq 0}^n$ such that for each contestant $i$, $\lambda_i=\lambda^*_{2i-1}=\lambda^*_{2i}$. We show that $\vec{\lambda}$ satisfies the requirements in this lemma. For any contestant $i$, we discuss in two cases:

(a) $a_i>0$. Since $\vec{\lambda}^*$ is valid under $\mathcal{C}'$, we know that $0<\sum_{k\in C'_{self,i}}\lambda^*_k=\lambda^*_{2i-1}+\lambda^*_{2i}$, so $\lambda_i>0$. We can calculate that $\hat{T}'_{2i-1}(\vec{\lambda}^*)=\sum_{C\in\mathcal{A}(i,\mathcal{C})}\frac{ \frac12 
R_C\alpha_{C,i}\alpha_{C,\op_{i,C}}\lambda_{\op_{i,C}}
}{
(\alpha_{C,\op_{i,C}}\lambda_{i}+\alpha_{C,i}\lambda^*_{\op_{i,C}})^2}+
\sum_{C\in\mathcal{A}(i,\mathcal{C})}\frac{ \frac12 
R_C\alpha_{C,i}\alpha_{C,\op_{i,C}}\lambda_{\op_{i,C}}
}{
(\alpha_{C,\op_{i,C}}\lambda_{i}+\alpha_{C,i}\lambda^*_{\op_{i,C}})^2}+\frac{4a_i\lambda_i}{\lambda_i^2}=\hat{T}_i(\vec{\lambda})+\frac{a_i}{\lambda_i}$. Since $\hat{T}'_{2i-1}(\vec{\lambda}^*)=T'_{2i-1}=T_i$, the requirements are satisfied.

(b) $a_i=0$. We can calculate that $\hat{T}'_{2i-1}(\vec{\lambda}^*)=\hat{T}_i(\vec{\lambda})$.  Since $\hat{T}'_{2i-1}(\vec{\lambda}^*)\leq T'_{2i-1}=T_i$, the requirements are satisfied.

In summary, $\vec{\lambda}$ satisfies the requirements in this lemma.

Finally we show such $\vec{\lambda}$ is unique. For any $\vec{\lambda}$ satisfying the requirements, it is not hard to see that if we construct $\vec{\lambda}^*$ such that for any contestant $i$, $\lambda^*_{2i-1}=\lambda^*_{2i}=\lambda_i$, it holds that $\vec{\lambda}^*$ is an EMV in the above new instance. Since the EMV is unique, we have that $\vec{\lambda}$ is unique.
\end{proof}

Now, we give our algorithm which can output an $\epsilon$-approximate contestant equilibrium.

\begin{algorithm}[htbp!]
\caption{Algorithm to Compute $\epsilon$-Contestant Equilibrium}\label{alg:Alg1}
\begin{algorithmic}[1]
\STATE \textbf{Input} $n, T_1,\cdots,T_n,\mathcal{C}$, precision $\epsilon$, step size $\gamma$;
\STATE $t\gets 0$;
\STATE $\epsilon'\gets \frac14\epsilon$;
\STATE $\lambda_i^{(1)}\gets 1$ for $i=1,\cdots, n$;
\STATE $a_i^{(1)}\gets \epsilon'^2T_i\min_{C\in\mathcal{A}(i,\mathcal{C})}\frac{R_C}{T_i+\frac{\alpha_{C,\op_{i,C}}}{\alpha_{C,i}}T_{\op_{i,C}}}$ for $i=1,\cdots, n$;
\REPEAT
    \STATE $t\gets t+1$;
    \STATE $Z_i^{(t)}\gets \hat{T}_i(\vec{\lambda}^{(t)})-T_i$ for $i=1,\cdots, n$;
    \STATE $\lambda_i^{(t+1)}\gets \lambda_i^{(t)}+a_i/\lambda_i^{(t)}+\gamma Z_i^{(t)}$ for $i=1,\cdots, n$;
\UNTIL $\max_{i\in[n]}|\frac{Z_i^{(t)}}{T_i}|\leq \epsilon'$
\STATE $x_{i,C}\gets \hat{x}_{i,C}((1+\epsilon')\vec{\lambda}^{(t)})$ for all $C\in\mathcal{C}$ and $i\in S_C$;
\STATE \textbf{Output} $\vec{x}$;

\end{algorithmic}
\end{algorithm}
\CEthmAlgorithm*
\begin{proof}
Without loss of generality, we can assume that for any contestant $i$, $\mathcal{A}(i,\mathcal{C})\neq\emptyset$.
We firstly define some notions. 
Let $\overline{T}=\max_{i\in[n]}T_i$ and $\epsilon'=\frac14\epsilon$.
For each $i\in[n]$, define $a_i=\epsilon'^2T_i\min_{C\in\mathcal{A}(i,\mathcal{C})}\frac{R_C}{T_i+\frac{\alpha_{C,\op_{i,C}}}{\alpha_{C,i}}T_{\op_{i,C}}}$.
Take $\gamma<\min_{i\in[n]}\frac{L_i}{T_i}$. 

The algorithm is presented in \Cref{alg:Alg1}.

Now we prove that \Cref{alg:Alg1} finds an $\epsilon$-approximate contestant equilibrium in polynomial time.

We prove the following claims:
\begin{description}
    \item [Claim 1] For any $\vec{\lambda}\in\mathbb{R}_{>0}^n$ such that $\hat{T_i}(\vec{\lambda})+\frac{a_i}{\lambda_i}\in[(1-2\epsilon')T_i,T_i]$ for all $i\in[n]$, construct $\vec{x}$ such that $x_{i,C}=\vec{x}_{i,C}(\vec{\lambda})$, then $x_{i,C}$ is an $\epsilon$-contestant equilibrium.
    \item [Claim 2] Define $L_i=\min\{\frac{a_i}{2T_i},1\}$, and $U_i=\max\{\frac{\sum_{\mathcal{A}(i,\mathcal{C})}\frac12R_C+2a_i}{T_i},1\}$, and define $\Omega^+=\{\vec{\lambda}\in\mathbb{R}_{>0}^n:\forall i\in[n],~L_i\leq \lambda_i\leq U_i\}$. In \Cref{alg:Alg1}, when $\gamma\leq \min\{\frac{L_i}{T_i},\frac{U_i}{\sum_{C\in\mathcal{A}(i,\mathcal{C})}\frac{R_CT_i}{a_i}+2T_i}\}$, if $\vec{\lambda}^{(t)}\in\Omega^+$, then $\vec{\lambda}^{(t+1)}\in\Omega^+$.
    \item [Claim 3] Define $L=\min_{i\in[n]} L_i$ and $U=\max_{i\in[n]} U_i$, and $\overalpha=\max_{C\in\mathcal{C}}\frac{\max_{i\in S_C}\alpha_{C,i}}{\min_{i\in S_C}\alpha_{C,i}}$. Define $\hat{T}^+_i(\vec{\lambda})=\hat{T}_i(\vec{\lambda})+\frac{a_i}{\lambda_i}$. For any $\vec{\lambda}\in\Omega^+$, for any contestants $i,k$ that $i\neq k$, $|\frac{\partial{\hat{T}^+_i(\vec{\lambda})}}{\partial\lambda_k}|\leq \sum_{C\in\mathcal{C}:S_C=\{i,k\}}\frac{R_C U\overalpha^3}{L^3}=:\rho_{i,k}$, and $\frac{\partial{\hat{T}^+_i(\vec{\lambda})}}{\partial\lambda_i}\leq -2\sum_{C\in\mathcal{A}(i,\mathcal{C})}\frac{R_CL}{\overalpha^3U^3}-\frac{a_i}{U^2}=:\mu_i$.
    \item [Claim 4] In \Cref{alg:Alg1}, when the step size $\gamma$ is small enough, if $\vec{\lambda}^{(t)}\in\Omega^+$, and $\max_{i\in[n]}|\frac{Z_i^{(t)}}{T_i}|> \eta$, then $\sum_{i\in[n]}(Z_i^{(t+1)})^2\leq\sum_{i\in[n]}(Z_i^{(t)})^2-\frac12M_4\eta^2M_5\gamma$, where $M_4,M_5$ are defined later.
\end{description}

To prove Claim 1, for each contestant $i\in[n]$, we discuss in two cases:

(a) $\frac{a_i}{\lambda_i}>\epsilon' T_i$. In this case, we have $\lambda_i<\frac{a_i}{\epsilon' T_i}=\epsilon'\min_{C\in\mathcal{A}(i,\mathcal{C})}\frac{R_C}{T_i+\frac{\alpha_{C,\op_{i,C}}}{\alpha_{C,i}}T_{\op_{i,C}}}$.
For any $C\in\mathcal{A}(i,\mathcal{C})$, observe that $\frac{\hat{x}_{i,C}(\vec{\lambda})}{\alpha_{C,\op_{i,C}}}+\frac{\hat{x}_{\op_{i,C},C}(\vec{\lambda})}{\alpha_{C,i}}=\frac{R_C}{\alpha_{C,i}\lambda_{\op_{i,C}}+\alpha_{C,\op_{i,C}}\lambda_{i}}$, and by assumption we have $\hat{x}_{i,C}(\vec{\lambda})\leq \hat{T}_i(\vec{\lambda})\leq T_i$, and $\hat{x}_{\op_{i,C},C}(\vec{\lambda})\leq \hat{T}_{\op_{i,C}}(\vec{\lambda})\leq T_{\op_{i,C}}$, so $\frac{R_C}{\alpha_{C,i}\lambda_{\op_{i,C}}+\alpha_{C,\op_{i,C}}\lambda_{i}}\leq \frac{T_i}{\alpha_{C,\op_{i,C}}}+\frac{T_{\op_{i,C}}}{\alpha_{C,i}}$, and $\alpha_{C,i}\lambda_{\op_{i,C}}+\alpha_{C,\op_{i,C}}\lambda_{i}\geq \frac{R_C}{\frac{T_i}{\alpha_{C,\op_{i,C}}}+\frac{T_{\op_{i,C}}}{\alpha_{C,i}}}$. Therefore, we get $\hat{p}_{i,C}(\vec{\lambda})=\frac{\alpha_{C,i}\lambda_{\op_{i,C}}}{\alpha_{C,i}\lambda_{\op_{i,C}}+\alpha_{C,\op_{i,C}}\lambda_{i}}
=1-\frac{\alpha_{C,\op_{i,C}}\lambda_{i}}{\alpha_{C,i}\lambda_{\op_{i,C}}+\alpha_{C,\op_{i,C}}\lambda_{i}}
>1-\frac{\alpha_{C,\op_{i,C}}\epsilon'\frac{R_C}{T_i+\frac{\alpha_{C,\op_{i,C}}}{\alpha_{C,i}}T_{\op_{i,C}}}}{\frac{R_C}{\frac{T_i}{\alpha_{C,\op_{i,C}}}+\frac{T_{\op_{i,C}}}{\alpha_{C,i}}}}=1-\epsilon'$. This means that 
$u_i^{contestant}(\vec{\mathcal{C}},\vec{x})\geq (1-\epsilon')\sum_{C\in\mathcal{A}(i,\mathcal{C})}R_C\geq (1-\epsilon')u_i^{contestant}(\vec{\mathcal{C}},(x_i',\vec{x}_{-i}))$ for any $x_i'$.

(b) $\frac{a_i}{\lambda_i}\leq \epsilon' T_i$. In this case, observe that, by \Cref{lemma:bestresponse-lagrange}, we have $u_i^{contestant}(\vec{\mathcal{C}},(x_i,\vec{x}_{-i}))$ is equal to the optimal value of the following optimization:
\begin{align*}
\max_{x'_{i,C}\geq0 \text{ for }C\in\mathcal{A}(i,\mathcal{C})} \quad & \sum_{C\in\mathcal{A}(i,\mathcal{C})}R_C\cdot p_{i,C}(x'_i,\vec{x}_{-i}),\label{eq:contestant-bestresponse-optimization-problem}\\
\mbox{s.t.} \quad & \sum_{C\in \mathcal{A}(i,\mathcal{C})}x'_{i,C}\leq \hat{T}_i(\vec{\lambda}).\nonumber
\end{align*}
In other words, $x_i$ is the best response of contestant $i$ if we replace her total effort by $\hat{T}_i(\vec{\lambda})$. Observe that each $p_{i,C}(x'_i,\vec{x}_{-i})$ is concave in $x'_i$, combining this with $\hat{T}_i(\vec{\lambda})= (1-2\epsilon')T_i-\frac{a_i}{\lambda_i}\geq (1-3\epsilon')T_i$, we have $u_i^{contestant}(\vec{\mathcal{C}},(x_i,\vec{x}_{-i}))\geq (1-3\epsilon')\max_{x_i':\sum_{C\in \mathcal{A}(i,\mathcal{C})}x'_{i,C}\leq T_i}u_i^{contestant}(\vec{\mathcal{C}},(x_i',\vec{x}_{-i}))$.

In summary, we obtain that for any contestant $i\in[n]$ and any feasible strategy $x_i'$, $u_i^{contestant}(\vec{\mathcal{C}},\vec{x})\geq (1-\epsilon')\sum_{C\in\mathcal{A}(i,\mathcal{C})}R_C\geq (1-3\epsilon')u_i^{contestant}(\vec{\mathcal{C}},(x_i',\vec{x}_{-i}))\geq (1-\epsilon)u_i^{contestant}(\vec{\mathcal{C}},(x_i',\vec{x}_{-i}))$. Therefore $\vec{x}$ is an $\epsilon$-approximate contestant equilibrium.

For Claim 2, suppose $\vec{\lambda}^{(t)}\in\Omega^+$. For any $i\in[n]$, we prove that $\lambda_i^{(t+1)}\in[L_i,U_i]$.

For the lower bound, we discuss in two cases:

(a) If $\lambda_i^{(t)}\leq 2L_i$, we have $Z_i^{(t)}=\hat{T}_i(\vec{\lambda}^{(t)})+\frac{a_i}{\lambda_i^{(t)}}-T_i\geq \frac{a_i}{\lambda_i^{(t)}}-T_i\geq \frac{a_i}{2L_i}-T_i\geq 0$. Therefore, it holds that  $\lambda_i^{(t+1)}=\lambda_i^{(t)}+\gamma Z_i^{(t)}\geq \lambda_i^{(t)}\geq L_i$.

(b) If $\lambda_i^{(t)}> 2L_i$, observing that $Z_i^{(t)}\geq -T_i$, we have $\lambda_i^{(t+1)}=\lambda_i^{(t)}+\gamma Z_i^{(t)}\geq \lambda_i^{(t)}-\gamma T_i\geq \lambda_i^{(t)}-L_i>L_i$.

For the upper bound, we also discuss in two cases:

If $\lambda_i^{(t)}\geq \frac12U_i$, then we have $Z_i^{(t)}=\hat{T}_i(\vec{\lambda}^{(t)})+\frac{a_i}{\lambda_i^{(t)}}-T_i\leq \sum_{C\in\mathcal{A}(i,\mathcal{C})}\frac{R_C}{4\lambda_i^{(t)}}+\frac{a_i}{\lambda_i^{(t)}}-T_i\leq \frac{2}{U_i}(\sum_{C\in\mathcal{A}(i,\mathcal{C})}\frac{R_C}{4}+a_i)-T_i\leq 0$, so $\lambda_i^{(t+1)}=\lambda_i^{(t)}+\gamma Z_i^{(t)}\leq \lambda_i^{(t)}\leq U_i$.

If $\lambda_i^{(t)}<\frac12U_i$, by assumption $\lambda_i^{(t)}\geq L_i$, so we have $Z_i^{(t)}\leq \sum_{C\in\mathcal{A}(i,\mathcal{C})}\frac{R_C}{4\lambda_i^{(t)}}+\frac{a_i}{\lambda_i^{(t)}}-T_i\leq \frac{1}{L_i}(\sum_{C\in\mathcal{A}(i,\mathcal{C})}\frac{R_C}{4}+a_i)-T_i=\sum_{C\in\mathcal{A}(i,\mathcal{C})}\frac{R_CT_i}{2a_i}+T_i$. 
We have that $\gamma Z_i^{(t)}\leq \frac{U_i}{2}$.
Therefore we have $\lambda_i^{(t+1)}=\lambda_i^{(t)}+\gamma Z_i^{(t)}\leq \lambda_i^{(t)}+\frac{U_i}{2}<U_i$.

In summary, when $\vec{\lambda}^{(t)}\in\Omega^+$, we have $\vec{\lambda}^{(t+1)}\in\Omega^+$.

To prove claim 3, we calculate for any contestants $i,k$ such that $i\neq k$ the partial derivatives:
\begin{align*}
\frac{\partial{\hat{T}^+_i(\vec{\lambda})}}{\partial\lambda_k}=&\sum_{C\in\mathcal{C}:S_C=\{i,k\}}\frac{\partial \hat{x}_{i,C}(\vec{\lambda})}{\partial \lambda_{k}}\\
=&\sum_{C\in\mathcal{C}:S_C=\{i,k\}}R_C\frac{\frac{\lambda_{i}}{\alpha_{C,i}}-\frac{\lambda_{k}}{\alpha_{C,k}}}{\alpha_{C,i}\alpha_{C,k}(\frac{\lambda_{i}}{\alpha_{C,i}}+\frac{\lambda_{k}}{\alpha_{C,k}})^3}.
\end{align*}
We have $|\frac{\partial{\hat{T}^+_i(\vec{\lambda})}}{\partial\lambda_k}|\leq \sum_{C\in\mathcal{C}:S_C=\{i,k\}}\frac{R_C U\overalpha^3}{L^3}$. 
Also we can see that $\frac{\partial{\hat{T}^+_k(\vec{\lambda})}}{\partial\lambda_i}=-\frac{\partial{\hat{T}^+_i(\vec{\lambda})}}{\partial\lambda_k}$.

For any contestant $i$, we have
\begin{align*}
\frac{\partial{\hat{T}^+_i(\vec{\lambda})}}{\partial\lambda_i}=&\sum_{C\in\mathcal{A}(i,\mathcal{C})}\frac{\partial \hat{x}_{i,C}(\vec{\lambda})}{\partial \lambda_{\op_{i,C}}}+\frac{\partial \frac{a_i}{\lambda_i}}{\partial \lambda_i}\\
=&\sum_{C\in\mathcal{A}(i,\mathcal{C})}R_C\frac{-2\frac{\lambda_{\op_{i,C}}}{\alpha_{C,\op_{i,C}}}}{\alpha_{C,i}^2(\frac{\lambda_{i}}{\alpha_{C,i}}+\frac{\lambda_{\op_{i,C}}}{\alpha_{C,\op_{i,C}}})^3}-\frac{a_i}{\lambda_i^2}.
\end{align*}

Furthermore, we derive $\frac{\partial{\hat{T}^+_i(\vec{\lambda})}}{\partial\lambda_i}\leq -2\sum_{C\in\mathcal{A}(i,\mathcal{C})}\frac{R_CL}{\overalpha^3U^3}-\frac{a_i}{U^2}\leq -2\sum_{C\in\mathcal{A}(i,\mathcal{C})}\frac{R_CL}{\overalpha^3U^3}$.

To prove claim 4, for any $i\in[n]$ we can calculate that
\begin{align*}
&Z_i^{(t+1)}-Z_i^{(t)}\\
=&\hat{T}^+_i(\vec{\lambda}^{(t+1)})-\hat{T}^+_i(\vec{\lambda}^{(t)})\\
=&\hat{T}^+_i(\vec{\lambda}^{(t)}+\gamma \vec{Z}^{(t)})-\hat{T}^+_i(\vec{\lambda}^{(t)})
\end{align*}
Denote the Jacobian matrix of $\hat{T}^+(\vec{\lambda})$ as $J(\vec{\lambda})$ where $J_{i,k}(\vec{\lambda})=\frac{\partial{\hat{T}^+_i(\vec{\lambda})}}{\partial\lambda_k}$.
By Lagrange's mean value theorem, there is some $\xi_i\in[0,1]$ such that $Z_i^{(t+1)}-Z_i^{(t)}=\hat{T}^+_i(\vec{\lambda}^{(t)}+\gamma \vec{Z}^{(t)})-\hat{T}^+_i(\vec{\lambda}^{(t)})=\sum_{k\in[n]}J_{i,k}(\vec{\lambda}+\xi_i\gamma\vec{Z}^{(t)})\gamma Z_k^{(t)}$. Let $\vec{\lambda}_{\xi_i}$ denote $\vec{\lambda}+\xi_i\gamma\vec{Z}^{(t)}$, then $\vec{\lambda}_{\xi_i}\in\Omega^+$. With some calculation we have $|J_{i,k}(\vec{\lambda}_{\xi_i})-J_{i,k}(\vec{\lambda})|\leq(\sum_{C\in\mathcal{C}}R_C\frac{12U\overalpha^4}{L^4}+\sum_{i\in[n]}\frac{2a_i}{L^3})\sum_{k\in[n]}\gamma |Z_k^{(t)}|$ for all $i\in[n],k\in[n]$.

Recall that for any $i\in[n]$, for any $\vec{\lambda}^{(t)}\in\Omega^+$, we have $Z_i^{(t)}\leq \sum_{C\in\mathcal{A}(i,\mathcal{C})}\frac{R_CT_i}{2a_i}+T_i$, and $Z_i^{(t)}\geq-T_i$.

Define $M_1=\max_{i,k}\rho_{i,k},M_2=\sum_{C\in\mathcal{C}}R_C\frac{12U\overalpha^4}{L^4}+\sum_{i\in[n]}\frac{2a_i}{L^3}$, $M_3=\max_{i\in[n]}\sum_{C\in\mathcal{A}(i,\mathcal{C})}\frac{R_CT_i}{2a_i}+T_i$, $M_4=\min_{i\in[n]}-\mu_i$, $M_5=\sum_{i\in[n]}T_i$.
Now we calculate
\begin{align*}
    &\sum_{i\in[n]}(Z_i^{(t+1)})^2-\sum_{i\in[n]}(Z_i^{(t)})^2\\
=&\sum_{i\in[n]}(2Z_i^{(t)}(Z_i^{(t+1)}-Z_i^{(t)})+(Z_i^{(t+1)}-Z_i^{(t)})^2)\\
=&\sum_{i\in[n]}(2Z_i^{(t)}(Z_i^{(t+1)}-Z_i^{(t)})+(Z_i^{(t+1)}-Z_i^{(t)})^2)\\
=&\sum_{i\in[n]}(2Z_i^{(t)}\sum_{k\in[n]}J_{i,k}(\vec{\lambda}_{\xi_i})\gamma Z_k^{(t)}+(\sum_{k\in[n]}J_{i,k}(\vec{\lambda}_{\xi_i})\gamma Z_k^{(t)})^2)\\
=& \sum_{i\in[n]}(2Z_i^{(t)}\sum_{k\in[n]}J_{i,k}(\vec{\lambda}_{\xi_i})\gamma Z_k^{(t)}+(\sum_{k\in[n]}J_{i,k}(\vec{\lambda}_{\xi_i})\gamma Z_k^{(t)})^2)\\
\leq& \gamma\vec{Z}^{(t)T}J(\vec{\lambda}) \vec{Z}^{(t)}+\sum_{i\in[n]}(2Z_i^{(t)}M_1(\sum_{k\in[n]}\gamma |Z_k^{(t)}|)^2\\
&+(\sum_{k\in[n]}M_2\gamma Z_k^{(t)})^2)\\
=& \gamma\sum_{i\in[n]}\mu_i(Z_i^{(t)})^2+
\sum_{i\in[n]}(2Z_i^{(t)}M_1(\sum_{k\in[n]}\gamma |Z_k^{(t)}|)^2\\
&+(\sum_{k\in[n]}M_2\gamma Z_k^{(t)})^2)\\
\leq& -\gamma M_4\sum_{i\in[n]}(Z_i^{(t)})^2+
\gamma^2(2M_1+M_2)(\sum_{k\in[n]}Z_k^{(t)})^2\\
\leq& -\gamma M_4\eta^2M_5+
\gamma^2(2M_1+M_2)M_3^2
\end{align*}

When we take $\gamma\leq \frac{\eta^2M_4M_5}{2(2M_1+M_2)n^2M_3^2}$, it holds that \begin{align*}
    \sum_{i\in[n]}(Z_i^{(t+1)})^2-\sum_{i\in[n]}(Z_i^{(t)})^2\leq -\frac12M_4\eta^2M_5\gamma.
    \end{align*}

By Claim 4, we know that when $\gamma$ is taken small enough but polynomial in $\epsilon$ and the input size, $\sum_{i\in[n]}(Z_i^{(t)})^2$ monotonely decreases by at least $-\frac12M_4\eta^2M_5\gamma$ after each iteration. Also, we have $\sum_{i\in[n]}(Z_i^{(1)})^2\leq n^2M_3^2$. Since all numbers are polynomial in $\epsilon$ and the input sizes, we know that the algorithm finds in polynomial time a $\vec{\lambda}$ such that $|\frac{\hat{T}_i(\vec{\lambda})-T_i}{T_i}|\leq \epsilon'$ holds for every $i\in[n]$.
It's not hard to see that for any $i\in[n]$, $\frac{\hat{T}_i((1+\epsilon)\vec{\lambda})}{T_i}\in[(1-\epsilon)/(1+\epsilon),1]$, and by claim 1 we know that $\vec{x}$ is an $\epsilon$-approximate contestant equilibrium.

\end{proof}

\section{Missing Proofs in \Cref{sec:indivisible}}

\subsection{Proof of \Cref{thm:ind-SPE-not-exist}}
\INDSPEnotexist*

\begin{proof}
Consider an instance in the indivisible prize model with $n=3$ contestants and $m=2$ designers, where $T_1=T_2=T_3=1$ and $B_1=B_2=1$. We prove that there is no designer equilibrium by contradiction. Suppose that $\vec{\mathcal{C}}$ is a designer equilibrium and $(\vec{\mathcal{C}},\vec{x})$ is an SPE. Let $\vec{\lambda}$ be the equilibrium multiplier vector of contestants under $\vec{\mathcal{C}}$.

We discuss in two cases:

(a) If $S_{C_1}=S_{C_2}$, i.e., the same participants, without loss of generality, assume $S_{C_1}=S_{C_2}=\{1,2\}$. We calculate the contestant equilibrium. 

Firstly, we show that it must hold that $\lambda_1>0$ and $\lambda_2>0$. Suppose for contradiction that $\lambda_1=0$. We know $\lambda_2>0$, which implies that $\hat{T}_2(\vec{\lambda})=T_i>0$ by \Cref{lemma:dual-solution-condition}. It contradicts with that $\hat{T}_2(\vec{\lambda})=\frac{R_{C_1}\alpha_{C_1,1}\alpha_{C_1,2}\lambda_1}{(\alpha_{C_1,1}\lambda_2+\alpha_{C_1,2}\lambda_1)^2}+\frac{R_{C_2}\alpha_{C_2,1}\alpha_{C_2,2}\lambda_1}{(\alpha_{C_2,1}\lambda_2+\alpha_{C_2,2}\lambda_1)^2}=0$. Similarly, we get $\lambda_2>0$.

By \Cref{lemma:dual-solution-condition}, we know that 
\begin{align*}
1 & =T_1=\hat{T}_1(\vec{\lambda})\\
&=\frac{R_{C_1}\alpha_{C_1,1}\alpha_{C_1,2}\lambda_2}{(\alpha_{C_1,1}\lambda_2+\alpha_{C_1,2}\lambda_1)^2}+\frac{R_{C_2}\alpha_{C_2,1}\alpha_{C_2,2}\lambda_2}{(\alpha_{C_2,1}\lambda_2+\alpha_{C_2,2}\lambda_1)^2}
\end{align*}
and 
\begin{align*}
    1&=T_2=\hat{T}_2(\vec{\lambda})\\
    &=\frac{R_{C_1}\alpha_{C_1,1}\alpha_{C_1,2}\lambda_1}{(\alpha_{C_1,1}\lambda_2+\alpha_{C_1,2}\lambda_1)^2}+\frac{R_{C_2}\alpha_{C_2,1}\alpha_{C_2,2}\lambda_1}{(\alpha_{C_2,1}\lambda_2+\alpha_{C_2,2}\lambda_1)^2}.
\end{align*} It follows that $\lambda_1=\lambda_2$. Substituting $\lambda_2$ by $\lambda_1$, we have 
$\lambda_1=\lambda_2=\frac{R_{C_1}\alpha_{C_1,1}\alpha_{C_1,2}}{(\alpha_{C_1,1}+\alpha_{C_1,2})^2}+\frac{R_{C_2}\alpha_{C_2,1}\alpha_{C_2,2}}{(\alpha_{C_2,1}+\alpha_{C_2,2})^2}$. 

For each designer $j=1,2$, let $h(C_j)$ denote $\frac{R_{C_j}\alpha_{C_j,1}\alpha_{C_j,2}}{(\alpha_{C_j,1}+\alpha_{C_j,2})^2}$, it holds that \begin{align*}
    x_{1,C_1}=x_{2,C_1}&=\frac{h(C_1)}{h(C_1)+h(C_2)},\\
    x_{1,C_2}=x_{2,C_2}&=\frac{h(C_2)}{h(C_1)+h(C_2)}.
\end{align*}
Note that $R_{C_j}\leq B_j=1$ and $\frac{\alpha_{C_2,1}\alpha_{C_2,2}}{(\alpha_{C_2,1}+\alpha_{C_2,2})^2}\leq \frac14$, which means that $h(C_j)\in(0,\frac14]$. Since $u^D_j(\vec{x})=\frac{2h(C_1)}{h(C_1)+h(C_2)}$, if there is a $h(C_j)<\frac14$ for some $j\in\{1,2\}$, designer $j$ can improve her utility by increasing $h(C_j)$, which will contradict with that $\vec{\mathcal{C}}$ is a designer equilibrium. Thus, we know $h(C_1)=h(C_2)=\frac14$, i.e., $R_{C_1}=R_{C_2}=1$, and $\alpha_{C_1,1}=\alpha_{C_1,2}$, $\alpha_{C_2,1}=\alpha_{C_2,2}$. Both designers get a utility of $1$.

Let designer $2$ deviate to $C_2'$, where $S_{C_2'}=\{1,3\}$, $\alpha_{C_2',1}=\alpha_{C_2',2}$, and $R_{C_2'}=1$. Let $\vec{\mathcal{C}}'$ denote the new strategy profile of designers. It is easy to verify that the unique contestant equilibrium under $\vec{\mathcal{C}}'$ is $x'_{1,C_1}=x'_{1,C_2'}=\frac12,x'_{2,C_1}=1,x'_{3,C_2'}=1$, with equilibrium multiplier vector $\lambda_1'=\frac49$ and $\lambda_2'=\lambda_3'=\frac29$. We can see that designer $2$'s utility increases to $\frac32>1$. Therefore $\vec{\mathcal{C}}$ cannot be a designer equilibrium if $S_{C_1}=S_{C_2}$.

(b) If $S_{C_1}\neq S_{C_2}$, without loss of generality, assume that $S_{C_1}=\{1,2\},S_{C_2}=\{1,3\}$. We calculate the contestant equilibrium. We know that $\lambda_1>0$ by similar argument to (a). Additionally, since $\hat{x}_{1,C_1}(\vec{\lambda})+\hat{x}_{1,C_2}(\vec{\lambda})=T_1>0$, we get that $\lambda_2+\lambda_3>0$. 

We claim that for any designer $j\in\{1,2\}$, her utility must be at least $\frac32$. Otherwise, suppose without loss of generality that designer $1$'s utility is less than $\frac32$. The, she can deviate to $C_1'$, where $R_{C_1'}=R_{C_2}$, $S_{C_1'}=S_{C_1}=\{1,2\}$, $\alpha_{C_1',1}=\alpha_{C_2,1}$, $\alpha_{C_1',2}=\alpha_{C_2,3}$. It is easy to verify that the contestant equilibrium is $x'_{1,C_1'}=x'_{1,C_2}=\frac12,x'_{2,C_1'}=1,x'_{3,C_2}=1$. This increases designer $1$'s utility to $\frac32$, contradicting with that $\vec{\mathcal{C}}$ is a designer equilibrium. The same holds for designer $2$ symmetrically. In other words we have $x_{1,C_1}+x_{2,C_1}=\frac32,x_{1,C_2}+x_{3,C_2}=\frac32$, which implies that $x_{1,C_1}=x_{1,C_2}=\frac12$, $x_{2,C_1}=x_{3,C_2}=1$. 

When $x_{1,C_1}>0$ and $x_{1,C_2}>0$, we know that 
\begin{align*}
\lambda_1&=\frac{R_{C_1}\alpha_{C_1,1}\alpha_{C_1,2}x_{2,C_1}}{(\alpha_{C_1,1}x_{1,C_1}+\alpha_{C_1,2}x_{2,C_1})^2}\\
&=\frac{R_{C_2}\alpha_{C_2,1}\alpha_{C_2,3}x_{3,C_2}}{(\alpha_{C_2,1}x_{1,C_2}+\alpha_{C_2,3}x_{3,C_2})^2}.
\end{align*} When $\alpha_{C_1}$ and $\alpha_{C_2}$ is slightly perturbed, it will still hold that in the contestant equilibrium $x_{2,C_1}=x_{3,C_2}=1$. Replacing $x_{2,C_1}$ and $x_{3,C_2}$ by $1$, we get $$\frac{R_{C_1}\alpha_{C_1,1}\alpha_{C_1,2}}{(\alpha_{C_1,1}x_{1,C_1}+\alpha_{C_1,2})^2}=\frac{R_{C_2}\alpha_{C_2,1}\alpha_{C_2,3}}{(\alpha_{C_2,1}x_{1,C_2}+\alpha_{C_2,3})^2}$$ Then, we know 
\begin{align*}
    &R_{C_1}(\frac{\alpha_{C_2,1}}{\alpha_{C_2,3}}x_{1,C_2}^2+2x_{1,C_2}+\frac{\alpha_{C_2,3}}{\alpha_{C_2,1}})\\
    =&R_{C_2}(\frac{\alpha_{C_1,1}}{\alpha_{C_1,2}}x_{1,C_1}^2+2x_{1,C_1}+\frac{\alpha_{C_1,2}}{\alpha_{C_1,1}}).
\end{align*}

Define $\rho_1=\frac{\alpha_{C_1,1}}{\alpha_{C_1,2}}$ and $\rho_2=\frac{\alpha_{C_2,1}}{\alpha_{C_2,2}}$, and replace $x_{1,C_2}$ by $1-x_{1,C_1}$. It holds that 
\begin{align*}
    &R_{C_1}(\rho_2(1-x_{1,C_1})^2+2(1-x_{1,C_1})
    +\frac1{\rho_2})\\
    -&R_{C_2}(\rho_1x_{1,C_1}^2+2x_{1,C_1}+\frac1{\rho_1})=0.
\end{align*}
Taking derivatives, we get 
\begin{align*}
&(-2R_{C_1}\rho_2(1-x_{1,C_1})-2R_{C_2}\rho_1x_{1,C_1}-2R_{C_1}-2R_{C_2})\mathrm{d}x_{1,C_1}\\
&+R_{C_1}((1-x_{1,C_1})^2-\rho_2^{-2})\mathrm{d}\rho_2-R_{C_2}(x_{1,C_1}^2-\rho_1^{-2})\mathrm{d}\rho_1=0.
\end{align*}
It always holds that $-2R_{C_1}\rho_2(1-x_{1,C_1})-2R_{C_2}\rho_1x_{1,C_1}$ $-2R_{C_1}-2R_{C_2}<0$, so
we have 
$$\frac{\partial x_{1,C_1}}{\partial \rho_1}=\frac{-R_{C_2}(x_{1,C_1}^2-\rho_1^{-2})}{R_{C_1}\rho_2(1-x_{1,C_1})+2R_{C_2}\rho_1x_{1,C_1}+2R_{C_1}+2R_{C_2}}$$ and $$\frac{\partial x_{1,C_1}}{\partial \rho_2}=\frac{R_{C_1}((1-x_{1,C_1})^2-\rho_2^{-2})}{R_{C_1}\rho_2(1-x_{1,C_1})+2R_{C_2}\rho_1x_{1,C_1}+2R_{C_1}+2R_{C_2}}.$$ Since $\vec{\mathcal{C}}$ is a designer equilibrium, it must satisfy that $\frac{\partial x_{1,C_1}}{\partial \rho_1}=0$ and $\frac{\partial x_{1,C_1}}{\partial \rho_2}=0$. It follows that $\rho_1={x_{1,C_1}}^{-1}=2$ and $\rho_2=(1-x_{1,C_1})^{-1}=2$.

Now we know that $\frac{\alpha_{C_1,1}}{\alpha_{C_1,2}}=2$.
Let designer $2$ deviate to $C_2'$, where $S_{C_2'}=\{2,3\}$, $\alpha_{C_2',2}=1,\alpha_{C_2',3}=7-2\sqrt{10}$, and $R_{C_2'}=R_{C_1}$. Under this new strategy profile of designers, one can verify with some calculation that the unique contestant equilibrium is $x_{2,C_1}'=2\sqrt{10}-6,x_{2,C_2'}'=7-2\sqrt{10},x_{1,C_1}'=x_{3,C_2'}'=1$. The designer $2$'s utility increases to $1+7-2\sqrt{10}\approx1.6754>1.5$ (In fact, this is designer $2$'s best response to $C_1$ if $R_{C_1}=1$.). This contradicts with that $\vec{\mathcal{C}}$ is a designer equilibrium.

Since both cases of $S_{C_1}=S_{C_2}$ and $S_{C_1}\neq S_{C_2}$ lead to a contradiction, we know that there is no designer equilibrium.



\end{proof}


\subsection{Proof of \Cref{lemma:winning-probability-from-lambda}}
\INDhatplambda*

\begin{proof}
For any contest $C$, suppose $S_C=\{i_1,i_2\}$. There are three possible cases:

(a) If $\lambda_{i_1}>0$ and $\lambda_{i_2}>0$, for any contestant equilibrium $\vec{x}$, we have $x_{i_1,C}=\hat{x}_{i_1,C}(\vec{\lambda})$ and $x_{i_2,C}=\hat{x}_{i_2,C}(\vec{\lambda})$. Therefore, it holds that  $\frac{x_{i_1,C}}{x_{i_2,C}}=\frac{\hat{x}_{i_1,C}(\vec{\lambda})}{\hat{x}_{i_2,C}(\vec{\lambda})}=\frac{\lambda_{i_2}}{\lambda_{i_1}}$, which implies that $p_{i_1,C}(\vec{x})=\frac{\alpha_{C,i_1}\lambda_{i_2}}{\alpha_{C,i_1}\lambda_{i_2}+\alpha_{C,i_2}\lambda_{i_1}}$, and $p_{i_2,C}(\vec{x})=\frac{\alpha_{C,i_2}\lambda_{i_1}}{\alpha_{C,i_1}\lambda_{i_2}+\alpha_{C,i_2}\lambda_{i_1}}$.

(b) If $\lambda_{i_1}>0$ and $\lambda_{i_2}=0$, for any contestant equilibrium $\vec{x}$, we have $x_{i_1,C}=\hat{x}_{i_1,C}(\vec{\lambda})=0$ and $x_{i_2,C}\geq\hat{x}_{i_2,C}(\vec{\lambda})>0$. which means that $p_{i_1,C}(\vec{x})=0=\frac{\alpha_{C,i_1}\lambda_{i_2}}{\alpha_{C,i_1}\lambda_{i_2}+\alpha_{C,i_2}\lambda_{i_1}}$, and $p_{i_2,C}(\vec{x})=1=\frac{\alpha_{C,i_2}\lambda_{i_1}}{\alpha_{C,i_1}\lambda_{i_2}+\alpha_{C,i_2}\lambda_{i_1}}$.

(c) If $\lambda_{i_1}=0$ and $\lambda_{i_2}>0$, similar with the argument in (b),  we have $p_{i_1,C}(\vec{x})=1=\frac{\alpha_{C,i_1}\lambda_{i_2}}{\alpha_{C,i_1}\lambda_{i_2}+\alpha_{C,i_2}\lambda_{i_1}}$ and $p_{i_2,C}(\vec{x})=0=\frac{\alpha_{C,i_2}\lambda_{i_1}}{\alpha_{C,i_1}\lambda_{i_2}+\alpha_{C,i_2}\lambda_{i_1}}$ for any contestant equilibrium $\vec{x}$.
\end{proof}

\subsection{Proof of \Cref{lemma:designer-control-probability-as-bias}}
\INDControlProbabilityAsBias*

\begin{proof}
For any contest $C\in\mathcal{C}^{var}$, define $\tilde{Q}_{C}=\prod_{i\in S_C}\tilde{p}_{i,C}$. For any contestant  $i$, let $a_i=\sum_{C\in\mathcal{A}(i,\mathcal{C}^{var})}R_C\tilde{Q}_{C}$ and $\hat{T}_i(\vec{\lambda})=\sum_{C\in\mathcal{C}^{fix}}\hat{x}_{i,C}(\vec{\lambda})$. Then, it is not hard to see that, the biases of all $C\in \mathcal{C}^{var}$ and the EMV $\vec{\lambda}$ under $\mathcal{C}$ can achieve the assigned winning probabilities if and only if $\vec{a}=(a_i)_{i\in[n]}$ and $\vec{\lambda}$ satisfies the conditions in \Cref{lemma:addtermcontestantequilibrium}. Thus, by \Cref{lemma:addtermcontestantequilibrium}, there exists a unique $\vec{\lambda}$ satisfying the requirement.

For any contest $C\in\mathcal{C}^{var}$, let $S_C=\{i_1,i_2\}$, and we have $\frac{\alpha_{C,i_1}\lambda_{i_2}}{\alpha_{C,i_2}\lambda_{i_1}}=\frac{\tilde{p}_{i_1,C}}{\tilde{p}_{i_2,C}}$. It follows that $\frac{\alpha_{C,i_1}}{\alpha_{C,i_2}}=\frac{\tilde{p}_{i_1,C}\lambda_{i_1}}{\tilde{p}_{i_2,C}\lambda_{i_2}}$. When normalized, the biases are uniquely determined as $\alpha^*_{C,i_1}=\frac{\tilde{p}_{i_1,C}\lambda_{i_1}}{\tilde{p}_{i_1,C}\lambda_{i_1}+\tilde{p}_{i_2,C}\lambda_{i_2}}$ and $\alpha^*_{C,i_2}=\frac{\tilde{p}_{i_2,C}\lambda_{i_2}}{\tilde{p}_{i_1,C}\lambda_{i_1}+\tilde{p}_{i_2,C}\lambda_{i_2}}$.
\end{proof}

\subsection{Proof of \Cref{thm:example-balancing-is-not-best}}
\INDthmExampleBalanceNotBest*
\begin{proof}
    
    Consider the following instance: there are four contestants and four designers, where the total efforts are $\vec{T}=(0.251, 251, 2, 0.002)$ and the prizes are $R_{C_1}=1,R_{C_2}=R_{C_3}=R_{C_4}=1.002001$.
    The strategy profile of participants selection is given by $S_{C_1}=\{1,2\},S_{C_2}=\{1,3\},S_{C_3}=\{2,4\},S_{C_4}=\{3,4\}$.

    If the current bias profile of designers is $\alpha_{C_1,1}=1000,\alpha_{C_1,2}=1$, and $\alpha_{C_j,i}=1$ for all $j\in\{2,3,4\},i\in S_{C_j}$. It is not hard to check that the contestants' equilibrium multiplier vector is $\vec{\lambda}=(1,0.001,0.001,1)$, and the unique contestant equilibrium $\vec{x}$ is given by $x_{1,C_1}=0.25,x_{2,C_1}=250,x_{1,C_2}=0.001,x_{3,C_2}=1,x_{2,C_3}=1,x_{4,C_3}=0.001,x_{3,C_4}=1,x_{4,C_4}=0.001$. Moreover, it holds that $\alpha_{C_1,1}x_{1,C_1}=\alpha_{C_1,2}x_{2,C_1}$, so $p_{1,C_1}(\vec{x})=p_{2,C_1}(\vec{x})=\frac12$. The utility of designer 1 is $x_{1,C_1}+x_{2,C_1}=250.25$.

    Now suppose that designer 1 changes the bias to $\alpha_{C_1,1}=990,\alpha_{C_1,2}=1$. Under the new strategy profile of designers, we can use \Cref{alg:Alg1} to compute the new contestant equilibrium. Up to small enough error, the new equilibrium multiplier vector is $\vec{\lambda}'=(0.9999754268144135, 0.0009999746482529694, \\ 0.0010001217961560266, 1.0000240866947854)$, and the contestant equilibrium $\vec{x}'$ is \\
$(x'_{1,C_1}= 0.2499998293421785,\\x'_{2,C_1}= 250.00002398734125,\\ x'_{1,C_2}= 0.0010001706578214883,\\x'_{3,C_2}= 1.0000242813288966,\\x'_{2,C_3}= 0.999976012658724,\\x'_{4,C_3}= 0.0009999265765935563,\\x'_{3,C_4}= 0.9999757186711037,\\x'_{4,C_4}= 0.001000073423406445)$. \par
One can find that $p_{1,C_1}(\vec{x}')=1-p_{2,C_1}(\vec{x}')=0.4974872425456253\neq\frac12$. However, the utility of designer 1 is $x'_{1,C_1}+x'_{2,C_1}=250.2500238166834$, which is larger than the utility of $250.25$ when she uses the balancing bias. Therefore using the balancing bias is not the best response of designer $1$ in the second substage of designers.
\end{proof}

\subsection{Proof of \Cref{thm:indiv-1/2probability-is-equilibrium}}
\INDthmEqualProbabilitySecondStageEquilibrium*

\begin{proof}
We prove this theorem by contradiction. Suppose that $\vec{\mathcal{C}}$ is not an equilibrium in the second substage of designers, that is, there is some designer who has incentive to deviate from $\vec{\mathcal{C}}$ by adjusting the bias of her contest. Without loss of generality, we assume that it is designer $1$ who deviates from $C_1$, where $S_{C_1}=\{1,2\}$, for convenience. Suppose that $\mathcal{C}_1'=\{C_1'\}$ is a beneficial deviation for designer $1$, satisfying that $R_{C_1'}=R_{C_1}$ and $S_{C_1'}=S_{C_1}=\{1,2\}$. Let $\vec{\mathcal{C}}'=(\mathcal{C}_1',\vec{\mathcal{C}}_{-1})$ denote the strategy profile after designer $1$ deviating. Let $\vec{\lambda}'$ denote the equilibrium multiplier vector under $\vec{\mathcal{C}}'$. There must exist a contestant equilibrium $\vec{x}'$ under $\vec{\mathcal{C}}'$ such that $x'_{1,C_1'}+x'_{2,C_1'}>\hat{x}_{1,C_1}(\vec{\lambda})+\hat{x}_{2,C_1}(\vec{\lambda})$.

Note that since $\hat{p}_{i,C}(\vec{\lambda})=\frac12$ for any $C\in\vec{\mathcal{C}}$ and $i\in S_C$, we have $\lambda_i>0$ and 
$\frac{\alpha_{C,i}}{\alpha_{C,\op_{i,C}}}=\frac{\lambda_i}{\lambda_{\op_{i,C}}}$ for any contest $i$. Because $\vec{\lambda}'\neq\vec{\lambda}$, by \Cref{lemma:designer-control-probability-as-bias} we know that $Q_{C_1'}(\vec{\lambda}')\neq Q_{C_1}(\vec{\lambda})$, i.e., $Q_{C_1'}(\vec{\lambda}')<\frac14$.

Firstly, we prove the following claims:
\begin{description}
\item[Claim 1] For any contestant $i\in[n]\setminus\{1,2\}$, it holds that $\lambda_i'\leq\lambda_i$.
\item[Claim 2] For any contestant $i\in\{1,2\}$, it holds that $\lambda_i'<\lambda_i$.
\item[Claim 3] For any contestant $i\in[n]$, it holds that $\lambda_i'>0$.
\end{description}

To prove Claim 1, recall that for any contestant $i$ and any contest $j$, $\hat{p}_{i,C_j}(\vec{\lambda})=\frac12$, and $Q_{i,C_j}(\vec{\lambda})=\frac14\geq Q_{i,C_j}(\vec{\lambda}')$.
Observe that for any $i>2$, $T_i=\sum_{C\in\mathcal{A}(i,\vec{\mathcal{C}}_{-1})}\hat{x}_{i,C}(\vec{\lambda})=\sum_{C\in\mathcal{A}(i,\vec{\mathcal{C}}_{-1})}\frac{R_CQ_C(\vec{\lambda})}{\lambda_i}=\sum_{C\in\mathcal{A}(i,\vec{\mathcal{C}}_{-1})}\frac{R_C}{4\lambda_i}$.
If $\lambda_i'>0$, we have $T_i=\sum_{C\in\mathcal{A}(i,\vec{\mathcal{C}}_{-1})}\hat{x}_{i,C}(\vec{\lambda}')=\sum_{C\in\mathcal{A}(i,\vec{\mathcal{C}}_{-1})}\frac{R_CQ_C(\vec{\lambda}')}{\lambda_i'}$, which means that  $\frac{\lambda_i'}{\lambda_i}=\frac{\sum_{C\in\mathcal{A}(i,\vec{\mathcal{C}}_{-1})}R_CQ_C(\vec{\lambda}')}{\sum_{C\in\mathcal{A}(i,\vec{\mathcal{C}}_{-1})}\frac14R_C}\leq 1$. In addition, if $\lambda_i'=0$, we have $\lambda_i>\lambda_i'$. In summary, it holds that $\lambda_i'\leq\lambda_i$.

For Claim 2, for each $i\in\{1,2\}$, it holds that $T_i=\sum_{C\in\mathcal{A}(i,\vec{\mathcal{C}}_{-1})}\frac{R_CQ_C(\vec{\lambda})}{\lambda_i}+\frac{R_{C_1}Q_{C_1}(\vec{\lambda})}{\lambda_i}$. If $\lambda_i'>0$,  we also have $T_i=\sum_{C\in\mathcal{A}(i,\vec{\mathcal{C}}_{-1})}\frac{R_CQ_C(\vec{\lambda}')}{\lambda_i'}+\frac{R_{C_1'}Q_{C_1'}(\vec{\lambda}')}{\lambda_i'}$. We know that $R_{C_1'}Q_{C_1'}(\vec{\lambda}')<\frac{R_{C_1}}{4}=R_{C_1}Q_{C_1}(\vec{\lambda})$, so $\frac{\lambda_i'}{\lambda_i}=\frac{\sum_{C\in\mathcal{A}(i,\vec{\mathcal{C}}_{-1})}R_CQ_C(\vec{\lambda}')+R_{C_1'}Q_{C_1'}(\vec{\lambda}')}{\frac14(\sum_{C\in\mathcal{A}(i,\vec{\mathcal{C}}_{-1})}R_C+R_{C_1})}<1$. And if $\lambda_i'=0$, it also holds that $\lambda_i'<\lambda_i$. In summary we get $\lambda_i'<\lambda_i$.

For Claim 3, we firstly prove that for each contestant $i\in\{1,2\}$, $\lambda_i'>0$. Suppose for contradiction that for some $i\in\{1,2\}$, $\lambda_i'=0$. Observe that 
\begin{align*}
T_i\geq& \sum_{C\in\mathcal{A}(1,\vec{\mathcal{C}}_{-1})}\hat{x}_{i,C}(\vec{\lambda}')+x'_{i,C_1'}\\
=&\sum_{C\in\mathcal{A}(i,\vec{\mathcal{C}}_{-1})}\frac{R_C\alpha_{C,\op_{i,C}}}{\alpha_{C,i}\lambda'_{\op_{i,C}}}+x'_{i,C_1'}\\
=&\sum_{C\in\mathcal{A}(i,\vec{\mathcal{C}}_{-1})}\frac{R_C\lambda_{\op_{i,C}}}{\lambda_{i}\lambda'_{\op_{i,C}}}+x'_{i,C_1'}\\
\geq&\sum_{C\in\mathcal{A}(i,\vec{\mathcal{C}}_{-1})}\frac{R_C}{\lambda_{i}}+x'_{i,C_1'}.
\end{align*}
Therefore, we get
\begin{align*}
x'_{i,C_1'} &\leq T_i-\sum_{C\in\mathcal{A}(i,\vec{\mathcal{C}}_{-1})}\frac{R_C}{\lambda_{i}}\\
& \leq T_i-\sum_{C\in\mathcal{A}(i,\vec{\mathcal{C}}_{-1})}\frac{R_C}{4\lambda_{i}}=\hat{x}_{i,C_1}(\vec{\lambda}).
\end{align*}
Meanwhile, we have $x'_{3-i,C_1'}=0<\hat{x}_{3-i,C_1}(\vec{\lambda})$, which means that $x'_{1,C_1'}+x'_{2,C_1'}<\hat{x}_{1,C_1}(\vec{\lambda})+\hat{x}_{2,C_1}(\vec{\lambda})$. It contradicts with the assumption that $x'_{1,C_1'}+x'_{2,C_1'}>\hat{x}_{1,C_1}(\vec{\lambda})+\hat{x}_{2,C_1}(\vec{\lambda})$.

Secondly, for any contestant $i>2$, if $\lambda_i'=0$,  we have 
\begin{align*}
T_i\geq&\sum_{C\in\mathcal{A}(1,\vec{\mathcal{C}}_{-1})}\hat{x}_{i,C}(\vec{\lambda}')\\
=&\sum_{C\in\mathcal{A}(i,\vec{\mathcal{C}}_{-1})}\frac{R_C\lambda_{\op_{i,C}}}{\lambda_{i}\lambda'_{\op_{i,C}}}\\
\geq&\sum_{C\in\mathcal{A}(i,\vec{\mathcal{C}}_{-1})}\frac{R_C}{\lambda_{i}}.
\end{align*}
This contradicts with that $\sum_{C\in\mathcal{A}(i,\vec{\mathcal{C}}_{-1})}\frac{R_C}{4\lambda_{i}}=T_i$ and $T_i>0$. This completes the proof of Claim 3.

By Claim 3, we have $x'_{i,C_1'}=\hat{x}_{i,C_1'}(\vec{\lambda}')=\frac{R_{C_1'}Q_{C_1'}(\vec{\lambda}')}{\lambda'_i}$ for both $i\in\{1,2\}$, so $x'_{i,C_1'}>\hat{x}_{i,C_1}(\vec{\lambda})$ if and only if $\frac{\lambda'_i}{\lambda_i}<\frac{Q_{C_1'}(\vec{\lambda}')}{Q_{C_1}(\vec{\lambda})}$. By the assumption that $x'_{1,C_1'}+x'_{2,C_1'}>\hat{x}_{1,C_1}(\vec{\lambda})+\hat{x}_{2,C_1}(\vec{\lambda})$, we know that $\min_{i\in\{1,2\}}\frac{\lambda'_i}{\lambda_i}<\frac{Q_{C_1'}(\vec{\lambda}')}{Q_{C_1}(\vec{\lambda})}<1$.

Next we prove the following three claims, which will lead to an impossible infinite descent.
\begin{description}
    \item[Claim 4] For any $y\in(0,1),z\in(0,1]$, if $\frac{yz}{(y+z)^2}\leq \frac{y}{4}$, it holds that $z<y$.
    \item[Claim 5] For any $i>2$, if $\frac{\lambda_i'}{\lambda_i}<1$, then there exists $k\in[n]$, such that $\frac{\lambda_k'}{\lambda_k}<\frac{\lambda_i'}{\lambda_i}$.
    \item[Claim 6] For any $i\in\{1,2\}$, if $\frac{\lambda_i'}{\lambda_i}<\frac{Q_{C_1'}(\vec{\lambda}')}{Q_{C_1}(\vec{\lambda})}$, there exists a $k\in[n]$, such that $\frac{\lambda_k'}{\lambda_k}<\frac{\lambda_i'}{\lambda_i}$.
\end{description}

We firstly prove Claim 4. If $\frac{yz}{(y+z)^2}\leq \frac{y}{4}$, we can calculate that $(z+y-2)^2\geq-4y+4=4(1-y)$. Since $z+y-2<0$, we have $z+y-2=-\sqrt{4(1-y)}$, and consequently $z=2-y-\sqrt{4(1-y)}=(1-\sqrt{1-y})^2=\frac{1-\sqrt{1-y}}{1+\sqrt{1-y}}y<y$.

Now we prove Claim 5. For any $i>2$, if $\frac{\lambda_i'}{\lambda_i}<1$, since we know from claim 3 that $\lambda_i'>0$, we have $\frac{\lambda_i'}{\lambda_i}=\frac{\sum_{C\in\mathcal{A}(i,\vec{\mathcal{C}}_{-1})}R_CQ_C(\vec{\lambda}')}{\sum_{C\in\mathcal{A}(i,\vec{\mathcal{C}}_{-1})}\frac14R_C}$.

Therefore, there exists $C\in\mathcal{A}(i,\vec{\mathcal{C}}_{-1})$ such that $Q_C(\vec{\lambda}')\leq \frac14\frac{\lambda_i'}{\lambda_i}$. Take $k=\op_{i,C}$. Since $Q_C(\vec{\lambda}')=\hat{p}_{i,C}(\vec{\lambda}')\hat{p}_{k,C}(\vec{\lambda}')=\frac{\lambda_i\lambda'_k\lambda_k\lambda'_i}{(\lambda_i\lambda'_k+\lambda_k\lambda'_i)^2}=\frac{\frac{\lambda'_k}{\lambda_k}\frac{\lambda'_i}{\lambda_i}}{(\frac{\lambda'_k}{\lambda_k}+\frac{\lambda'_i}{\lambda_i})^2}$, it holds that $\frac{\frac{\lambda'_k}{\lambda_k}\frac{\lambda'_i}{\lambda_i}}{(\frac{\lambda'_k}{\lambda_k}+\frac{\lambda'_i}{\lambda_i})^2}\leq\frac14\frac{\lambda_i'}{\lambda_i}$. We know that $\frac{\lambda'_k}{\lambda_k}\in[0,1]$ and $\frac{\lambda'_i}{\lambda_i}\in(0,1)$, so by claim 4, we obtain $\frac{\lambda'_k}{\lambda_k}<\frac{\lambda'_i}{\lambda_i}$.

We can prove Claim 6 similarly. For any contestant $i\in\{1,2\}$, we know that $\frac{\lambda_i'}{\lambda_i}=\frac{\sum_{C\in\mathcal{A}(i,\vec{\mathcal{C}}_{-1})}R_CQ_C(\vec{\lambda}')+R_{C_1}Q_{C_1'}(\vec{\lambda}')}{\sum_{C\in\mathcal{A}(i,\vec{\mathcal{C}}_{-1})}\frac14R_C+R_{C_1}Q_{C_1}(\vec{\lambda})}$. If $\frac{Q_{C_1'}(\vec{\lambda}')}{Q_{C_1}(\vec{\lambda})}>\frac{\lambda_i'}{\lambda_i}$, we have $\frac{\sum_{C\in\mathcal{A}(i,\vec{\mathcal{C}}_{-1})}R_CQ_C(\vec{\lambda}')}{\sum_{C\in\mathcal{A}(i,\vec{\mathcal{C}}_{-1})}\frac14R_C}<\frac{\lambda_i'}{\lambda_i}$, so there exists $C\in\mathcal{A}(i,\vec{\mathcal{C}}_{-1})$ such that $Q_C(\vec{\lambda}')\leq \frac14\frac{\lambda_i'}{\lambda_i}$. Take $k=\op_{i,C}$. It holds that $\frac{\frac{\lambda'_k}{\lambda_k}\frac{\lambda'_i}{\lambda_i}}{(\frac{\lambda'_k}{\lambda_k}+\frac{\lambda'_i}{\lambda_i})^2}\leq\frac14\frac{\lambda_i'}{\lambda_i}$, and by claim 4, we obtain $\frac{\lambda'_k}{\lambda_k}<\frac{\lambda'_i}{\lambda_i}$.

Finally, since $\min_{i\in\{1,2\}}\frac{\lambda'_i}{\lambda_i}<\frac{Q_{C_1'}(\vec{\lambda}')}{Q_{C_1}(\vec{\lambda})}$, by Claim 6 we know that $\min_{i\in[n]}\frac{\lambda'_i}{\lambda_i}<\min_{i\in\{1,2\}}\frac{\lambda'_i}{\lambda_i}$, which implies that  $\min_{i\in[n]\setminus\{1,2\}}\frac{\lambda'_i}{\lambda_i}<\min_{i\in\{1,2\}}\frac{\lambda'_i}{\lambda_i}<1$. However, by Claim 5, it follows that $\min_{i\in[n]}\frac{\lambda'_i}{\lambda_i}<\min_{i\in[n]\setminus\{1,2\}}\frac{\lambda'_i}{\lambda_i}$, which implies that $\min_{i\in\{1,2\}}\frac{\lambda'_i}{\lambda_i}<\min_{i\in[n]\setminus\{1,2\}}\frac{\lambda'_i}{\lambda_i}$. It is a contradiction. We complete the proof.
\end{proof}

\subsection{Proof of \Cref{thm:weak-SPE-existence}}
\INDweakSPEexists*

\begin{proof}
To prove the existence of weak designer equilibrium, firstly we show that the game in the first substage of designers is strategically equivalent to a variant of weighted congestion game, if we assume that all designers take the balancing bias in the second substage. Secondly we show that in this variant of weighted congestion game, the pure Nash equilibrium always exists, which implies the existence of weak designer equilibrium.

For any strategy profile of designers $\vec{\mathcal{C}}$ such that the second-substage strategies form the equilibrium given in \Cref{thm:indiv-1/2probability-is-equilibrium}, let $\vec{\lambda}$ and $\vec{x}$ denote the equilibrium multiplier vector and contestant equilibrium under $\vec{\mathcal{C}}$. We can observe that for any contestant $i$ and contest $C\in\mathcal{A}(i,\vec{\mathcal{C}})$, it holds that $x_{i,C}=\hat{x}_{i,C}(\vec{\lambda})=\frac{R_CQ_C(\vec{\lambda})}{\lambda_i}=\frac{R_C}{4\lambda_i}$.
Combining this with that $\sum_{C\in\mathcal{A}(i,\vec{\mathcal{C}})}\hat{x}_{i,C}(\vec{\lambda})=\hat{T}_i(\vec{\lambda})=T_i$, we get $\lambda_i=\sum_{C\in\mathcal{A}(i,\vec{\mathcal{C}})}\frac{R_C}{4T_i}$, and therefore we obtain $$x_{i,C}=\frac{R_C}{4\lambda_i}=T_i\frac{R_C}{\sum_{C'\in\mathcal{A}(i,\vec{\mathcal{C}})}R_{C'}}.$$

In other words, each contestant will distribute her total effort $T_i$ into the contests inviting her, such that the effort exerted into each contest is proportional to the prize amount of that contest.

In the first substage of designers, each designer $j$ decides prize $R_{C_j}$ and the two participants $S_{C_j}$, to maximize her utility under the second substage equilibrium and contestant equilibrium, which equals to $$\sum_{i\in S_{C_j}}T_i\frac{R_{C_j}}{\sum_{j'\in[m]:i\in S_{C_{j'}}}R_{C_{j'}}}.$$ It is easy to see that the setting $R_{C_j}=B_j$ dominantly maximizes her utility. We only need to consider the selection of $S_{C_j}$, which can be viewed as the following \textbf{variant of weighted congestion game}:

There are $m$ agents $a_1,\cdots,a_m$ representing $m$ designers and $n$ resources $e_1,\cdots,e_n$ representing $n$ contestants. Each agent $a_j$ has a weight $w_j=B_j$ and each resource $e_i$ has an amount of total reward $v_i=T_i$. Each agent selects a strategy $s_j$ from a common strategy space $\mathcal{S}=\{\{e_{i_1},e_{i_2}\}\subseteq \{e_1,\cdots,e_n\}:i_1\neq i_2\}$, where $s_j=\{e_{i_1},e_{i_2}\}$ represents $S_{C_j}=\{i_1,i_2\}$. Under a strategy profile $\vec{s}=(s_1,\cdots,s_m)$, each resource $e_i$'s load is defined as $c_i(\vec{s})=\sum_{j\in[m]:e_i\in{s_j}}w_j$ and its reward function is defined as $r_i(\vec{s})=\frac{v_i}{c_i(\vec{s})}$. Each agent $a_j$ tries to maximize her total reward $V_j(\vec{s})=\sum_{i\in s_j}r_i(\vec{s})$. It is easy to see that the game between designers in the first substage is strategically equivalent to this variant of weighted congestion game, since each designer $j$'s utility $\sum_{i\in S_{C_j}}T_i\frac{B_j}{\sum_{j'\in[m]:i\in S_{C_{j'}}}B_{j'}}$ is equal to $B_jV_j(\vec{s})$.

We adapt \cite{ACKERMANN20091552}'s proof of the existence of PNE in weighted matroid congestion games, to prove that the above variant of weighted congestion game always has a PNE, by constructing a lexicographical order.

Firstly, given any strategy profile $\vec{s}=(s_1,\cdots,s_m)$, if there is some agent $a_j$ who has the incentive to deviate from $s_j$, that is, there is some $s_j'\in\mathcal{S}$ such that $V_j(s_j',\vec{s}_{-j})>V_j(\vec{s})$, we prove that there exists $s_j''\in\mathcal{S}$ such that $|s_j''\cap s_j|=1$ and $V_j(s_j'',\vec{s}_{-j})>V_j(\vec{s})$. Let $c_i(\vec{s}_{-j})$ denote $\sum_{j'\in[m]\setminus\{j\}:e_i\in{s_{j'}}}w_{j'}$ and define $r_i^j(\vec{s}_{-j})=\frac{v_i}{w_j+c_i(\vec{s}_{-j})}$, which is the reward that the agent $a_j$ will get from $e_i$ if $i\in s_j$, when given the strategies of other agents $\vec{s}_{-j}$. Then, it holds that for any $s_j\in \mathcal{S}$, suppose $s_j=\{e_{i_1},e_{i_2}\}$, and $V_j(s_j,\vec{s}_{-j})=r_{i_1}^j(\vec{s}_{-j})+r_{i_2}^j(\vec{s}_{-j})$. Therefore, when $V_j(s_j',\vec{s}_{-j})>V_j(\vec{s})$, suppose $s_j=\{e_{i_1},e_{i_2}\}$ and $s_j'=\{e_{i_3},e_{i_4}\}$. Without loss of generality,  assume that $r_{i_1}^j(\vec{s}_{-j})\geq r_{i_2}^j(\vec{s}_{-j})$. There must exist $i'\in \{i_3,i_4\}$ such that $i'\notin \{i_1,i_2\}$ and $r_{i'}^j(\vec{s}_{-j})>r_{i_2}^j(\vec{s}_{-j})$. Taking $s_j''=\{e_{i'},e_{i_1}\}$ makes it satisfy that $|s_j''\cap s_j|=1$ and $V_j(s_j'',\vec{s}_{-j})>V_j(\vec{s})$.

For any strategy profile $\vec{s}$, define $\bar{r}(\vec{s})\in\mathbb{R}^n$ as the vector obtained by sorting $r_1(\vec{s}),\cdots,r_n(\vec{s})$\footnote{Specifically, define $r_i(\vec{s})=+\infty$ if $e_i\notin\cup_{j\in[m]}s_j$}
in ascending order, i.e., $\bar{r}_i(\vec{s})$ is the $i$-th smallest number among $r_1(\vec{s}),\cdots,r_n(\vec{s})$.

For any two strategy profiles $\vec{s}$ and $\vec{s}'$, we say $\bar{r}(\vec{s})$ is lexicographically greater than $\bar{r}(\vec{s}')$ if there exists $k\leq n$, such that $\bar{r}_i(\vec{s})=\bar{r}_i(\vec{s}')$ for all $i<k$ and $\bar{r}_k(\vec{s})>\bar{r}_k(\vec{s}')$, denoted by $\bar{r}(\vec{s})>_{lex}\bar{r}(\vec{s}')$. 

Now we show that, if there is some $j\in[m]$ and some $s_j'\in\mathcal{S}$ such that $|s_j'\cap s_j|=1$ and $V_j(s_j',\vec{s}_{-j})>V_j(\vec{s})$, we have $\bar{r}(s_j',\vec{s}_{-j})>_{lex}\bar{r}(\vec{s})$. Suppose $s_j=\{e_{i_1},e_{i_2}\}$ and $s_j'=\{e_{i_1},e_{i_3}\}$. We know that for any $i\in[n]\setminus \{i_2,i_3\}$, $r_i(s_j',\vec{s}_{-j})=r_i(\vec{s})$, so only $r_{i_2}$ and $r_{i_3}$ may change. We have $r_{i_2}(s_j',\vec{s}_{-j})>r_{i_2}(\vec{s})$ since $c_{i_2}(s_j',\vec{s}_{-j})=c_{i_2}(\vec{s})-w_j$, and we also know $r_{i_3}(s_j',\vec{s}_{-j})>r_{i_2}(\vec{s})$ by $r_{i_3}(s_j',\vec{s}_{-j})-r_{i_2}(\vec{s})=V_j(s_j',\vec{s}_{-j})-V_j(\vec{s})>0$. Therefore, it holds that $\min\{r_{i_2}(s_j',\vec{s}_{-j}),r_{i_3}(s_j',\vec{s}_{-j})\}>r_{i_2}(\vec{s})\geq \min\{r_{i_2}(\vec{s}),r_{i_3}(\vec{s})\}$, so $\bar{r}(s_j',\vec{s}_{-j})>_{lex}\bar{r}(\vec{s})$.

Since $\mathcal{S}$ is a finite set, there exists $\vec{s}^*\in\mathcal{S}$ such that $\bar{r}(\vec{s}^*)$ is maximal, i.e., there does not exist any $\vec{s}'\in\mathcal{S}$ such that $\bar{r}(\vec{s}')>_{lex}\bar{r}(\vec{s}^*)$. It follows that $\vec{s}^*$ is a PNE,  otherwise there is some $j\in[m]$ and $s_j'\in\mathcal{S}$ such that $|s_j'\cap s_j|=1$ and $V_j(s_j',\vec{s}_{-j})>V_j(\vec{s})$, implying that $\bar{r}(s_j',\vec{s}_{-j})>_{lex}\bar{r}(\vec{s})$, which is a  contradiction.

In coclusion, let the designers use the strategies corresponding to the first-stage equilibrium represented by $\vec{s}^*\in\mathcal{S}$, i.e., $S_{C_j}=\{i_1,i_2\}$ supposing $s_j^*=\{e_{i_1},e_{i_2}\}$ and $R_{C_j}=B_j$, and set balancing biases in the second-stage It is not hard to see that it is a weak designer equilibrium. 
\end{proof}
\section{Missing Proofs in \Cref{sec:divisible}}

\subsection{Proof of \Cref{thm:example-divisible-1/2notbest}}
\DIVExamplebanlanceNotBest*

\begin{proof}
    
    Consider the following instance:
    There are four contestants $\{1,2,3,4\}$, with total effort $\vec{T}=(1.001\times10^{-3}, 1.001\times10^{-3}, 1.001\times10^{6}, 1.001\times10^{6})$. There are two designers $\{1,2\}$ each holding two contests, with $\mathcal{C}_1=\{C_1,C_2\},\mathcal{C}_2=\{C_3,C_4\}$. The participants are $S_{C_1}=\{1,2\},S_{C_2}=\{3,4\},S_{C_3}=\{1,3\},S_{C_4}=\{2,4\}$. The rewards are $R_{C_1}=1,R_{C_2}=10^6,R_{C_3}=R_{C_4}=10^3$. When the biases are $\alpha_{C_1,1}=\alpha_{C_1,2}=1,\alpha_{C_2,3}=\alpha_{C_2,4}=1,\alpha_{C_3,1}=10^6,\alpha_{C_3,3}=1,\alpha_{C_4,2}=10^6,\alpha_{C_4,4}=1$, one can check that the EMV of contestants is $\vec{\lambda}=(2.5*10^5,2.5*10^5,0.25,0.25)$, and moreover, $\hat{p}_{i,C_k}(\vec{\lambda})=\frac12$ holds for all $k=1,2,3,4$ and $i\in S_{C_k}$. The unique contestant equilibrium $\vec{x}$ is given by
    $x_{1,C_1}=x_{2,C_1}=10^{-6},
x_{3,C_4}=x_{4,C_4}=10^6,
    x_{1,C_3}=10^{-3},x_{3,C_3}= 10^3,
    x_{2,C_4}=10^{-3},x_{4,C_4}= 10^3$.
    The utility of designer 1 is $x_{1,C_1}+x_{2,C_1}+x_{3,C_4}+x_{4,C_4}=2000000.000002$.
    
    However, suppose designer 1 changes the bias of $C_1$ to $\alpha_{C_1,1}=2,\alpha_{C_1,2}=1$, and we use \Cref{alg:Alg1} to compute the contestant equilibrium under the new strategy profile of designers. Up to small enough errors, the new EMV is $\vec{\lambda}'=[249972.24920282728, 249972.24920282728,\\ 0.24999999999923062, 0.24999999999923062]$, and the new contestant equilibrium $\vec{x}'$ is given by \\ $(x'_{1,C_1}= 8.889875693437923e-07,\\ x'_{2,C_1}= 8.889875693437923e-07,\\ x'_{3,C_2}= 1000000.0000030776, \\ x'_{4,C_2}= 1000000.0000030776, \\ x'_{1,C_3}= 0.001000111012430656, \\ x'_{3,C_3}= 999.9999969223088, \\ x'_{2,C_4}= 0.001000111012430656,\\ x'_{4,C_4}= 999.9999969223088)$. \par The utility of designer 1 becomes $2000000.00000793>2000000.000002$. Therefore designer 1 has the incentive to change the biases of her contests from the original strategy profile.
\end{proof}

\subsection{Proof of \Cref{thm:div-proportional-allocation-is-SPE}}
\DIVProportionalAllocationisSPE*

\begin{proof}
Note that for any contestant $i$, we have $$T_i=\sum_{C\in\mathcal{A}(i,\mathcal{C}_j)}\frac{R_CQ_C(\vec{\lambda})}{\lambda_i}=\frac1{\lambda_i}\sum_{j\in[m]}B_j\frac{T_i}{2\sum_{k\in[n]}T_{k}},$$ which means that  $\lambda_i=\frac{\sum_{j\in[m]}B_j}{2\sum_{k\in[n]}T_{k}}$, i.e., all contestants have an equal multipliers in $\vec{\lambda}$. Consequently, for any contest $C\in\vec{\mathcal{C}}$, suppose $S_C=\{i_1,i_2\}$. We have $\alpha_{C,i_1}=\alpha_{C,i_2}$ because $\frac{\alpha_{C,i_1}\lambda_{i_2}}{\alpha_{C,i_1}\lambda_{i_2}+\alpha_{C,i_2}\lambda_{i_1}}=\frac12$.

We prove this theorem by contradiction. Suppose that $\vec{\mathcal{C}}$ is not a designer equilibrium, i.e., there exists some contestant who has an incentive to deviate from $\vec{\mathcal{C}}$. Without loss of generality, we can assume that designer $1$ has a beneficial deviation strategy $\mathcal{C}_1'$. Let $\vec{\mathcal{C}}'=(\mathcal{C}_1',\vec{\mathcal{C}}_{-1})$ denote the new strategy profile of designers and  $\vec{\lambda}'$ denote the equilibrium multiplier vector under $\vec{\mathcal{C}}'$. There exists a contestant equilibrium  $\vec{x}'$ under $\vec{\mathcal{C}}'$ such that $\sum_{C\in\mathcal{C}_1'}\sum_{i\in S_C}x'_{i,C}>\sum_{C\in\mathcal{C}_1}\sum_{i\in S_C}\hat{x}_{i,C}(\vec{\lambda})$.

Firstly, we prove the following claims:
\begin{description}
    \item[Claim 1] For any contest $C\in\vec{\mathcal{C}}_{-1}$, it holds that $\sum_{i\in S_C}\hat{x}_{i,C}(\vec{\lambda}')=\frac{R_C}{\sum_{i\in S_C}\lambda_i'}$, and $\sum_{i\in S_C}\hat{x}_{i,C}(\vec{\lambda})=\frac{R_C}{\sum_{j\in[m]}B_j}\sum_{k\in[n]}T_{k}$.
    \item[Claim 2] $\sum_{C\in\vec{\mathcal{C}}_{-1}}\frac{R_C}{\sum_{i\in S_C}\lambda_i'}<\frac{\sum_{j\in[m]\setminus\{1\}}B_j}{\sum_{j\in[m]}B_j}\sum_{k\in[n]}T_{k}$.
    \item[Claim 3] For any contestant $i$, it holds that  $$\sum_{C\in\mathcal{A}(\mathcal{C}_1',i)}R_{C}Q_{C}(\vec{\lambda}')\geq \lambda_i'T_i-\frac12\sum_{j\in[m]\setminus\{1\}}B_j\frac{T_i}{\sum_{k\in[n]}T_k}.$$
\end{description}
To prove Claim 1, for any contest $C\in\vec{\mathcal{C}}_{-1}$, suppose $S_C=\{i_1,i_2\}$. Observe that $\hat{x}_{i_1,C}(\vec{\lambda}')+\hat{x}_{i_2,C}(\vec{\lambda}')=R_C\frac{\lambda_{i_2}'}{(\lambda_{i_1}'+\lambda_{i_2}')^2}+R_C\frac{\lambda_{i_1}'}{(\lambda_{i_1}'+\lambda_{i_2}')^2}=\frac{R_C}{\lambda_{i_1}'+\lambda_{i_2}'}$. Therefore, we have $\sum_{i\in S_C}\hat{x}_{i,C}(\vec{\lambda}')=\frac{R_C}{\sum_{i\in S_C}\lambda_i'}$. Similarly, it holds that 
\begin{align*}
&\sum_{i\in S_C}\hat{x}_{i,C}(\vec{\lambda})=\frac{R_C}{\sum_{i\in S_C}\lambda_i}\\
=&\frac{R_C}{\frac{\sum_{j\in[m]}B_j}{2\sum_{k\in[n]}T_{k}}+\frac{\sum_{j\in[m]}B_j}{2\sum_{k\in[n]}T_{k}}}=\frac{R_C}{\sum_{j\in[m]}B_j}\sum_{k\in[n]}T_{k}.
\end{align*}

To show Claim 2, by Claim 1 we can know that  $\sum_{C\in\vec{\mathcal{C}}_{-1}}\sum_{i\in S_C}\hat{x}_{i,C}(\vec{\lambda}')=\sum_{C\in\vec{\mathcal{C}}_{-1}}\frac{R_C}{\sum_{i\in S_C}\lambda_i'}$, and $\sum_{C\in\vec{\mathcal{C}}_{-1}}\sum_{i\in S_C}\hat{x}_{i,C}(\vec{\lambda})=\frac{\sum_{C\in\vec{\mathcal{C}}_{-1}}R_C}{\sum_{j\in[m]}B_j}\sum_{k\in[n]}T_{k}=\frac{\sum_{j\in[m]\setminus\{1\}}B_j}{\sum_{j\in[m]}B_j}\sum_{k\in[n]}T_{k}$.

From the assumption that $\sum_{C\in\mathcal{C}_1'}\sum_{i\in S_C}x'_{i,C}>\sum_{C\in\mathcal{C}_1}\sum_{i\in S_C}\hat{x}_{i,C}(\vec{\lambda})$, we have
\begin{align*}
&\sum_{C\in\vec{\mathcal{C}}_{-1}}\sum_{i\in S_C}\hat{x}_{i,C}(\vec{\lambda}')\\
\leq &\sum_{C\in\vec{\mathcal{C}}_{-1}}\sum_{i\in S_C}x'_{i,C}\\
\leq&\sum_{i\in[n]}T_i-\sum_{C\in\mathcal{C}_1'}\sum_{i\in S_C}x'_{i,C}\\
<&\sum_{i\in[n]}T_i-\sum_{C\in\mathcal{C}_1}\sum_{i\in S_C}\hat{x}_{i,C}(\vec{\lambda})\\
=&\sum_{C\in\vec{\mathcal{C}}_{-1}}\sum_{i\in S_C}\hat{x}_{i,C}(\vec{\lambda}).
\end{align*}

Combining these, we obtain $$\sum_{C\in\vec{\mathcal{C}}_{-1}}\frac{R_C}{\sum_{i\in S_C}\lambda_i'}<\frac{\sum_{j\in[m]\setminus\{1\}}B_j}{\sum_{j\in[m]}B_j}\sum_{k\in[n]}T_{k}.$$

For Claim 3, for any contestant $i$, we discuss in two cases:

(a) If $\lambda_i'>0$, it holds that $T_i=\sum_{C\in\mathcal{A}(\mathcal{C}_1',i)}\frac{R_CQ_C(\vec{\lambda}')}{\lambda_i'}+\sum_{C\in\mathcal{A}(\vec{\mathcal{C}}_{-1},i)}\frac{R_CQ_C(\vec{\lambda}')}{\lambda_i'}
$. Therefore, we have
\begin{align*}
\sum_{C\in\mathcal{A}(\mathcal{C}_1',i)}R_CQ_C(\vec{\lambda}')=&\lambda_i'T_i-\sum_{C\in\mathcal{A}(\vec{\mathcal{C}}_{-1},i)}R_CQ_C(\vec{\lambda}')\\
\geq& \lambda_i'T_i-\sum_{C\in\mathcal{A}(\vec{\mathcal{C}}_{-1},i)}\frac{R_C}{4}\\
=&\lambda_i'T_i-\frac12\sum_{j\in[m]\setminus\{1\}}B_j\frac{T_i}{\sum_{k\in[n]}T_k}.
\end{align*}

(b) If $\lambda_i'=0$, we can calculate $\sum_{C\in\mathcal{A}(\mathcal{C}_1',i)}R_CQ_C(\vec{\lambda}')\geq 0\geq\lambda_i'T_i-\frac12\sum_{j\in[m]\setminus\{1\}}B_j\frac{T_i}{\sum_{k\in[n]}T_k}$.

In summary, Claim 3 holds for any contestant $i\in[n]$.

For any $C\in\vec{\mathcal{C}}_{-1}$, define $\gamma_C=\sum_{i\in S_C}\lambda_i'$. By Claim 2, we have
\begin{align*}
\sum_{C\in\vec{\mathcal{C}}_{-1}}\frac{R_C}{\gamma_C}<\frac{\sum_{j\in[m]\setminus\{1\}}B_j}{\sum_{j\in[m]}B_j}\sum_{k\in[n]}T_{k}.
\end{align*}

Also, observe that
\begin{align*}
\sum_{C\in\vec{\mathcal{C}}_{-1}}R_C\gamma_C
=&\sum_{C\in\vec{\mathcal{C}}_{-1}}R_C\sum_{i\in S_C}\lambda_i'\nonumber\\
=&\sum_{i\in[n]}\lambda_i'\sum_{C\in\mathcal{A}(\vec{\mathcal{C}}_{-1},i)}R_C\nonumber\\
=&\sum_{i\in[n]}\lambda_i'\sum_{j\in[m]\setminus\{1\}}2B_j\frac{T_i}{\sum_{k\in[n]}T_k}\nonumber\\
=&\frac{2\sum_{j\in[m]\setminus\{1\}}B_j}{\sum_{k\in[n]}T_k}\sum_{i\in[n]}\lambda_i'T_i.
\end{align*}

By Cauchy's inequality, we can obtain
\begin{align*}
&(\sum_{C\in\vec{\mathcal{C}}_{-1}}\frac{R_C}{\gamma_C})(\sum_{C\in\vec{\mathcal{C}}_{-1}}R_C\gamma_C)\\
=&(\sum_{C\in\vec{\mathcal{C}}_{-1}}\sqrt{\frac{R_C}{\gamma_C}}^2)(\sum_{C\in\vec{\mathcal{C}}_{-1}}\sqrt{R_C\gamma_C}^2)\\
\geq&(\sum_{C\in\vec{\mathcal{C}}_{-1}}\sqrt{\frac{R_C}{\gamma_C}}\sqrt{R_C\gamma_C})^2\\
=&(\sum_{C\in\vec{\mathcal{C}}_{-1}}R_C)^2.
\end{align*}

Therefore, we have 
\begin{align*}
&\sum_{C\in\vec{\mathcal{C}}_{-1}}R_C\gamma_C\geq\frac{(\sum_{C\in\vec{\mathcal{C}}_{-1}}R_C)^2}{\sum_{C\in\vec{\mathcal{C}}_{-1}}\frac{R_C}{\gamma_C}}\\
>&\frac{(\sum_{j\in[m]\setminus\{1\}}B_j)^2}{\frac{\sum_{j\in[m]\setminus\{1\}}B_j}{\sum_{j\in[m]}B_j}\sum_{k\in[n]}T_{k}}=\frac{(\sum_{j\in[m]\setminus\{1\}}B_j)(\sum_{j\in[m]}B_j)}{\sum_{k\in[n]}T_{k}}.
\end{align*}

It follows that $$\sum_{i\in[n]}\lambda_i'T_i=\frac{\sum_{k\in[n]}T_k}{2\sum_{j\in[m]\setminus\{1\}}B_j}\sum_{C\in\vec{\mathcal{C}}_{-1}}R_C\gamma_C>\frac{\sum_{j\in[m]}B_j}{2}.$$

Finally, summing over the inequalities in Claim 3 for all $i\in[n]$, we obtain
\begin{align*}
    \sum_{i\in[n]}\sum_{C\in\mathcal{A}(\mathcal{C}_1',i)}R_CQ_C(\vec{\lambda}')\geq& \sum_{i\in[n]}\lambda_i'T_i-\frac12\sum_{j\in[m]\setminus\{1\}}B_j\\
    >&\frac12B_1.
\end{align*}
However, this contradicts with the fact that $$\sum_{i\in[n]}\sum_{C\in\mathcal{A}(\mathcal{C}_1',i)}R_CQ_C(\vec{\lambda}')\leq 2\sum_{C\in\mathcal{C}_1'}\frac{R_C}4\leq \frac{B_1}{2}.$$ Therefore, $\vec{\mathcal{C}}$ is a designer equilibrium.
\end{proof}

\subsection{Proof of \Cref{thm:divisible-SPE-existence}}
\DIVthmSPEExists*

\begin{proof}
We only need to construct a strategy profile of designers $\vec{\mathcal{C}}$ which satisfies the condition in \Cref{thm:indiv-1/2probability-is-equilibrium}. Actually we only need to construct a matrix $A=(A_{i,k})\in\mathbb{R}_{\geq 0}^{n\times n}$, such that it holds for any contestant $i$ that $\sum_{k\in[n]}(A_{i,k}+A_{k,i})=T_i$, and that for all $i\geq k$, $A_{i,k}=0$. Then, any designer $j$ can construct a contest $C$ for each pair of $i,k$ such that $S_C=\{i,k\}$, $R_C=B_j\frac{A_{i,k}}{\sum_{i'\in[n]}T_{i'}}$, and $\alpha_{C,i}=\alpha_{C,k}=1$. This will form a strategy profile $\vec{\mathcal{C}}$. We can easily verify that $\vec{\mathcal{C}}$ satisfies the condition in \Cref{thm:indiv-1/2probability-is-equilibrium} and is a designer equilibrium.

We only need to give a construction of such matrix $A$.
We can view each contestant $i\in[n]$ as an interval of length $T_i$ on the axis, and put the $n$ intervals together so that the $k$-th interval is placed at $I_k=[\sum_{i<k}T_i,\sum_{i<=k}T_i]$. Let $M=\frac{1}{2}\sum_{i=1}^nT_i$, which is the middle point of $[0,\sum_{i=1}^nT_i]$. For each point $x\in[0,M]$, we match the point $x$ with the point $x+M$. Finally, let $\mu(S)$ denote the measure (i.e., the length of an interval) of point set $S$, and we set $A_{i,k}=\mu(\{x\in[0,M]:x\in I_i\land x+M\in I_k\})$.
For all $i\in[n]$, we have $\sum_{k\in[n]}(A_{i,k}+A_{k,i})=\mu(\{x\in[0,M]:x\in I_i\})+\mu(\{x\in[0,M]:x+M\in I_i\})=\mu(I_i)=T_i$. For any $k<i$, we have $\{x\in[0,M]:x\in I_i\land x+M\in I_k\}=\emptyset$. Additionally, since $T_i\leq \frac{1}{2}\sum_{i=1}^nT_i M$, we have $\mu(\{x\in[0,M]:x\in I_i\land x+M\in I_i\})=0$. Therefore for any $i,k\in[n]$ that $i\geq k$, it holds that $A_{i,k}=0$. 
This completes the proof.


\end{proof}

\end{document}